\DeclareMathOperator{\tr}{tr}
\numberwithin{equation}{section}
\newtheorem{thm}{Theorem}
\newtheorem{lem}{Lemma}
\newtheorem{prop}{Proposition}
\newcommand{\blind}{0}
\begin{document}

\def\spacingset#1{\renewcommand{\baselinestretch}%
{#1}\small\normalsize} \spacingset{1}


\if0\blind
{
  \title{\bf Spatial autoregressive model with measurement error in covariates\thanks{The methods developed in this paper were earlier circulated with the title "Causal Network Influence with Latent Homophily and Measurement Error: An Application to Therapeutic Community" in a preprint (version 1) on \url{https://arxiv.org/pdf/2203.14223v1} on 27th March 2022.}}
  \author{Subhadeep Paul \\
    Department of Statistics, The Ohio State University\\
    and \\
    Shanjukta Nath \\
    Department of Agricultural and Applied Economics, University of Georgia}
    \date{}
  \maketitle
} \fi

\if1\blind
{
  \bigskip
  \bigskip
  \bigskip
  \begin{center}
    {\LARGE\bf Spatial autoregressive model with measurement error in covariates}
\end{center}
\date{}
  \medskip
} \fi

\bigskip
\begin{abstract}
The Spatial AutoRegressive model (SAR) is commonly used in studies involving spatial and network data to estimate the spatial or network peer influence and the effects of covariates on the response, taking into account the dependence among units. While the model can be efficiently estimated with a  Quasi maximum likelihood approach (QMLE), the detrimental effect of covariate measurement error on the QMLE and how to remedy it is currently unknown. If covariates are measured with error, then the QMLE may not have the $\sqrt{n}$ convergence and may even be inconsistent even when a node is influenced by only a limited number of other nodes or spatial units. We develop a measurement error-corrected ML estimator (ME-QMLE) for the parameters of the SAR model when covariates are measured with error. The ME-QMLE possesses statistical consistency and asymptotic normality properties and we derive its limiting covariance. We consider two types of applications. The first is when the true covariate is imprecisely measured with replicated measurements or cannot be measured directly, and a proxy is observed instead. The second one involves including latent homophily factors estimated with error from the network for estimating peer influence. Our numerical results verify the bias correction property of the estimator and the accuracy of the standard error estimates in finite samples. We illustrate the method on two real datasets; i) peer influence in GPA for middle school students in New Jersey and ii) county-level death rates from the COVID-19 pandemic.
\end{abstract}

\noindent%
{\it Keywords:}  Spatial Autoregressive model; Measurement error bias; Consistency; Asymptotic Normality; Network Influence.
\vfill

\newpage
\spacingset{1.6} 

   \section{Introduction}

    \label{sec:intro}

In studies involving spatial and network data, it is often of interest to estimate the extent to which the outcomes of spatially close (spatial influence) or network-connected neighbors influence (peer influence) the outcomes of individuals. At the same time, one may wish to control for the effect of spatial or network dependence when estimating the effects of covariates on the response. Both of these goals can be accomplished with the Spatial autoregressive model (SAR) \citep{ord1975estimation}, which is a regression model that predicts an individual $i$'s response with a weighted average of responses of its neighboring units as well as additional covariates.  The SAR model has been extensively applied to spatial statistics and spatial econometrics \citep{anselin1998introduction,anselin1988spatial,kelejian1999generalized,li2007beyond,cressie2015statistics}.

Recently, this model has also been adopted to network-linked studies for estimating network influence, primarily when a single time point measurement of the network and outcome is available \citep{bramoulle2009identification,lee2007identification,lin2010identifying,lee2010specification,lee2004asymptotic,zhu2017network,leenders2002modeling}. Moreover, the SAR model was extended to multivariate responses in \cite{zhu2020multivariate}. The estimation methods for parameters of the model have been explored in \cite{ord1975estimation,kelejian1999generalized,lee2004asymptotic,lesage1997bayesian,smirnov2001fast}.

In the SAR model, since we have the same response variable on the left and the right-hand side of the equation, the usual least squares estimator for linear regression is not appropriate to estimate the parameters of this model. In a seminal paper, \cite{ord1975estimation} proposed a maximum likelihood estimator under the assumption of normally distributed errors whose theoretical properties were rigorously studied in \cite{lee2004asymptotic}. In particular, \cite{lee2004asymptotic} showed that under a few assumptions on the network adjacency or spatial weight matrix, the Quasi maximum likelihood estimator (QMLE), which relaxes the assumption of normally distributed errors with a distribution with finite fourth moments, is consistent and asymptotically converges to a normal distribution. The major conditions on the weight matrix translate to one unit or node being influenced by only a limited number of other nodes or units. Other estimators for the spatial influence parameter include the generalized method of moments approach of \cite{kelejian1999generalized}, the approximate maximum likelihood estimator of \cite{li2007beyond}, a generalized two stage least squares estimator with instrumental variables \cite{lee2004asymptotic}, and a Bayesian estimation method in \cite{lesage1997bayesian}. These estimators are developed under the assumption that all covariates are measured without error and they do not study the implications of introducing measurement error in covariates in the SAR framework.

There is extensive literature on correcting for measurement error in covariates in many linear and non-linear models spanning multiple decades \citep{carroll2006measurement,buonaccorsi2010measurement,nakamura1990corrected,stefanski1987conditional}. The corrected score function approach in \cite{nakamura1990corrected}, corrects the bias on the likelihood and the score functions due to measurement error and derives a corrected MLE. Recently, measurement error correction methods have been employed in the context of high dimensional linear regression \citep{datta2017cocolasso,sorensen2015measurement}, generalized linear model \citep{sorensen2018covariate} and matrix variate logistic regression \citep{fang2021matrix}.

Covariate measurement errors in the context of spatial regression (but not using the SAR model) have been considered in \cite{huque2014impact}. In the context of SAR models, \cite{suesse2018estimation} considers the scenario where the response is measured with measurement errors. More recently, \cite{eralp2023maximum} shows that the parameter estimate of the covariate is biased if the covariate (only one) is measured with error. However, \cite{eralp2023maximum} does not propose a method for correction and does not study the effect on the spatial or peer influence parameter. In another recent study, \cite{luo2022estimation} proposes an instrumental variable-based three-stage least squares (3SLS) method for estimating the SAR model when covariates are observed with error. 
 
\paragraph{Motivating application problems:} There is a lack of methodology for bias correction due to measurement errors in covariates in the SAR models. We have two types of motivation for studying this problem. First, we consider the situation where we have a proxy variable available to us in place of the actual target covariate, and we can measure the covariance of the random deviation of the proxy from the true covariate from either a subset of the original data or a related dataset that contains both variables. This motivation includes situations where we cannot measure the target covariate directly and accurately, but can obtain a number of replicated error-prone measurements.
The second application is more specific to SAR models and is in the context of using the model for estimating peer influence in networked data. While the SAR model has been widely used for estimating peer influence (\cite{bramoulle2009identification,lin2010identifying,goldsmith2013social}), a major difficulty in causal identification of peer influence is confounding due to homophily (\cite{shalizi2011homophily,goldsmith2013social}). In observational social network data, individuals typically select their friends or peers on the basis of similarities in characteristics, a phenomenon known as homophily (\cite{mcpherson2001birds}). Such characteristics are typically unobserved to the researcher, yet they may affect both the selection of peers and an outcome of interest. These variables therefore act as confounders hindering causal estimation of peer influence. A possible solution is to assume that these latent characteristics are manifested in the observed friendship or peer selections. Therefore we model the network with a latent variable model and estimate the latent homophily factors using the observed adjacency matrix similar to recent proposals in (\citep{natha2022identifying,mcfowland2021estimating}) in the context of longitudinal outcomes. The latent factors are estimated from the observed network with error. Using estimated factors instead of true unknown ones introduces bias into the estimated parameters. Therefore we propose methodology in this article to correct for this bias.

\paragraph{Contributions:}
In this paper, we develop a Measurement Error-corrected Quasi Maximum Likelihood Estimator for the SAR model (ME-QMLE). We consider the scenario where multiple covariates are subject to measurement error. We show that our bias-corrected estimator is consistent, and we derive an asymptotic distribution of the estimator under some assumptions related to the measurement error and the spatial weight or network matrix using the general theory of M-estimators. The limiting covariance matrix is different and involves additional terms than the one without measurement error in \cite{lee2004asymptotic}.  Our simulation studies verify the bias correction property of this new estimator in finite samples. We observe the estimator's performance is better when the ratio of the measurement error variance to the variance of the target covariates is lower and gradually deteriorates as this ratio increases. We also verify the performance of our closed form asymptotic covariance matrix estimator in simulations. When the covariance matrix of the measurement error is not known and must be estimated from a validation data, we further provide asymptotic covariance matrix of our estimators incorporating the error in estimating the covariance matrix. We demonstrate the new methods on two real datasets, focussing on the two types of application problems discussed previously.

There are several technical challenges for developing the methodology that we solve here. First, while the likelihood bias correction method is an adaption of the method in \cite{nakamura1990corrected} from the linear model to the context of SAR mode,  the theory is more involved due to the need for convergences to hold uniformly for all $\rho$ (spatial or network influence parameter). Second,
given the complex interaction of the vectors of measurement errors with the error variables from the outcome model, our proofs involve several new results in addition to the proofs in \cite{lee2004asymptotic}. For example, in the proof of consistency, we have additional terms involving the projection matrix $M_{2n}$ which is itself stochastic, while the projection matrix $M_n$ in \cite{lee2004asymptotic} was deterministic. In the asymptotic normality result, the limiting covariance matrix involves unconditional expectations (with respect to both model error term and measurement error term) of the negative Hessian matrix, $I(\theta_0, X)$ and the unconditional variance of the score vector $\Sigma(\theta_0,X)$. This form is different from what one would obtain if there were no measurement errors. We provide estimator of this asymptotic covariance matrix that can be calculated from data. Third, we provide results in the case when the measurement error covariance matrix is not known and must be estimated from validation data, which will itself lead to additional uncertainty. Finally, for the challenging problem of identifying endogenous peer effects from observational networked data, we provide a methodology for estimating the SAR model augmented with latent homophily variables which are common to the SAR model and the network model.

\section{Spatial autoregressive model with measurement error}
\label{methods}

We consider a set of $n$ entities or individuals denoted by $i\in\{1,2,..,n\}$ for whom we observe a social network or information on their spatial locations. We denote the symmetric spatial weight matrix or the network adjacency matrix (possibly weighted) as $A$. The diagonal elements of $A$ are assumed to be $0$.  We observe an $n$ dimensional vector $Y$ of univariate responses at the vertices of the network. Define the diagonal matrix of rowsums (degrees) $D$ such that $D_i = \sum_j A_{ij}$. The Laplacian matrix or the row-normalized adjacency matrix is defined as $L = D^{-1}A$.  We further observe an $n \times p$ matrix $X$ of measurements of $p$ dimensional covariates at each node.

The Spatial Auto-regressive model (SAR) on $n$ units is defined as follows.
\begin{equation}
Y=\rho  L Y + X \delta +  V,
\label{model}
\end{equation}

where, $V_i$'s are i.i.d. random variables with mean 0 and variance $\sigma^2 < \infty$ and $\rho$ is the spatial or network influence parameter. While we do not assume any distribution for the error terms, we do assume that the error is additive and has a finite variance, and later, we will also assume a boundedness condition on higher-order moments. For any node $i$, the variable $(LY)_i = \sum_j L_{ij}Y_j$, measures a weighted average of the responses of the network connected neighbors of $i$. Therefore, the above model asserts that the outcome of $i$ is a function of a weighted average of outcomes of its network-connected or spatially contiguous neighbors and values of the covariates.

We further differentiate between the covariates observed at each node and assume that we observe two sets of covariates $U$ and $Z$ such that $X_{n\times p}=[U_{n\times d_{1}}, Z_{n\times d_{2}}]$. $U_i \in \mathbb{R}^{d_1}$ are measured with an additive measurement error. We can write 
\[
\tilde{U}_i = U_i + \xi_i, \quad cov(\xi_i) = (\Delta_i)_{d_1 \times d_1},
\]
where the covariance matrices $\Delta_i$s can differ between units of observation (heteroskedastic error) and the error vectors $\xi_i$'s are independent of each other.   
The second set of $d_2$ covariates $Z_i \in \mathbb{R}^{d_2}$ are assumed to be observed without error. 

The error covariance matrices are assumed to be either known or estimated from a validation dataset or auxiliary sample. In application problems, researchers often are able to collect data on both $\tilde{U}$ and $U$ for a small percentage of the samples. Then, such data can be used to estimate the error covariances as internal validation data. In other contexts, an external validation dataset where both $\tilde{U}$ and $U$ are available must be used to estimate the error covariances. In another type of application we consider involving network latent homophily, the variables in $U$
are latent variables that are both related to the response and the formation of the network ties. Naturally, latent factors can be estimated from the observed network ($\tilde{U}$) but are noisy versions of the population latent factors. With an estimate of the covariance matrix of the (additive) error associated with the estimation of the latent factors, one can then represent $\tilde{U}$ in the above setup.

Therefore, the true data-generating model (and the model we would like to fit) is 
\begin{equation}
\label{eq2}
Y=\rho_0 L Y + U \beta_0 + Z \gamma_0 +  V_0,
\end{equation}
while the model we are forced to fit due to not having access to $U$ is 
\begin{equation}
Y=\rho  L Y + \tilde{U} \beta + Z \gamma +   V.
\end{equation}

Define the parameter vector $\theta=\{\beta, \gamma,\rho,\sigma^2\}$. Also from equations (\ref{model}) and (\ref{eq2}) $\delta=\{\beta,\gamma\}$. Note parameters cannot be estimated through a usual regression model since we have $Y$ on both the left and the right-hand side of the equation. Further, the term $LY$ is clearly correlated with the error term $V$. We can still obtain the parameters through a quasi-maximum likelihood estimation approach. The MLE is ``quasi" since we will derive it using the likelihood function that corresponds to  $V_i$ being a normal distribution, even though we place no such assumption on $V_i$. The log-likelihood function assuming $V_i$ follows a normal distribution is given by \cite{ord1975estimation,lee2004asymptotic}
\begin{equation}
    \label{loglikelihood}
l(\theta,Y,\tilde{U},Z) = -\frac{n}{2} \log ( 2 \pi \sigma^2) -\frac{1}{2\sigma^2}[((I-\rho L)Y - Z \gamma - \tilde{U}\beta)^T((I-\rho L)Y- Z \gamma -\tilde{U}\beta)] + \log |I-\rho L|.
\end{equation}

However, as we show in section \ref{section3}, this QMLE will be asymptotically inconsistent unless the measurement error vanishes. We construct a bias-corrected estimator (ME-QMLE) where the central idea is to correct for bias using a corrected score function methodology similar to those employed in linear and generalized linear models in \cite{stefanski1985effects,stefanski1985covariate,schafer1987covariate,nakamura1990corrected,novick2002corrected}.

\section{SAR model with measurement error}
\label{section3}
In this section, we develop the ME-QMLE for the SAR model when a set of covariates is measured with error.  We derive a maximum likelihood estimator for the SAR model similar to \cite{ord1975estimation, lee2004asymptotic}, but with a modified score function to correct for the bias introduced by the measurement error in $U$. We consider $U_i$s and $Z_i$s as fixed (nonrandom) variables. Using the notation of \cite{nakamura1990corrected}, define $\mathbb E^+$ as the expectation with respect to the random variable $V$ and $\mathbb E^{*}$ as the expectation with respect to the random variable $\xi$. Let $\mathbb E=\mathbb E^+ \mathbb E^*$ denote the unconditional expectation.
We wish to find a corrected likelihood function $l^*(\theta,Y,\tilde{U},Z)$ such that $\mathbb E^*[l^*(\theta,\tilde{U},Z)] = l(\theta, Y,U, Z)$ \citep{nakamura1990corrected}. 
It will be convenient to define a few additional notations.  Define $\tilde{X} =[\tilde{U} \quad Z]$ and as previously defined $X =[U \quad Z]$. Further define $\eta=[\xi \quad 0_{d_2}]^T$, where $0_{d_2}$ is the $d_2$ dimensional $0$ vector. Then, the data-generating model can be rewritten as in Equation \ref{model} augmented with the measurement error model, $
     \tilde{X}_i = X_i + \eta_i, \quad  cov(\eta_i) = \Omega_i= \begin{pmatrix}
    \Delta_i & 0\\
    0 & 0
    \end{pmatrix}.$
    The fitted SAR model replaces $X$ with $\tilde{X}$ in the equation for $Y$. The parameter vector is $\theta = [\delta, \rho, \sigma^2]$. Define $S(\rho) = (I-\rho L)$, and $ \quad G (\rho) = L S (\rho)^{-1}$. Define 
    $\tilde{H} (\theta) = G(\rho)(\tilde{U}\beta+Z \gamma) = G(\rho) \tilde{X}\delta,$ while $H(\theta) = G(\rho) X\delta$. 
   Therefore, $
\tilde{V}(\theta)=(I-\rho L)Y-Z\gamma - \tilde{U}\beta =S(\rho)Y-\tilde{X} \delta,$ and similarly, $V(\theta) = S(\rho)Y-X\delta$.
The function $l(\theta,Y,U,Z)$ is the same log-likelihood function as defined in equation \ref{loglikelihood}, but we replaced $\tilde{U}$ with $U$. 
Now, applying the expectation operator $\mathbb E^{*}$ on the log-likelihood function,
\begin{align*}
\mathbb E^*[l(&\theta,Y,\tilde{U},Z)]=-\frac{n}{2}\log(2\pi \sigma^{2})-\frac{1}{2\sigma^2}\mathbb E^{*}\bigg[(S(\rho)Y - Z \gamma - ({U}+\xi)\beta)^T(S(\rho)Y- Z \gamma -({U}+\xi)\beta)\bigg]\\&+ \log |I-\rho L|.\\
&=-\frac{n}{2}\log(2\pi \sigma^{2})-\frac{1}{2\sigma^2}\bigg[V(\theta)^TV(\theta)\bigg]+\log |S(\rho)| + \frac{1}{2\sigma^2}2\mathbb E^{*}\bigg[(\xi\beta)^T\bigg]V(\theta)-\frac{1}{2\sigma^2}\beta^{T}\mathbb E^{*}\bigg[\xi^{T}\xi\bigg] \beta.\\
&= l(\theta,Y,U,Z) - \frac{1}{2\sigma^2}\beta^T (\sum_i \Delta_i) \beta,
\end{align*}
Then, the ``corrected" log-likelihood function is given by
\[
l^{*}(\theta,Y,\tilde{U},Z) = -\frac{n}{2} \log ( 2 \pi \sigma^2) -\frac{1}{2\sigma^2}[ \tilde{V}(\theta)^T\tilde{V}(\theta) -  \beta^T(\sum_i \Delta_i) \beta] + \log |S(\rho)|
\]
We define the solutions to the corrected score equations $
\nabla_{\theta}l^{*}(\theta,Y,\tilde{U},Z) = 0 
$
as the bias-corrected maximum likelihood estimators.  The following quantities will be useful:
\begin{align*}
   M_2 & = I -\begin{pmatrix}
\tilde{U} &
Z
\end{pmatrix}\begin{pmatrix}
\tilde{U}^T\tilde{U} - \sum_i \Delta_i & \tilde{U}^TZ \\
Z^T\tilde{U}& Z^TZ
    \end{pmatrix}^{-1}\begin{pmatrix}
\tilde{U} \\
Z
\end{pmatrix} = I- \tilde{X}(\tilde{X}^T\tilde{X}-\sum_i \Omega_i)^{-1}\tilde{X}^T.\\
 K  & = \tilde{X}(\tilde{X}^T\tilde{X} - (\sum_i \Omega_i))^{-1} (\sum_i \Omega_i) (\tilde{X}^T\tilde{X} - (\sum_i \Omega_i))^{-1}\tilde{X}^T.
\end{align*}
The corrected score function with respect to $\beta,\gamma$ is
\begin{align*}
    \nabla_{\beta} l^* (\theta,\cdot) &= -\frac{1}{\sigma^2} [-\tilde{U}^{T}((I-\rho L)Y-Z\gamma) + \tilde{U}^T\tilde{U}\beta  - \sum_i \Delta_i \beta],\\
\nabla_{\gamma} l^* (\theta,\cdot)&=-\frac{1}{\sigma^2}[-Z^T((I-\rho L)Y-\tilde{U}\beta)+Z^{T} Z\gamma]
\end{align*}
Equating this function to 0 yields the solution
\[
\begin{pmatrix}
\hat{\beta} \\
\hat{\gamma}
\end{pmatrix} = 
\begin{pmatrix}
\tilde{U}^T\tilde{U} - \sum_i\Delta_i & \tilde{U}^TZ \\
Z^T\tilde{U}& Z^TZ
    \end{pmatrix}^{-1}\begin{pmatrix}
\tilde{U} \\
Z
\end{pmatrix} (I - \rho L)Y = (\tilde{X}^T\tilde{X}-\sum_i \Omega_i)^{-1}\tilde{X}^TS(\rho)Y.
\]
Using this solution for $\beta,\gamma$, we have the estimate for $\sigma^2$ as 
\begin{align*}
\hat{\sigma}^2  = \frac{1}{n}\{((I-\rho  L)Y - \tilde{U}\hat{\beta}-Z\hat{\gamma})^T((I-\rho  L)Y - \tilde{U}\hat{\beta}-Z\hat{\gamma})-  \hat{\beta}^T (\sum_i\Delta_{i})\hat{\beta}\}
\end{align*}
The first part can be simplified by noticing,
\begin{align*}
    (I-\rho L)Y - \tilde{U}\hat{\beta}-Z\hat{\gamma}  = S(\rho)Y - \tilde{X} \hat{\delta} = M_2 S(\rho)Y.
\end{align*}
Therefore, the first term becomes $
Y^TS(\rho)^T M_2^T M_2 S(\rho)Y$.
Note in this case, the matrix $M_2$  is not idempotent in contrast to the case of regular MLE without measurement error correction.
For the second part first notice $\hat{\beta}^T (\sum_i \Delta_i) \hat{\beta} = \hat{\delta}^T (\sum_i \Omega_i) \hat{\delta},$ and
\begin{align*}
    \hat{\delta}^T (\sum_i \Omega_i) \hat{\delta} & =  Y^TS(\rho)^T\tilde{X}(\tilde{X}^T\tilde{X} - (\sum_i \Omega_i))^{-1} (\sum_i \Omega_i) (\tilde{X}^T\tilde{X} - (\sum_i \Omega_i))^{-1}\tilde{X}^T S(\rho)Y \\
    & = Y^T S(\rho)^T K S(\rho)Y 
\end{align*}
However, with a little algebra, we note (in Appendix \ref{MTM}) $
M_2^TM_2 - K  =  I - \tilde{X}(\tilde{X}^T\tilde{X} - (\sum_i \Omega_i))^{-1} \tilde{X}^T = M_2$.
Therefore, by combining the two terms, we have a convenient expression,
\[
\hat{\sigma}^2 = \frac{1}{n}\{Y^TS(\rho)^TM_2S(\rho)Y\}.
\]
The parameter $\rho$ can be estimated by minimizing the negative of the corrected concentrated log-likelihood function obtained by replacing $\beta$, $\gamma$, and $\sigma^2$ by their estimates as follows:
 \[
 l^*(\rho)= \frac{n}{2}(-\frac {2}{n} \log |1- \rho L|) + \log ( \hat{\sigma}^2)).
 \]
If $A$ is symmetric, the minimization can be performed by writing the determinant as the product of the real eigenvalues and then using a Newton-Raphson algorithm similar to \cite{ord1975estimation}.
The first and the second derivative of $l^*(\rho)$ with respect to $\rho$ is given in the Appendix \ref{rhoderiv}.
For non-symmetric $A$ matrix, we directly perform the optimization (e.g., using R's optimize() function). The method is summarized in Algorithm \ref{alg:estimation}.

\begin{algorithm}[tb]
  \caption{Measurement Bias corrected QMLE in SAR model}
  \label{alg:estimation}
  
  \begin{algorithmic}
    \STATE {\bfseries Input:} Adjacency matrix $A$, response $Y$, observed covariates $\tilde{U}, Z$, matrices $\Delta_i$
    \STATE {\bfseries Result:} Model parameters ($\hat{\beta},\hat{\gamma},\hat{\rho},\hat{\sigma}^2$)
  \end{algorithmic}
  
  \begin{algorithmic}[1]
    \STATE $ \tilde{X} = \begin{pmatrix}
\tilde{U} &
Z
\end{pmatrix}, \quad \Omega_i= \begin{pmatrix}
    \Delta_i & 0\\
    0 & 0
    \end{pmatrix} $
    \STATE $L= D^{-1}A$ where $D_{ii} =  \sum_j A_{ij}$ and $D_{ij}=0$ for $j \neq i$
    \STATE $M_2  =  I- \tilde{X}(\tilde{X}^T\tilde{X}-\sum_i \Omega_i)^{-1}\tilde{X}^T$
\STATE $
\sigma^2(\rho) = \{Y^TS(\rho)^TM_2S(\rho)Y\}/n.$
\STATE 
$\hat{\rho}$ $\leftarrow$ minimize $ f(\rho)= \frac{n}{2}(-\frac {2}{n} \log |1- \rho L|) + \log (\sigma^2)).$
    \STATE $
\begin{pmatrix}
\hat{\beta} \\
\hat{\gamma}
\end{pmatrix} = 
(\tilde{X}^T\tilde{X}-\sum_i \Omega_i)^{-1}\tilde{X}^T (I - \hat{\rho} L)Y$
\STATE $\hat{\sigma}^2 = \sigma^2(\hat{\rho})$
    \STATE \textbf{return} $[\hat{\beta},\hat{\gamma},\hat{\rho},\hat{\sigma}^2]$
  \end{algorithmic}
\end{algorithm}

\subsection{Asymptotic theory and inference on parameters}

We study the asymptotic properties of ME-QMLE and derive its asymptotic standard error. We augment the approach in \cite{lee2004asymptotic} in the context of measurement error in variables to derive the consistency property and the limiting distribution of the measurement error bias-corrected estimator ME-QMLE. The proofs for both the consistency and the asymptotic distribution rely on techniques from the theory of M-estimators. Let the vector $\theta_0 = [\delta_0,\rho_0, \sigma^2_0]^T$ be the true parameter vector. We consider an asymptotic setting where $n \to \infty$. Therefore, in the following, we will add a subscript of $n$ to all quantities to denote sequences of quantities. The bias-corrected log-likelihood function is $l^*_n(\theta)$. The gradient vector $\nabla_{\theta}l_n^*(\theta)$ and Hessian matrix $\nabla^2_{\theta}l_n^*(\theta)$ of the log likelihood function is given in Appendix \ref{gradhess}.

For a matrix $S$, we define the notations $\|S\|_{\infty} =\max_{i} \sum_j |S_{ij}|$ as the maximum row (absolute) sum norm, $\|S\|_{1} =\max_{j} \sum_i |S_{ij}|$, as the maximum (absolute) column sum norm, and $\|S\|_2$, as the spectral norm of the matrix. We further define $\tr(S)$ as the trace of the matrix. For a matrix $S$, the term ``bounded in row sum norm" (respectively column sum norm) is used to mean $\|S\|_{\infty} < c$, (respectively $\|S\|_{1} < c$) where $c$ is not dependent on $n$. We also define a few additional notations, namely, $S_n = S_n(\rho_0)$, $G_n = G_n(\rho_0) = L_nS_n^{-1}$, and $H_n=G_nX_n\delta_0$.

\textbf{Assumptions}: We need most of the assumptions in \cite{lee2004asymptotic} and a few additional assumptions related to the measurement error. 
\begin{enumerate}
    \item Error terms $(V_n)_i, i =1, \ldots, n$, are iid with mean 0, variance $\sigma^2 <\infty$, and $E[(V_n)_i^{4+c}]$ for some $c>0$ exists. 
    \item The elements of the adjacency matrix $A_n$ are non-negative and uniformly bounded. The degrees $(d_n)_i =O(h_n)$ uniformly for all $i$ and consequently, the elements of the matrix $L_n$ are uniformly bounded by $O(1/h_n)$ for some sequence $h_n \to \infty$. Further $h_n=o(n)$, i.e., the degrees grow with $n$ but at a rate slower than $n$.
            \item $S_n(\rho)^{-1}$ is bounded in both row and column sums uniformly for all $\rho \in R$, where the parameter space $R$ is compact. The true parameter $\rho_0$ is in the interior of $R$. The sequence of matrices $L_n$ is uniformly bounded in both row and column sum norms.
     \item For each $i$, $(\tilde{X}_n)_i = (X_n)_i + (\eta_n)_i$, with $(\eta_n)_i$ are iid $d$-dimensional bounded random variables from a distribution with mean 0, covariance matrix $\Omega_i$, for all $n$ and the $(\Omega)_i$ are assumed to be known and finite. Therefore, $\lim_{n \to \infty} \frac{\sum_i \Omega_i}{n}$ is finite. The error terms $(V_n)_i$ and $(\eta_n)_i$ are independent of each other.

    \item  We assume the elements of $X_n$ are bounded constants as a function of $n$, and consequently, $\lim_{n \to \infty} \frac{1}{n}X_n^TX_n $ is finite. We further assume this limit is non-singular. The $\lim_{n \to \infty} \frac{1}{n} (X_n, H_n)^T(X_n ,H_n) $ exists and is non-singular.
    \item We assume the elements of the matrix $(\frac{1}{n} \tilde{X}_n^T \tilde{X}_n -\frac{\sum_i \Omega_i}{n})^{-1}$ are bounded.
\end{enumerate}

\begin{thm}
Under Assumptions A1-A6, the ME-QML estimator $\hat{\theta}_n$ are consistent estimators of the true population parameters, i.e., $\hat{\theta}_n \overset{p}{\to} \theta_0$.
\end{thm}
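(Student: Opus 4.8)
The plan is to reduce everything to the profiled estimator $\hat\rho_n$ and then recover $\hat\delta_n$ and $\hat\sigma^2_n$ from their closed forms. By the display for $l^*(\rho)$ in Section \ref{section3}, $\hat\rho_n$ minimizes over the compact set $R$ the function $q_n(\rho):=\tfrac1n l^*_n(\rho)=\tfrac12\log\hat\sigma^2_n(\rho)-\tfrac1n\log|S(\rho)|$, where $\hat\sigma^2_n(\rho)=\tfrac1n Y^TS(\rho)^TM_2 S(\rho)Y$. I would invoke the standard consistency theorem for extremum estimators: it suffices to produce a deterministic sequence of functions $\bar q_n(\rho)=\tfrac12\log\sigma^2_{*n}(\rho)-\tfrac1n\log|S(\rho)|$, with $\sigma^2_{*n}(\rho):=\mathbb E[\hat\sigma^2_n(\rho)]$, such that (i) $\sup_{\rho\in R}|q_n(\rho)-\bar q_n(\rho)|\overset{p}{\to}0$ and (ii) $\rho_0$ is an identifiably unique minimizer of $\bar q_n$, i.e. $\liminf_{n\to\infty}\inf_{|\rho-\rho_0|\ge\epsilon}\bigl(\bar q_n(\rho)-\bar q_n(\rho_0)\bigr)>0$ for every $\epsilon>0$. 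Once $\hat\rho_n\overset{p}{\to}\rho_0$ is in hand, plugging it into $\hat\delta_n=(\tilde X^T\tilde X-\sum_i\Omega_i)^{-1}\tilde X^TS(\hat\rho_n)Y$ and $\hat\sigma^2_n=\tfrac1n Y^TS(\hat\rho_n)^TM_2 S(\hat\rho_n)Y$, and using $S(\hat\rho_n)Y=X\delta_0+(\rho_0-\hat\rho_n)H_n+(I+(\rho_0-\hat\rho_n)G_n)V_0$ together with the limits collected below, gives $\hat\delta_n\overset{p}{\to}\delta_0$ and $\hat\sigma^2_n\overset{p}{\to}\sigma^2_0$.

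For step (i), the key structural observation is that, since $M_2$ does not depend on $\rho$ and $S(\rho)=I-\rho L$, both $\hat\sigma^2_n(\rho)$ and $\sigma^2_{*n}(\rho)$ are quadratic polynomials in $\rho$ whose coefficients are built from $\tfrac1n Y^TM_2 Y$, $\tfrac1n Y^T(L^TM_2+M_2 L)Y$ and $\tfrac1n Y^TL^TM_2 L Y$; hence pointwise convergence in probability of the finitely many coefficients automatically upgrades to uniform convergence on the compact interval $R$. To get that pointwise convergence I would substitute $Y=S_n^{-1}(X\delta_0+V_0)$ (so $LY=H_n+G_nV_0$) and decompose each coefficient into a deterministic part, terms linear in $V_0$, and quadratic forms in $V_0$. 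Conditionally on $\eta$ the linear terms have mean $0$ and variance $O(1/n)$, and the quadratic forms concentrate around $\tfrac{\sigma_0^2}{n}$ times a trace with variance $O(1/n)$, using the $(4+c)$-moment bound in A1 and the uniform row/column-sum bounds on $L_n,G_n,S_n^{-1}$ in A2--A3. The measurement error enters only through $M_2=I-\tilde X(\tilde X^T\tilde X-\sum_i\Omega_i)^{-1}\tilde X^T$, and here A4--A6 and the law of large numbers give $\tfrac1n\tilde X^T\tilde X-\tfrac1n\sum_i\Omega_i\overset{p}{\to}\lim\tfrac1n X_n^TX_n$ (nonsingular), hence $n(\tilde X^T\tilde X-\sum_i\Omega_i)^{-1}\overset{p}{\to}(\lim\tfrac1n X_n^TX_n)^{-1}$, while $\tfrac1n\tilde X^Tv\overset{p}{\to}\lim\tfrac1n X_n^Tv$ for bounded deterministic $v$ and $\tfrac1n\tilde X^TV_0\overset{p}{\to}0$ because $\eta$ and $V_0$ are independent. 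These identities let one (a) replace $M_2$ by its ``project out $X_n$'' limit in every deterministic quadratic form --- the $\sum_i\Omega_i$ correction is exactly what makes the projector land on the column space of $X_n$ rather than of $\tilde X$ --- and (b) discard the rank-$\le p$ correction $\tilde X(\tilde X^T\tilde X-\sum_i\Omega_i)^{-1}\tilde X^T$ inside every trace, since $\tfrac1n\tr\bigl(\tilde X(\tilde X^T\tilde X-\sum_i\Omega_i)^{-1}\tilde X^T\Psi_n\bigr)=O_p(1/n)$ for $\Psi_n$ bounded in spectral norm. The conclusion of this step is that $\sigma^2_{*n}(\rho)$ coincides, up to an $o(1)$ uniform in $\rho$, with $(\rho_0-\rho)^2\tfrac1n H_n^TM_nH_n+\tfrac{\sigma_0^2}{n}\tr\!\bigl(T_n(\rho)T_n(\rho)^T\bigr)$, where $M_n=I-X_n(X_n^TX_n)^{-1}X_n^T$ and $T_n(\rho)=S(\rho)S_n^{-1}$ --- the same expression as in \cite{lee2004asymptotic} --- and that it is bounded above and below away from $0$ uniformly in $n$ and $\rho$, so $x\mapsto\log x$ transfers the uniform convergence to $q_n$.

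For step (ii) I would follow the identification argument of \cite{lee2004asymptotic}. Writing $\sigma^2_{*n}(\rho_0)=\sigma^2_0+o(1)$ and using the arithmetic--geometric mean inequality on the eigenvalues of $T_n(\rho)T_n(\rho)^T$, namely $\tfrac1n\tr\bigl(T_n(\rho)T_n(\rho)^T\bigr)\ge\bigl(|S(\rho)|/|S_n|\bigr)^{2/n}$, one gets $\tfrac12\log\!\bigl(\tfrac{\sigma_0^2}{n}\tr T_n(\rho)T_n(\rho)^T\bigr)-\tfrac1n\log|S(\rho)|\ge\tfrac12\log\sigma^2_0-\tfrac1n\log|S_n|=\bar q_n(\rho_0)+o(1)$. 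Adding the strictly positive contribution $\tfrac12\log\!\bigl(1+(\rho_0-\rho)^2\tfrac1n H_n^TM_nH_n\big/\tfrac{\sigma_0^2}{n}\tr T_n(\rho)T_n(\rho)^T\bigr)$ then yields $\bar q_n(\rho)-\bar q_n(\rho_0)\ge c\,(\rho_0-\rho)^2+o(1)$ for a constant $c>0$; here $c>0$ follows because the nonsingularity of $\lim\tfrac1n(X_n,H_n)^T(X_n,H_n)$ in A5 forces the limiting Schur complement $\lim\tfrac1n H_n^TM_nH_n$ to be strictly positive, while $\tfrac{\sigma_0^2}{n}\tr T_n(\rho)T_n(\rho)^T$ stays bounded on $R$. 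Restricting to $|\rho-\rho_0|\ge\epsilon$ gives the separation in (ii), so $\hat\rho_n\overset{p}{\to}\rho_0$, and then the reduction in the first paragraph finishes the proof: e.g. $\tfrac1n\tilde X^TS(\hat\rho_n)Y\overset{p}{\to}\lim\tfrac1n X_n^TX_n\,\delta_0$ gives $\hat\delta_n\overset{p}{\to}\delta_0$, and $\tfrac1n X_n^TM_2X_n\overset{p}{\to}0$ together with $\tfrac1n V_0^TM_2V_0\overset{p}{\to}\sigma^2_0$ gives $\hat\sigma^2_n\overset{p}{\to}\sigma^2_0$.

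I expect the main obstacle to be step (i): in contrast with \cite{lee2004asymptotic}, the ``hat'' matrix $M_2$ is stochastic and not idempotent, so one must show carefully that (1) every measurement-error fluctuation in $M_2$ is asymptotically negligible inside the quadratic forms that appear --- for the terms linear in $V_0$ this works only because $\eta\perp V_0$ allows conditioning on $M_2$ first --- and (2) the deterministic $\eta$-driven projection pieces converge to the $X_n$-projection limits rather than to spurious $\tilde X$-projection limits, which is exactly where the $\sum_i\Omega_i$ bias correction embedded in $M_2$, together with A4 and A6, does its work. Everything else is a routine adaptation of the arguments in \cite{lee2004asymptotic}.
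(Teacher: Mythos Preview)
Your proposal is correct and shares its overall architecture with the paper's proof --- reduce to the profile problem in $\rho$, show uniform convergence of the concentrated corrected log-likelihood to a deterministic limit, invoke Lee's identification argument, then read off $\hat\delta_n$ and $\hat\sigma_n^2$ from their closed forms --- but the way you handle uniformity in $\rho$ is genuinely different from (and arguably cleaner than) what the paper does. The paper expands $\hat\sigma_n^2(\rho)$ via the substitution $S_n(\rho)S_n^{-1}=I+(\rho_0-\rho)G_n$ into six terms, then applies a uniform law of large numbers (Keener, Theorem~9.2) to the terms linear in $V_n$ and a quadratic-form-with-random-middle-matrix result (Lemma~C.3 of \cite{dobriban2018high}) to the term $V_n^TB_n(\rho,\tilde X_n)V_n$; this closely tracks Lee's original structure while layering in the extra randomness of $M_{2n}$. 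Your observation that $\hat\sigma_n^2(\rho)$ is a quadratic polynomial in $\rho$ with three random coefficients short-circuits all of that: once the coefficients converge in probability, uniform convergence on the compact interval $R$ is automatic, so no uniform LLN or specialized random-matrix quadratic form lemma is needed. The trade-off is that the paper's six-term decomposition makes the roles of the measurement error terms (e.g.\ $(X_n\delta_0)^TM_{2n}(X_n\delta_0)\to 0$, which is precisely where the $\sum_i\Omega_i$ correction bites) individually visible, whereas in your approach those cancellations are folded into the coefficient-level LLN arguments. One small presentational point: your intermediate object $\sigma^2_{*n}(\rho)=\mathbb E[\hat\sigma_n^2(\rho)]$ involves $\mathbb E[M_{2n}]$, which you never actually compute; the paper sidesteps this by defining its deterministic comparison $\sigma^2(\rho)$ as the maximizer in $(\delta,\sigma^2)$ of $\mathbb E[l_n^*(\theta)]$, which immediately collapses to the $M_{1n}$-based Lee expression. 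In practice both of you are showing $\hat\sigma_n^2(\rho)$ converges directly to that explicit limit, so the difference is cosmetic.
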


\textit{Discussion of assumptions:} Assumptions A4 and A6 are our main assumptions related to the measurement error. For each observation, the vector-valued measurement errors are bounded random vectors generated iid from a distribution with mean 0 and finite (known) covariance matrix. Similar to \cite{lee2004asymptotic}, we do not put distributional assumptions on the outcome model's error terms $V_n$s and only require those random variables to have finite fourth moments. However, for the error vectors from the measurement error model, we assume the vectors are bounded random variables. The assumption A6 is a technical condition needed since the projector matrix $M_{2n}$ is a stochastic matrix and will be involved in linear and quadratic forms that require uniform convergences. Similar assumptions on matrices involved in quadratic forms have appeared, for example, in \cite{dobriban2018high}. The assumption on the growth of the network adjacency matrix (A2) is mild and commonly assumed in the network literature. For example, the condition is satisfied if the degrees grow at the rate of $O(\log n)$, which has been assumed (in the population adjacency matrix and shown to hold with high probability in the sample) in the context of stochastic block model networks in \cite{lei2015consistency,paul2020spectral}. 

\textit{Short proof outline: }
The complete proof of this theorem is given in Appendix \ref{proofthm1}. Here, we describe a brief outline of the main arguments. We will need the following two lemmas, which are proved in the Appendix \ref{lemmaproof}.

\begin{lem}
    Let $\tilde{X}_n$ be a $n \times p$ random matrix with  uniformly bounded entries for all $i,j$. Further assume  the elements of the matrix $ (\frac{1}{n} \tilde{X}_n^T \tilde{X}_n -\frac{\sum_i \Omega_i}{n})^{-1})$ are also uniformly bounded for all $i,j$. Then for the projectors $M_{2n}$ and $ I_n- M_{2n}$ where $M_{2n} = I_n - \tilde{X}_n(\tilde{X}_n^T\tilde{X}_n - \sum_i \Omega_i)^{-1}\tilde{X}_n^T $, both the row and column absolute sums are bounded uniformly, i.e., $ \sum_j |(M_{2n})_{ij}| <\infty$ and all $i$ and $ \sum_i |(M_{2n})_{ij}| <\infty$ and all $j$.
    \label{m2bound}
\end{lem}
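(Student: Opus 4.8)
The plan is to exploit the fact that $p$ (the number of covariates) is fixed while $n\to\infty$, so that the rank-$p$ ``hat''-type matrix appearing in $M_{2n}$ has all of its entries of order $1/n$. First I would rescale: set $B_n = \frac1n\tilde{X}_n^T\tilde{X}_n - \frac{\sum_i\Omega_i}{n}$, so that
\[
M_{2n} = I_n - \tilde{X}_n\bigl(\tilde{X}_n^T\tilde{X}_n - \textstyle\sum_i\Omega_i\bigr)^{-1}\tilde{X}_n^T = I_n - \frac1n\,\tilde{X}_n B_n^{-1}\tilde{X}_n^T =: I_n - P_n .
\]
Since $\tilde{X}_n^T\tilde{X}_n$ is symmetric and each $\Omega_i$ is a covariance matrix (hence symmetric), $B_n$ and $B_n^{-1}$ are symmetric, so $P_n$ and $M_{2n}$ are symmetric; therefore it suffices to bound the row absolute sums, the column sums being identical.

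Next I would obtain a uniform entrywise bound on $P_n$. Writing out the quadratic form,
\[
(P_n)_{ij} = \frac1n\sum_{k=1}^{p}\sum_{l=1}^{p}(\tilde{X}_n)_{ik}\,(B_n^{-1})_{kl}\,(\tilde{X}_n)_{jl}.
\]
By hypothesis there is a constant $c_X$, independent of $n,i,j$, with $|(\tilde{X}_n)_{ij}|\le c_X$, and, by the assumed boundedness of the entries of $(\frac1n\tilde{X}_n^T\tilde{X}_n - \frac{\sum_i\Omega_i}{n})^{-1}=B_n^{-1}$, there is a constant $c_B$, independent of $n$, with $|(B_n^{-1})_{kl}|\le c_B$ for all $k,l\in\{1,\dots,p\}$. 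Hence $|(P_n)_{ij}| \le \frac{p^2 c_X^2 c_B}{n} =: \frac{C}{n}$ for all $i,j,n$.

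Finally I would sum. For $I_n - M_{2n}=P_n$ we get $\sum_j |(P_n)_{ij}| \le n\cdot C/n = C$ for every $i$ and every $n$, and likewise $\sum_i |(P_n)_{ij}|\le C$ for every $j$; so $I_n-M_{2n}$ is bounded in both row and column sum norms. For $M_{2n}=I_n - P_n$ the identity contributes $1$ to the $i$-th row sum and $P_n$ contributes at most $C$, so $\sum_j |(M_{2n})_{ij}| \le 1+C$ uniformly in $i$ and $n$, and similarly for the columns. This proves the claim, with explicit constants depending only on $c_X$, $c_B$ and $p$.

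I do not expect a genuine obstacle here; the only point requiring care is to keep the fixed dimension $p$ and the growing sample size $n$ in their proper roles — it is precisely Assumption A6 (boundedness of the entries of $B_n^{-1}$ uniformly in $n$) that makes the $O(1/n)$ entrywise bound survive multiplication by the, at first sight, potentially large factors $\tilde{X}_n$ and $\tilde{X}_n^T$. If one only wanted the spectral-norm version this would follow from A6 alone, but for the row and column sum norms the counting argument above (each of the $n$ off-diagonal entries is $O(1/n)$) is what is needed. One small caveat: since $\tilde{X}_n$ is random, $c_X$ should be read as an almost-sure bound (the entries are assumed bounded random variables), so the conclusion holds almost surely; alternatively one restricts to the event on which both boundedness hypotheses hold.
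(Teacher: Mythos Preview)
Your proof is correct and follows essentially the same route as the paper: rescale to pull out the $1/n$, expand the $(i,j)$ entry of $P_n=\tilde X_n B_n^{-1}\tilde X_n^T/n$ as a double sum over the $p$ covariate indices, bound each summand by $c_X^2 c_B$, and sum over $j$ to get a constant independent of $n$. Your write-up is in fact slightly cleaner than the paper's --- you correctly carry the square on $c_X$, separate the bounds for $P_n$ and for $M_{2n}=I_n-P_n$, and note the symmetry so that row and column sums coincide.
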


\begin{lem}
    Under the assumptions A1-A6, we have the following results. (1) $S_n(\rho)S_n^{-1}$ is uniformly bounded in row and column sums. (2) The elements of the vectors $X_n \delta_0$ and $G_n X_n \delta_0$ are uniformly bounded in $n$, where $G_{n}=L_{n}S_{n}^{-1}$.
    (3) $M_{2n} S_n(\rho)S_n^{-1}$ and $(S_n^{-1})^{T}S_n^T(\rho)M_{2n}S_n(\rho)S_n^{-1}$ are uniformly bounded in row and column sums.
\label{uniformbound}
\end{lem}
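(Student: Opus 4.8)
The plan is to derive all three claims from a few elementary properties of the max-row-sum norm $\|\cdot\|_\infty$ and the max-column-sum norm $\|\cdot\|_1$: that they are submultiplicative (being the matrix operator norms induced by the $\ell_\infty$ and $\ell_1$ vector norms), that transposition swaps them, $\|B^\top\|_\infty=\|B\|_1$ and $\|B^\top\|_1=\|B\|_\infty$, and that $\|Bv\|_\infty\le\|B\|_\infty\|v\|_\infty$ for a vector $v$. Once these are in place, Assumptions A2--A6 and Lemma~\ref{m2bound} plug in directly, and the only real work is to track uniformity in $n$ and in $\rho$.

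For part (1) I will use the algebraic identity $S_n(\rho)=S_n+(\rho_0-\rho)L_n$, which gives $S_n(\rho)S_n^{-1}=I_n+(\rho_0-\rho)G_n$ with $G_n=L_nS_n^{-1}$. By A3 both $L_n$ and $S_n^{-1}$ are uniformly (in $n$) bounded in row and column sum norms, so submultiplicativity makes $G_n$ uniformly bounded in both norms; since $R$ is compact, $|\rho_0-\rho|\le\mathrm{diam}(R)$ is a finite constant, and adding $I_n$ (of norm $1$ in both) yields the bound on $S_n(\rho)S_n^{-1}$, uniformly in $n$ and in $\rho\in R$. As a byproduct this records that $G_n$ itself is uniformly bounded in both norms, which I reuse below.

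Part (2) is then immediate: the entries of $X_n=[U_n\ Z_n]$ are bounded constants by A5 and $\delta_0$ is a fixed finite vector with $p=d_1+d_2$ fixed, so $\|X_n\delta_0\|_\infty$ is bounded uniformly in $n$; applying $\|G_nv\|_\infty\le\|G_n\|_\infty\|v\|_\infty$ with $v=X_n\delta_0$ and the bound on $\|G_n\|_\infty$ from part (1) gives the claim for $H_n=G_nX_n\delta_0$. For part (3) I will first invoke Lemma~\ref{m2bound}: its hypotheses hold because the entries of $\tilde X_n=X_n+\eta_n$ are (deterministically) bounded by A5 together with the boundedness of $\eta_n$ in A4, and the entries of $(\tfrac1n\tilde X_n^\top\tilde X_n-\tfrac1n\sum_i\Omega_i)^{-1}$ are bounded by A6; hence $M_{2n}$ is uniformly bounded in both row and column sums. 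Writing $B_n(\rho)=S_n(\rho)S_n^{-1}$, part (1) bounds $B_n(\rho)$ in both norms, so $\|M_{2n}B_n(\rho)\|_\infty\le\|M_{2n}\|_\infty\|B_n(\rho)\|_\infty$ and likewise for $\|\cdot\|_1$; and since $(S_n^{-1})^\top S_n(\rho)^\top M_{2n}S_n(\rho)S_n^{-1}=B_n(\rho)^\top M_{2n}B_n(\rho)$, the transpose-swap identity plus submultiplicativity give $\|B_n(\rho)^\top M_{2n}B_n(\rho)\|_\infty\le\|B_n(\rho)\|_1\|M_{2n}\|_\infty\|B_n(\rho)\|_\infty$, and symmetrically for the column-sum norm.

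The argument is mostly bookkeeping; the one genuine point of care is \emph{uniformity} --- ensuring the constants coming from A3 and from Lemma~\ref{m2bound} are independent of $n$ and of $\rho$, and that the boundedness of $\tilde X_n$ used to invoke Lemma~\ref{m2bound} is a true almost-sure bound rather than only a statement in probability. This is precisely why A4 requires the measurement-error vectors $\eta_i$ to be bounded rather than merely to have finite variance; granted that, everything reduces to the matrix-norm inequalities noted at the outset.
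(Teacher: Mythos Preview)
Your proposal is correct and follows essentially the same approach as the paper: submultiplicativity of the row- and column-sum norms together with Lemma~\ref{m2bound} and Assumptions A3--A6. The only cosmetic difference is in part~(1), where the paper bounds $\|S_n(\rho)S_n^{-1}\|_\infty\le\|S_n(\rho)\|_\infty\|S_n^{-1}\|_\infty$ directly (with $\|S_n(\rho)\|_\infty\le 1+|\rho|\,\|L_n\|_\infty$ implicit), whereas you use the identity $S_n(\rho)S_n^{-1}=I_n+(\rho_0-\rho)G_n$; your route has the slight advantage of making the uniformity in $\rho$ fully explicit via $|\rho_0-\rho|\le\mathrm{diam}(R)$, and it records the intermediate bound on $G_n$ that you reuse in part~(2).
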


Recall the corrected loglikelihood function maximized to obtain the estimators is
\[
l_n^{*}(\theta, \tilde{X}_n) = -\frac{n}{2} \log ( 2 \pi \sigma^2) -\frac{1}{2\sigma^2}[ \tilde{V}_n(\theta)^T \tilde{V}_n(\theta) -  \delta^T(\sum_i \Omega_i) \delta] + \log |S_n(\rho)|.
\]
Now define
\[
Q_n(\rho, X_n) = \max_{\delta, \sigma^2} \mathbb E[l_n^*(\theta,\tilde{X}_n)] = \max_{\delta, \sigma^2} \mathbb E^+ \mathbb E^* [l_n^*(\theta,\tilde{X}_n)],
\]
as the concentrated unconditional expectation of the corrected log likelihood function. Solving the optimization problem leads to solutions for $\delta, \sigma^2$ for a given $\rho$,
\begin{align}
    \delta(\rho) & =(X_n^TX_n)^{-1}X_n^TS_n(\rho) \mathbb E[Y_n] =  (X_n^TX_n)^{-1}X_n^TS_n(\rho) S_n^{-1}X_n \delta_0 \nonumber \\
    \sigma^2(\rho) & = \frac{1}{n}\left ((\rho_0-\rho)^2(G_n X_n\delta_0)^TM_{1n}(G_nX_n
    \delta_0\right)  + \frac{\sigma^2_0}{n} \tr \left((S_n^{-1})^{T}S_n(\rho)^TS_n(\rho)S_n^{-1} \right), 
    \label{sigmasq}
\end{align}
where $M_{1n} = (I_n - X_n (X_n^TX_n)^{-1}X_n^T)$, and $G_n=L_nS_n^{-1}$.
The arguments in \cite{lee2004asymptotic} prove the uniqueness of $\rho_0$ as the global maximizer of $Q_n(\rho,X_{n})$ in the compact parameter space $R$. 
However, we need to show the uniform convergence of the concentrated corrected log-likelihood $l_n^*(\rho, \tilde{X}_n)$ to $Q_n(\rho,X_n)$ over the compact parameter space $R$, which requires further analysis.
We start by noting,
\[
\frac{1}{n} (l^*_n(\rho, \tilde{X}_n) - Q_n(\rho, X_n)) = -\frac{1}{2}(\log (\hat{\sigma}_n^2(\rho)) - \log (\sigma^{2}(\rho))),
\]
where $\sigma^2 (\rho)$ is as defined in Equation \eqref{sigmasq} and
\begin{equation}
   \hat{\sigma}^2_n(\rho) = \frac{1}{n} Y_n^T S_n^T(\rho) M_{2n} S_n(\rho) Y_n,
\end{equation}
with $ 
M_{2n}= I_n - \tilde{X}_n(\tilde{X}_n^T\tilde{X}_n - \Omega)^{-1}
\tilde{X}^T_n.$
We can expand $\hat{\sigma}_n^2(\rho)$ to obtain
\begin{align*}
   \hat{\sigma}^2_n(\rho) &  = \frac{1}{n} (S_n^{-1}X_n \delta_0 + S_n^{-1}V_n)^T S_n^T(\rho) M_{2n} S_n(\rho) (S_n^{-1}X_n \delta_0 + S_n^{-1}V_n)\\
   & = (X_n\delta_0 + (\rho_0-\rho)G_n X_n\delta_0 + S_n(\rho) S_n^{-1}V_n)^T M_{2n} (X_n\delta_0 + (\rho_0-\rho)G_n X_n\delta_0 + S_n(\rho) S_n^{-1}V_n) \\
  & =  \frac{1}{n} \bigg ((\rho_0-\rho)^2(G_nX_n\delta_0)'M_{2n}(G_nX_n\delta_0 )   + (X_n \delta_0)^T M_{2n} (X_n \delta_0) \\
  & \quad + V_n^T(S_n^{-1})^{T} S^T_n(\rho) M_{2n}S_n (\rho) S_n^{-1}V_n  +  2 (\rho_0 -\rho) (G_nX_n \delta_0)^T M_{2n}S_n(\rho) S_n^{-1}V_n \\
  & \quad  + 2 (\rho_0 -\rho) (X_n\delta_0)^T M_{2n}(G_nX_n\delta_0) + 2(X_n\delta_0)^T M_{2n}S_n(\rho)S_n^{-1}V_n \bigg ).
\end{align*}

The decomposition of $\hat{\sigma}_n^2(\rho)$ produces 6 terms. Several of these terms are additional terms as compared to the results of \cite{lee2004asymptotic} because of the added uncertainty introduced by the measurement error model. We show that under the assumptions of A1-A6, the first term converges to the first term in Equation \eqref{sigmasq} uniformly in $\rho$, while the terms 2, 4, 5, 6 uniformly converge to 0. The convergences in terms 1, 2, and 5 do not involve $\rho$ and hence follow from the weak law of large numbers. The terms 4 and 6 require delicate analysis as we need to show uniform convergences over all $\rho$. Under the assumptions, we establish $\mathbb{E} [\sup_{\rho \in R} | (M_{2n} S_n(\rho)S_n^{-1}V_n)_i (X_n\delta_0)_i | ]< \infty$ for all $i$, and consequently applying Uniform LLN (Theorem 9.2 in \cite{keener2010theoretical}), we obtain the uniform convergence of term 6. The uniform convergence for term 4 is obtained similarly.  The 3rd term is a quadratic form in the error vector $V_n$ with a matrix $B_n(\rho,\tilde{X}_{n}) = (S_n^{-1})^{T} S^T_n(\rho) M_{2n}S_n (\rho) S_n^{-1}$ which itself is stochastic. In contrast, the commonly encountered quadratic forms are of the form $y^T A y$, with $A$ being a nonrandom matrix. We show this term uniformly converges to the second term in Equation \eqref{sigmasq} for all $\rho \in R$. 
Once the convergence of $\hat{\rho}_n$ to $\rho_0$ has been established, the convergence of $\hat{\delta}_n, \hat{\sigma}_n^2$ can be easily proved using weak law of large numbers results.

\noindent\textbf{Inconsistency and extent of bias in uncorrected estimators}: Without correcting for bias the parameter estimate $\hat{\delta}_n$ has the following asymptotic expansion
\begin{align*}
    \hat{\delta}_n & = \left(\frac{1}{n}\tilde{X}_n^T\tilde{X}_n\right)^{-1} \left(\frac{1}{n}\tilde{X}_n^T(I + (\hat{\rho}_n -\rho_0)L_nS_n^{-1})(X_n \delta_0 + V_n) \right) \\
    & = \left(\frac{1}{n}X_n^TX_n +\frac{\sum_i \Omega_i}{n} + o_p(1)\right)^{-1} \left(\frac{1}{n}X_n^TX_n\delta_0 + \frac{1}{n}(\hat{\rho}_n -\rho_0)X_n^TL_nS_n^{-1}X_n \delta_0 +o_p(1) \right).
\end{align*}
as $n \to \infty$, and where $\hat{\rho}_n$ is the estimate for $\rho$ and the other population quantities have the same meanings as defined earlier. Under the assumed conditions $X_n^TL_nS_n^{-1}X_n\delta_{0}= O(1)$. Clearly $\hat{\delta}_n$ does not converge to $\delta_0$ unless both $\hat{\rho}_n -\rho_0 = o_p(1)$ and $\frac{\sum_i \Omega_i}{n} = o(1)$. The second condition is akin to the vanishing of the measurement error. We also see the extent of the bias is dependent on the covariance of the measurement error. 

The next result establishes that the bias-corrected estimator is asymptotically normal. The result also gives us the asymptotic variance-covariance matrix of the bias-corrected estimator, which can be used to compute the standard errors of the estimates.

\begin{thm}
If assumptions A1-A6 hold, then
\[
\sqrt{n}(\hat{\theta}_n - \theta_0) \overset{D}{\to} N(0, I(\theta_0,X)^{-1}\Sigma(\theta_0,X)I(\theta_0,X)^{-1}),
\]
where $I(\theta_0,X)$ is the unconditional expectation of the negative Hessian matrix given by
\[
I(\theta_0,X) = \lim_{n \to \infty} \mathbb E^+\mathbb E^*  [-\frac{1}{n}\nabla_{\theta}^2 l^{*}_n(\theta_0, \tilde{X}_n)]
\]
which is 
\begin{align*}
     \lim_{n \to \infty} \frac{1}{n}\begin{pmatrix}
    \frac{1}{\sigma_0^2}(X_n^T X_n)  & \frac{1}{\sigma_0^2}X_n^TH_n& 0\\
\frac{1}{\sigma_0^2}H_n^TX_n  &  \frac{1}{\sigma_0^2} H_n^TH_n +  \tr (G_n^{T} G_n+G_{n}G_{n}) & \frac{1}{\sigma_0^2} \tr (G_n) \\
    0  & \frac{1}{\sigma_0^2} \tr (G_n) & \frac{n}{2\sigma_0^4},
    \end{pmatrix}
\end{align*}
where $S_n=(I-\rho_0L)$, $G_n=L_nS_n^{-1}$, and $H_n = G_nX_{n}\delta_{0}$. $\Sigma(\theta_0,X)$ is the unconditional variance of $\frac{1}{\sqrt{n}}\nabla_{\theta} l_n^{*}(\theta_0,\tilde{X}_{n})$ evaluated at $\theta_0$:  
\[\Sigma(\theta_0,X)=\lim_{n \to \infty}\mathbb E^+\mathbb E^* [(\frac{1}{\sqrt{n}}\nabla_{\theta} l_n^{*}(\theta_0,\tilde{X}_n)(\frac{1}{\sqrt{n}}\nabla_{\theta} l_n^{*}(\theta_0,\tilde{X}_n)^T].
\]
\label{asymptotic}
\end{thm}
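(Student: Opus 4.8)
\medskip

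\noindent\emph{Proof proposal.} The plan is the standard M-estimator argument: a mean-value expansion of the corrected score around $\theta_0$, a law-of-large-numbers control of the Hessian, a central limit theorem for the score, and Slutsky's theorem. Since Theorem 1 gives $\hat{\theta}_n \overset{p}{\to} \theta_0$ and $\theta_0$ is interior to the compact parameter space, the corrected score equation $\nabla_\theta l_n^*(\hat{\theta}_n,\tilde{X}_n)=0$ holds with probability tending to one, and applying the mean value theorem to each coordinate of $\nabla_\theta l_n^*(\cdot,\tilde{X}_n)$ gives
\[
0 = \tfrac{1}{\sqrt{n}}\nabla_\theta l_n^*(\theta_0,\tilde{X}_n) + \Big(\tfrac{1}{n}\nabla^2_\theta l_n^*(\bar{\theta}_n,\tilde{X}_n)\Big)\sqrt{n}(\hat{\theta}_n-\theta_0),
\]
where each row of $\nabla^2_\theta l_n^*(\bar{\theta}_n,\tilde{X}_n)$ is evaluated at an intermediate point $\bar{\theta}_n$ on the segment joining $\hat{\theta}_n$ and $\theta_0$, so in particular $\bar{\theta}_n \overset{p}{\to}\theta_0$. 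Hence $\sqrt{n}(\hat{\theta}_n-\theta_0) = \big(-\tfrac{1}{n}\nabla^2_\theta l_n^*(\bar{\theta}_n,\tilde{X}_n)\big)^{-1}\tfrac{1}{\sqrt{n}}\nabla_\theta l_n^*(\theta_0,\tilde{X}_n)$, and it remains to analyze the two factors.

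\medskip

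\noindent For the Hessian, the explicit expression in Appendix \ref{gradhess} shows that $-\tfrac{1}{n}\nabla^2_\theta l_n^*(\theta,\tilde{X}_n)$ is a sum of averages of linear and quadratic forms in $V_n$ and $\eta_n$ whose coefficient matrices are built from $S_n(\rho)^{-1}$, $L_n$, $\tilde{X}_n$, and the stochastic projector $M_{2n}$. By Lemmas \ref{m2bound} and \ref{uniformbound} these matrices are uniformly bounded in row and column sum norms over $\rho\in R$; combined with a weak law of large numbers for quadratic forms with stochastic matrices, the uniform law of large numbers (Theorem 9.2 in \cite{keener2010theoretical}), and the smoothness of $\rho\mapsto S_n(\rho)^{-1}$, this yields that $-\tfrac{1}{n}\nabla^2_\theta l_n^*(\theta,\tilde{X}_n)$ converges in probability, uniformly on a neighborhood of $\theta_0$, to a continuous deterministic limit. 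Evaluating that limit at $\theta_0$ under $\mathbb{E}^+\mathbb{E}^*$: cross terms linear in $V_n$ or $\eta_n$ vanish in expectation; $\mathbb{E}^*[\tilde{X}_n^T\tilde{X}_n]=X_n^TX_n+\sum_i\Omega_i$, so the correction $-\sum_i\Omega_i$ in the $\delta\delta$ block cancels the measurement-error contribution and leaves $\tfrac{1}{\sigma_0^2}X_n^TX_n$; and the $\rho\rho$, $\rho\sigma^2$ entries arise from $\mathbb{E}^+[V_n^TG_n^TG_nV_n]=\sigma_0^2\tr(G_n^TG_n)$, $\mathbb{E}^+[V_n^TG_nV_n]=\sigma_0^2\tr(G_n)$, and $\tfrac{d^2}{d\rho^2}\log|S_n(\rho)|=-\tr(G_n^2)$, together with $H_n=G_nX_n\delta_0$. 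This produces exactly the matrix $I(\theta_0,X)$ in the statement, and since $\bar{\theta}_n\overset{p}{\to}\theta_0$ we conclude $-\tfrac{1}{n}\nabla^2_\theta l_n^*(\bar{\theta}_n,\tilde{X}_n)\overset{p}{\to}I(\theta_0,X)$, which is non-singular by Assumption A5 (as in \cite{lee2004asymptotic}, the lower $2\times 2$ block being positive definite because of the $\tr(G_n^TG_n)$ term).

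\medskip

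\noindent For the score, I would first verify $\mathbb{E}[\nabla_\theta l_n^*(\theta_0,\tilde{X}_n)]=0$. By construction $\mathbb{E}^*[l_n^*(\theta,\tilde{X}_n)]=l_n(\theta,X_n)$, the Gaussian quasi-log-likelihood evaluated with the true covariates, so $\mathbb{E}[\nabla_\theta l_n^*(\theta_0,\tilde{X}_n)]=\mathbb{E}^+[\nabla_\theta l_n(\theta_0,X_n)]=0$ by the usual mean-zero property of the Gaussian SAR quasi-score at the truth (which uses only $\mathbb{E}^+[V_n]=0$, $\mathbb{E}^+[V_nV_n^T]=\sigma_0^2 I_n$, and $\mathbb{E}^+[V_n^TG_nV_n]=\sigma_0^2\tr(G_n)$, matching $\tfrac{d}{d\rho}\log|S_n(\rho)|=-\tr(G_n)$), exactly as in \cite{lee2004asymptotic}. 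Substituting $S_n(\rho_0)Y_n=X_n\delta_0+V_n$ and $\tilde{X}_n=X_n+\eta_n$ into the score formulas shows that $\tfrac{1}{\sqrt{n}}\nabla_\theta l_n^*(\theta_0,\tilde{X}_n)$ equals $\tfrac{1}{\sqrt{n}}$ times a sum of: a linear form in the row-wise independent vectors $(V_{n,i},\eta_{n,i})$, the quadratic form $V_n^TG_nV_n-\sigma_0^2\tr(G_n)$ with the deterministic matrix $G_n$, the centered quadratic $\sum_i(\delta_0^T\eta_{n,i}\eta_{n,i}^T\delta_0-\delta_0^T\Omega_i\delta_0)$, and bilinear terms $\sum_i V_{n,i}\,\eta_{n,i}^T(\cdot)$. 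All coefficient matrices are uniformly bounded in row and column sums (Lemma \ref{uniformbound}), $V_n$ has finite $(4+c)$-th moments (A1), $\eta_n$ is bounded (A4), and $\tfrac{1}{n}\mathrm{Var}\big(\nabla_\theta l_n^*(\theta_0,\tilde{X}_n)\big)\to\Sigma(\theta_0,X)$, finite under A1--A6 (its entries involve $\mathbb{E}^+[V_{n,i}^4]$, $\mathbb{E}^+[V_{n,i}^3]$, and second/fourth moments of $\eta_{n,i}$). A central limit theorem for linear--quadratic forms of the Kelejian--Prucha type, equivalently the martingale difference array CLT of \cite{lee2004asymptotic}, then gives $\tfrac{1}{\sqrt{n}}\nabla_\theta l_n^*(\theta_0,\tilde{X}_n)\overset{D}{\to}N(0,\Sigma(\theta_0,X))$.

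\medskip

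\noindent Finally, the continuous mapping theorem applied to matrix inversion (valid since $I(\theta_0,X)$ is non-singular) together with Slutsky's theorem on the product yields $\sqrt{n}(\hat{\theta}_n-\theta_0)\overset{D}{\to}I(\theta_0,X)^{-1}N(0,\Sigma(\theta_0,X))=N\big(0,\,I(\theta_0,X)^{-1}\Sigma(\theta_0,X)I(\theta_0,X)^{-1}\big)$. I expect two main obstacles. The first is the uniform-in-$\theta$ convergence of the Hessian, because $M_{2n}$ is stochastic and enters quadratic forms; this is precisely where Lemmas \ref{m2bound}--\ref{uniformbound} and Assumption A6 are needed and where the argument departs from \cite{lee2004asymptotic}. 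The second is the CLT for the score, whose novelty is the joint linear--quadratic dependence on the two independent noise vectors $V_n$ and $\eta_n$ and the need to check that the correction $\sum_i\Omega_i$ removes exactly the bias of the $\eta_n$-quadratic part; carrying out the explicit computation of $\Sigma(\theta_0,X)$ (the sandwich ``meat'') is the most laborious single step.
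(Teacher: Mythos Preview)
Your proposal follows the same M-estimator scaffold as the paper: mean-value expansion of the corrected score, convergence of the Hessian to $I(\theta_0,X)$, and a Kelejian--Prucha/Lee CLT for the score vector at $\theta_0$; the score analysis (mean-zero verification and the linear--quadratic CLT) matches the paper almost exactly.

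One point of confusion worth flagging: you assert that the Hessian involves the stochastic projector $M_{2n}$ and identify its presence as the main obstacle for uniform-in-$\theta$ convergence. In fact $M_{2n}$ never appears in $\nabla^2_\theta l_n^*(\theta,\tilde{X}_n)$; inspecting the explicit Hessian in Appendix~\ref{gradhess}, its entries are built only from $\tilde{X}_n$, $L_nY_n$, $\tilde{V}_n(\theta)$, and $G_n(\rho)$. The $M_{2n}$ difficulties (and Lemma~\ref{m2bound}, Assumption~A6) belong entirely to the consistency proof via the concentrated likelihood, not here. Consequently the Hessian step is easier than you anticipate, and the paper handles it without any uniform convergence: it computes element-by-element that $\tfrac{1}{n}\nabla^2_\theta l_n^*(\tilde{\theta}_n,\tilde{X}_n)-\tfrac{1}{n}\nabla^2_\theta l_n^*(\theta_0,\tilde{X}_n)=o_p(1)$ directly from $\tilde{\theta}_n-\theta_0=o_p(1)$ and the explicit formulas, and then that $\tfrac{1}{n}\nabla^2_\theta l_n^*(\theta_0,\tilde{X}_n)\overset{p}{\to} I(\theta_0,X)$ by ordinary WLLN on the $\eta_n$ and $V_n$ averages. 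Your uniform-convergence route would also work, but it is more than is needed.
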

The result proven in the Appendix uses the theory of M-estimators. The limiting variance matrix involves two terms. The term $I(\theta_0,X)$ is the usual unconditional Fisher information matrix. The limiting variance would have been just $I(\theta_0,X)^{-1}$, if there were no measurement error as proved in \cite{lee2004asymptotic}. However, due to measurement error, the limiting variance is more involved and also contains the term $\Sigma(\theta_0,X)$, which is the unconditional variance of the scaled score vector, and $\Sigma(\theta_0,X) \neq I(\theta_0,X)$.

To apply Theorem \ref{asymptotic}, we need to estimate the matrices involved in the asymptotic covariance of the limiting distribution. However, we note that we do not have $X_n$ available to us. Therefore we estimate the matrix $I(\theta_0,X)$ by the following consistent estimator $I(\theta_0,\tilde{X}_n)$, which is a function of $\hat{X}_n$:
\begin{align*}
     \frac{1}{n}\begin{pmatrix}
    \frac{1}{\sigma_0^2}(\tilde{X}_n^T \tilde{X}_n-\sum_{i}\Omega_{i})  & \frac{1}{\sigma_0^2}(\tilde{X}_n^TG_{n}\tilde{X}_{n}-\sum_{i}G_{n,ii}\Omega_{i})\delta_{0} & 0\\
\frac{1}{\sigma_0^2}\delta_{0}^T(\tilde{X}_n^TG_{n}^T\tilde{X}_n-\sum_iG_{n,ii}\Omega_{i})  & \tilde{D}_{n} & \frac{1}{\sigma_0^2} \tr (G_n) \\
    0  & \frac{1}{\sigma_0^2} \tr (G_n) & \frac{n}{2\sigma_0^4},
    \end{pmatrix}
\end{align*}

where $\tilde{D}_{n}=\frac{1}{\sigma_0^2} \delta_{0}^T(\tilde{X}_{n}^{T}G_{n}^{T}G_{n}\tilde{X}_{n}-\sum_{i}(G_{n}^{T}G_{n})_{ii}\Omega_{i})\delta_{0} +  \tr (G_n^{T} G_n+G_{n}G_{n})$. The consistency of the above estimator for estimating $I(\theta_0,X)$ can be seen by noting the following results from the Weak Law of Large Numbers
\begin{align*}
\frac{1}{n} (\tilde{X}_{n}^T\tilde{X}_{n} -\sum_i \Omega_i) &\overset{p}{\to} \frac{1}{n} X_{n}^TX_{n} \\
\frac{1}{n}(\tilde{X}_n^TG_n\tilde{X}_n - \sum_{i}G_{n,ii}\Omega_{i}) & \overset{p}{\to} \frac{1}{n} X_n^TG_nX_n\\
\frac{1}{n}(\tilde{X}_{n}^{T}G_{n}^{T}G_{n}\tilde{X}_{n}-\sum_{i}(G_{n}^{T}G_{n})_{ii}\Omega_{i}) & \overset{p}{\to} \frac{1}{n}X_n^TG_n^TG_nX_n.
\end{align*}

Next, the matrix $\Sigma(\theta_0,X)$, which is the unconditional variance of the score vector at $\theta_0$, can be estimated using the following estimator $\Sigma(\theta_0, \tilde{X}_n)$,
 \[
\frac{1}{n} \sum_i (\nabla_{\theta} l_i^{*}(\theta_0, \tilde{X}_{n,i}))(\nabla_{\theta} l_i^{*}(\theta_0, \tilde{X}_{n,i})^T,
 \]
 where $\nabla_{\theta} l_i^{*}(\theta_0, \tilde{X}_{n,i})$ is one observation score vector for observation $i$ evaluated at $\theta_0$. While the observations are not independent, given the score vector is composed of linear and quadratic forms, the following is a reasonable estimate of the one observation score vector.  
\begin{align*}
\begin{pmatrix}
    \frac{1}{\sigma_0^2} [\tilde{X}_{n,i}(\tilde{V}_{n,i}^T(\theta_0)) + \Omega_i \delta_{0}]\\
    \frac{1}{\sigma_0^2}[(L_{n}Y_{n})_i\tilde{V}_{n,i}^T(\theta_0) ] -  (G_n)_{ii} \\
   -\frac{1}{2 \sigma_0^2} + \frac{1}{2\sigma_0^4} \{\tilde{V}_{n,i}(\theta_0)(\tilde{V}_{n,i}(\theta_0))^T-  \delta^T (\Omega_i)\delta\} 
\end{pmatrix}
\end{align*}
It is easy to check that the score vector (given in Appendix \ref{gradhess}), $\nabla_{\theta} l_n^{*}(\theta_0, \tilde{X}) = \sum_i \nabla_{\theta} l_i^{*}(\theta_0, \tilde{X}_i)$. For both of these matrices, we replace $\theta_0$ with the estimator $\hat{\theta}_n$ to obtain their plug-in estimators $\hat{I}(\hat{\theta}_n, \tilde{X}_n)$ and $\hat{\Sigma}(\hat{\theta}_n, \tilde{X}_n)$.

\subsection{Estimated covariance of measurement error}

The theory in the above section was developed under the assumption that the true covariance matrix of the measurements errors $\Omega_i$s are known. However, in practice that is rarely the case and typically the error covariance needs to be estimated from replication, validation, or proxy variables. Typically, in replication datasets, each observation is replicated 2 or 3 times. Therefore, we assume that the covariance matrix of the measurement errors are the same $\Omega_i=\Omega$ for all $i$, so that we can pull information across observations for estimating $\Omega$. However, in some application problems, it is possible to have a large number of replicates for each observation, and in that case, one can attempt to estimate $\Omega_i$s separately. Let $\hat{\Omega}$ be the estimated error covariance obtained from replicated measurements. Define $\omega=vec(\Omega) = [\Omega_{11},\Omega_{12}, \Omega_{13},\ldots, \Omega_{ij},\ldots \Omega_{pp}]$ as the vector of the  elements of the matrix $\Omega$. Then following \cite{carroll2006measurement}, suppose we have access to an estimator $\hat{\Omega}$ with limiting covariance matrix $C(\Omega)$, which is independent of $\tilde{X}_n$. 
The corrected score vector is given by 
\begin{align*}
 \nabla_{\theta} l^* = \begin{pmatrix}
       \frac{1}{\sigma_0^2}[X_n^T V_n + \eta_n^TV_n  - X_n^T\eta_n \delta_0 - \eta_n^T\eta_n\delta_0 + n\Omega \delta_0 ],\\
 \frac{1}{\sigma_0^2} [(G_nX_n\delta_0)^TV_n+ V_n^TG_nV_n -(G_nX_n\delta_0)^T\eta_n \delta_0 - V_n^TG_n \eta_n \delta_0 ] - \tr (G_n)\\
 -\frac{n}{2 \sigma_0^2} + \frac{1}{2\sigma_0^4} \{V_n^TV_n- V_n^T\eta_n\delta_0  +\delta_0^T\eta_n^T\eta_n\delta_0 - \delta_0^T (n \Omega) \delta_0\}.
 \end{pmatrix}   
\end{align*}

Then we compute     $D(\theta,\Omega) = \nabla_{\omega} \nabla_{\theta} l^*(\theta, Y,\tilde{U},Z)$ as the $(p+2) \times p^2$ matrix of derivatives of the score vector with respect to the elements of $\omega$. We provide an explicit formula for this matrix in the lemma below.
\begin{lem}
\label{estcov}
    Consider the settings of Theorem \ref{asymptotic} and assume that $\Omega$ is unknown but finite. Let $\hat{\Omega}$ be a  estimator of $\Omega$ which is independent of $\tilde{X}_n$ and $\hat{\theta}_n$ and has asymptotic covariance matrix $C(\Omega)$. Then
    \[
\sqrt{n}(\hat{\theta}_n - \theta_0) \overset{D}{\to} N(0, I(\theta_0,X)^{-1}(\Sigma(\theta_0,X) + D(\theta_0,\Omega)C(\Omega)D(\theta_0, \Omega)^T)I(\theta_0,X)^{-1}),
\]
where
\begin{align*}
    D(\theta,\Omega) =  \begin{pmatrix}
        \frac{n}{\sigma^2}I_p \otimes \delta^T \\
        0_{p^2}^T \\
        \frac{n}{2*\sigma^4}\delta^T \otimes \delta^T
    \end{pmatrix}_{(p+2) \times p^2},
\end{align*}
with the notation $\otimes$ denoting the Kronecker product. 
\end{lem}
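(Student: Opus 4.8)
The approach is to treat $\hat\theta_n$ as an $M$-estimator solving a corrected score equation that depends on a plugged-in nuisance estimator $\hat\Omega$, and to perform the standard two-stage (estimated-nuisance) expansion: a joint first-order Taylor expansion of the score equation in $(\theta,\omega)$, re-using verbatim the convergences already established for the consistency theorem (Theorem~1) and for Theorem~\ref{asymptotic}, and invoking the assumed independence of $\hat\Omega$ from the primary sample to kill the cross-covariance between the score and $\hat\omega$. Write $l^*_n(\theta,\tilde X_n;\Omega)$ to display the $\Omega$-dependence; by Algorithm~\ref{alg:estimation}, $\hat\theta_n$ solves $\nabla_\theta l^*_n(\hat\theta_n,\tilde X_n;\hat\Omega)=0$. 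The key structural fact is that $l^*_n$ depends on $\Omega$ only through the single additive, \emph{data-free} term $\tfrac{1}{2\sigma^2}\delta^T(\sum_i\Omega_i)\delta=\tfrac{n}{2\sigma^2}\delta^T\Omega\delta$ (since $\Omega_i\equiv\Omega$). Hence $D(\theta,\Omega)=\nabla_\omega\nabla_\theta l^*_n$ is a \emph{deterministic}, smooth function of $(\theta,\Omega)$ alone, and differentiating that term directly reproduces the displayed blocks: $\nabla_\delta$ of $\tfrac{n}{2\sigma^2}\delta^T\Omega\delta$ equals $\tfrac{n}{\sigma^2}\Omega\delta$, so $\nabla_\omega\nabla_\delta$ equals $\tfrac{n}{\sigma^2}(I_p\otimes\delta^T)$; $\nabla_\rho$ of the term is $0$; and $\nabla_{\sigma^2}\nabla_\omega$ of the term gives the $\tfrac{n}{2\sigma^4}(\delta^T\otimes\delta^T)$ block.

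First I would verify that $\hat\theta_n$ stays consistent after the substitution $\Omega\mapsto\hat\Omega$: since $\hat\Omega\overset{p}{\to}\Omega$ and, by Assumption~A6, $(\tfrac1n\tilde X_n^T\tilde X_n-\hat\Omega)^{-1}$ remains bounded in probability, Lemmas~\ref{m2bound}--\ref{uniformbound} and the uniform-in-$\rho$ convergence arguments behind Theorem~1 go through with $M_{2n}$ replaced by $\hat M_{2n}=I_n-\tilde X_n(\tilde X_n^T\tilde X_n-n\hat\Omega)^{-1}\tilde X_n^T$, giving $\hat\theta_n\overset{p}{\to}\theta_0$. Next, dividing the score equation by $\sqrt n$ and Taylor-expanding about $(\theta_0,\omega)$ yields, for intermediate points $(\bar\theta_n,\bar\Omega)\overset{p}{\to}(\theta_0,\Omega)$,
\[
0=\tfrac{1}{\sqrt n}\nabla_\theta l^*_n(\theta_0,\tilde X_n;\Omega)+\Big[\tfrac1n\nabla^2_\theta l^*_n(\bar\theta_n,\tilde X_n;\bar\Omega)\Big]\sqrt n(\hat\theta_n-\theta_0)+\Big[\tfrac1nD(\bar\theta_n,\bar\Omega)\Big]\sqrt n(\hat\omega-\omega)+o_p(1).
\]
Then I identify the limits: by the Hessian-convergence step in the proof of Theorem~\ref{asymptotic} together with continuity in $(\theta,\Omega)$, $-\tfrac1n\nabla^2_\theta l^*_n(\bar\theta_n,\tilde X_n;\bar\Omega)\overset{p}{\to}I(\theta_0,X)$, nonsingular by A5; by the score-CLT step there, $\tfrac{1}{\sqrt n}\nabla_\theta l^*_n(\theta_0,\tilde X_n;\Omega)\overset{D}{\to}N(0,\Sigma(\theta_0,X))$; because $D$ is data-free and continuous, $\tfrac1nD(\bar\theta_n,\bar\Omega)\overset{p}{\to}\tfrac1nD(\theta_0,\Omega)$ (a bounded deterministic matrix); and by hypothesis $\sqrt n(\hat\omega-\omega)\overset{D}{\to}N(0,C(\Omega))$. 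Since $\hat\Omega$ is formed from an auxiliary sample independent of $(\tilde X_n,Y_n)$ and of the errors $V_n,\eta_n$ driving the score, the pair $\big(\tfrac{1}{\sqrt n}\nabla_\theta l^*_n(\theta_0;\Omega),\ \sqrt n(\hat\omega-\omega)\big)$ converges jointly to independent Gaussians. Solving for $\sqrt n(\hat\theta_n-\theta_0)$ and applying Slutsky with the symmetric $I(\theta_0,X)^{-1}$ gives $N\!\big(0,\ I(\theta_0,X)^{-1}(\Sigma(\theta_0,X)+D(\theta_0,\Omega)C(\Omega)D(\theta_0,\Omega)^T)I(\theta_0,X)^{-1}\big)$, the $\tfrac1n$ factors being absorbed into the normalization of $C(\Omega)$.

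The main obstacle is not the final sandwich algebra, which is the routine estimated-nuisance $M$-estimation calculation, but the bookkeeping needed to re-use Theorems~1 and \ref{asymptotic}. First, one must check that every uniform-in-$\rho$ convergence, the consistency of the Hessian, and the score CLT from those proofs survive the replacement of the true $\Omega$ by the consistent plug-in $\hat\Omega$; this rests on $\hat\Omega\overset{p}{\to}\Omega$ together with Assumption~A6 keeping $\hat M_{2n}$ uniformly bounded in row and column sums (via Lemma~\ref{m2bound}), so that $\hat M_{2n}-M_{2n}$ contributes only $o_p(1)$ uniformly to the relevant linear and quadratic forms. Second, one must control the second-order Taylor remainder, i.e., show that $\tfrac1n\nabla^2_\theta l^*_n(\theta,\tilde X_n;\Omega')$ converges, uniformly over a neighborhood of $(\theta_0,\Omega)$, to a continuous deterministic limit, so that evaluating at the intermediate point $(\bar\theta_n,\bar\Omega)$ is asymptotically the same as at $(\theta_0,\Omega)$; this again follows from the uniform boundedness furnished by Lemmas~\ref{m2bound}--\ref{uniformbound} and Assumptions~A2--A6. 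The independence of $\hat\Omega$ from the primary sample does the remaining work: it forces the asymptotic cross-covariance between the score and $\sqrt n(\hat\omega-\omega)$ to vanish, which is precisely why the additional uncertainty appears only as the additive term $D(\theta_0,\Omega)C(\Omega)D(\theta_0,\Omega)^T$ and not as a cross-term with $\Sigma(\theta_0,X)$.
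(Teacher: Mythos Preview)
Your proposal is correct and follows essentially the same two-stage $M$-estimation expansion as the paper's proof, which is in turn adapted from \citet{carroll2006measurement}, Appendix~B.6: Taylor-expand the score in $(\theta,\omega)$, use independence of $\hat\Omega$ from the primary sample to add the variances, and invoke the Hessian and score-CLT convergences already established for Theorem~\ref{asymptotic}. Your write-up is considerably more careful than the paper's short proof---you explicitly derive the block form of $D$, verify that consistency survives the plug-in $\hat\Omega$, and flag the $1/n$ scaling bookkeeping---but the underlying argument is the same.
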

Note the function $D(\theta_0, \Omega)$ only depends on $\theta_0$ and does not depend on $\tilde{X}_n$ or $\Omega$. Therefore we replace $\theta_0$ with $\hat{\theta}_n$ to obtain a plug-in estimator. Therefore, with the above lemma we can modify the estimator for the asymptotic variance of $\hat{\theta}_n$ as 
\[
\hat{I}^{-1}(\hat{\theta}_n,\tilde{X}_n) (\hat{\Sigma}(\hat{\theta}_n, \tilde{X}_n) + \hat{D}_n(\hat{\theta}_n, \hat{\Omega}_n) C_n(\hat{\Omega}_n)\hat{D}_n(\hat{\theta}_n, \hat{\Omega}_n)^T)\hat{I}^{-1}(\hat{\theta}_n,\tilde{X}_n). 
\]
We further note that the condition of the Lemma on $\hat{\Omega}_n$ being independent of $\tilde{X}_n$ is not too strong since that holds, e.g., when we have replicated measurements of a normally distributed $X$, and we use sample mean to estimate $\tilde{X}_n$ and sample covariance to estimate $\hat{\Omega}$.

\section{Application Problems}

\subsection{Covariates measured with error}
\label{dataexample}
We consider a few scenarios where some covariates are measured with error, and the proposed methods are applicable. The key quantity we need to apply the proposed method is an estimate of the error covariance matrix $\Delta$ that can be calculated from data (either from the same data or another related validation dataset).

The first situation arises when the true $U$ cannot be directly measured, but several replicate measurements $U^R$ are available \citep{carroll2006measurement}. Let $U^R_{ij} = U_i + \epsilon_{ij}$, for $j=\{1,\ldots, k_i\},$ with typically $k_i>1$ for all observations. Then we use the sample mean $\tilde{U}_{i} = \frac{1}{k_i}\sum_j U^R_{ij}$ over the replicate measurements as the proxy variable in the SAR model. The estimate for the error covariance matrix $\Delta$ can be obtained as shown in \cite{carroll2006measurement}
\[
\hat{\Delta} = \frac{1}{k} \frac{\sum_i \sum_j (U^R_{ij} - \tilde{U}_{i})(U^R_{ij} - \tilde{U}_{i})^T}{n (k -1)}.
\]
To use Lemma \ref{estcov} we can further estimate the variance of the above covariance matrix estimator as \cite{carroll2006measurement} $C_n(\hat{\Delta}) = \frac{1}{k}\frac{\sum_i d_i d_i^T}{(n(k-1))^2}$, with $d_i = vec((U^R_{ij} - \tilde{U}_{i})(U^R_{ij} - \tilde{U}_{i})^T) - (k -1)vec(\hat{\Delta})$.
A situation like this also arises in our data example. In our application problem in section \ref{coviddata}, we are interested in a target covariate that measures the mortality rate in the older population of a county. The target covariate is not directly observed, but for this purpose, we can use 3 measured covariates, which measure mortality rate in individuals of age group 65-74, 75-84, and 84+, respectively. In the notation defined above, $k=3$, and we obtain $\tilde{U}$ as mean of the 3 measurements while $\Delta$ is obtained using the formula given above.

We consider another scenario of covariates measured with error following the examples in \cite{blackwell2017unified}.  If a validation sample is available where both $U$ and $\tilde{U}$ are available (either internal, i.e., as a subset of the data, or from an external database) then we can compute the sample covariance of the error as $\hat{\Delta} =\frac{1}{n-1}\sum_i (U_i- \tilde{U}_i)(U_i- \tilde{U}_i)^T $. This scenario includes the case when the target covariate $U$ is observed only for a subset of observations while the error-prone covariate $\tilde{U}$ is available for all observations. 
In that case, we can compute $\hat{\Delta} =\frac{1}{n_{val}-1}\sum_i^{n_{val}} (U_i- \tilde{U}_i)(U_i- \tilde{U}_i)^T $, using the validation portion of the data as the covariance of the measurement error associated with using $\tilde{U}$ in place of $U$.
The proxy covariate need not be exactly an error-prone measurement of the true target covariate. The measurement error correction method will work if we can assume that they are related through a bias term that can also be estimated from data \cite{blackwell2017unified}. Let $
U_i^{\text{proxy}} = \beta + U_i + \epsilon_i$, with $E[\epsilon_i]=0$. Then, using the validation data, 
we can obtain an estimate of the parameter $\beta$ as $\hat{\beta}=\frac{1}{n_{val}}\sum_{i=1}^{n_{val}} (U_i^{\text{proxy}} -U_i)$. Therefore in our SAR model, we proceed with the error-prone measurement $\tilde{U}_i = U_i^{\text{proxy}} -  \hat{\beta}$. We note that $E[\tilde{U}_i] = \beta + U_i -\beta=0$.
The covariance matrix can be estimated from the sample covariance matrix of $\tilde{U}_i - U_i$ within the validation portion.

\subsection{Network homophily correction}
A second application comes from the relatively new literature on correcting for latent homophily while estimating peer influence in network-linked observational studies \citep{natha2022identifying,mcfowland2021estimating}. The authors in \cite{mcfowland2021estimating,natha2022identifying} put forth methods for estimating peer influence in longitudinal network-linked studies correcting for latent homophily when an outcome is measured on individuals in at least 2 time points. For this purpose, \cite{natha2022identifying} modeled the network with the random dot product graph model, which is a general latent variable model for network data. However, one needs to necessarily use estimated latent variables in place of true latent variables. The estimated latent variables can be thought of as erroneous versions of the true latent variables and hence will introduce measurement error bias. The authors then proposed to incorporate these latent variables in a linear regression model when longitudinal data is available. However, currently methodology does not exist for estimating peer influence in cross-sectional data adjusting for latent homophily. Our proposed method in this paper can be used for correcting such measurement error bias in the estimated latent homophily vector while applying in the SAR context. 

Consider a network equivalent of the SAR model in Equation \ref{eq2}
\begin{equation}
Y=\rho  L Y + U \beta + Z \gamma +  V,
\label{model1}
\end{equation}
where in addition to observed covariates $Z$, the model includes unobserved latent characteristics  $U$. We assume the network formation involves the same set of latent characteristics. In particular, we let $U$ to be the latent positions in the 
Random Dot Product Graph (RDPG) model \cite{athreya2017statistical,tang2018limit,rubin2017statistical,xie2023efficient}, such that $E[A] = U_{n \times d} U_{d \times n}^T$, with the constraint that $U_i^TU_j \in [0,1]$. We can think of this set of latent characteristics $U$ as latent homophily variables that affect both the outcome as well as network formation. Following the methodology in \cite{natha2022identifying}, we propose to estimate the latent variables $\hat{U}$ from the adjacency spectral embedding of the observed network adjacency matrix. From the RDPG model literature \cite{xie2023efficient}, we also obtain estimates of the covariance matrices of $\hat{U}$s, as 
\begin{equation*}
\hat{\Delta}_i = (\frac{1}{n}\sum_i \hat{U}_i\hat{U}_i^T)^{-1}(\sum_j \hat{U}_j^T\hat{U}_i(1-\hat{U}_j^T\hat{U}_i)\hat{U}_j \hat{U}_j^T)(\frac{1}{n}\sum_i \hat{U}_i\hat{U}_i^T)^{-1}.
\end{equation*}
With this covariance matrix of the error we apply the ME-QMLE to estimate the model in Equation \ref{model1}.

\section{Simulation Studies}
In this section, we perform simulation studies to assess the finite sample performance of the proposed methods.

\begin{figure}[!htbp] 
  \caption{Mean Bias (+/- 1.96*SE) in $\hat{\beta},\hat{\gamma}$}
  \centering
  \label{covariatebias}
  \begin{minipage}[b]{0.5\linewidth}
    \centering
    \includegraphics[width=\linewidth]{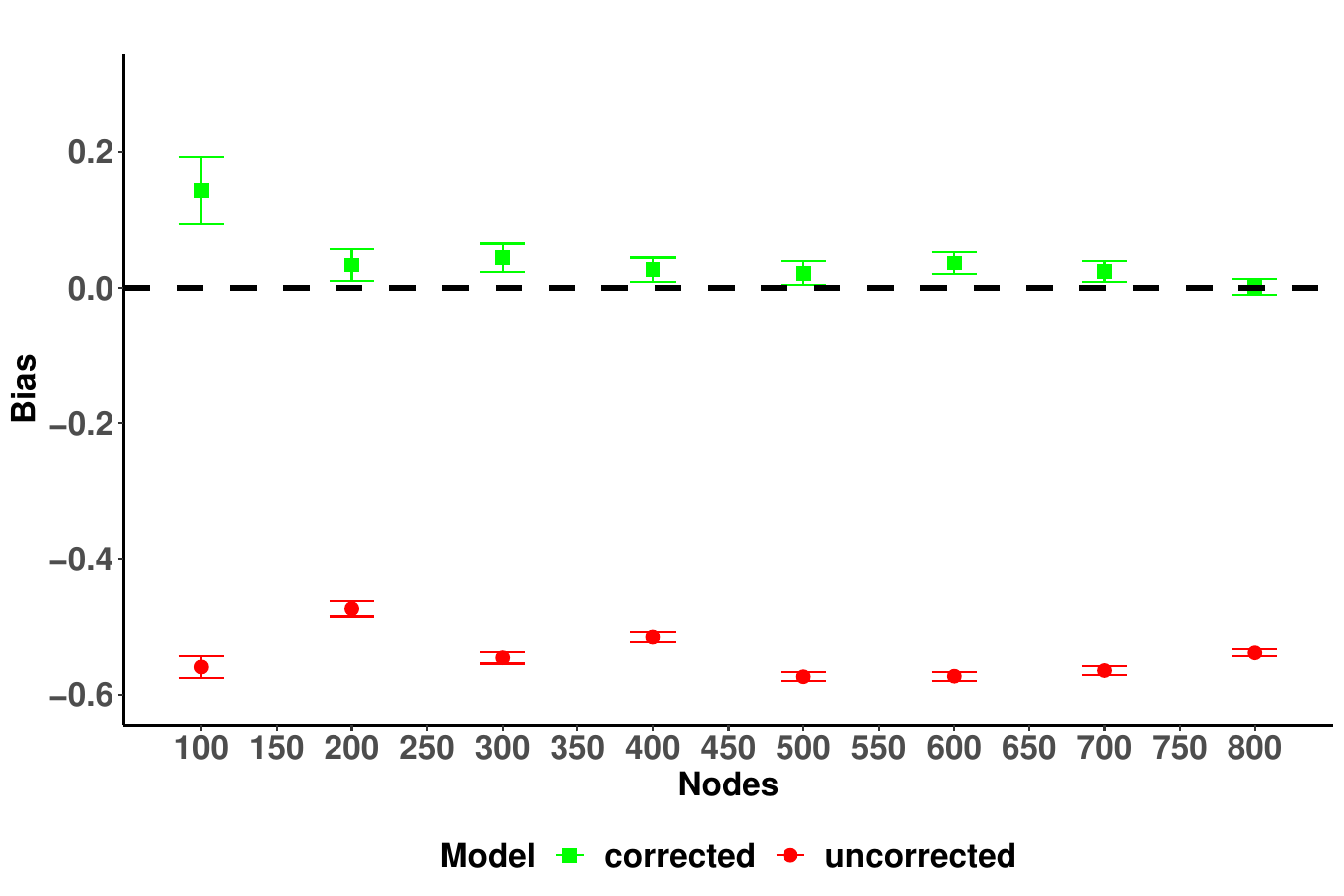} 
        \caption*{a. $\hat{\beta_{1}}$} 
  \end{minipage}
   \begin{minipage}[b]{0.5\linewidth}
    \includegraphics[width=\linewidth]{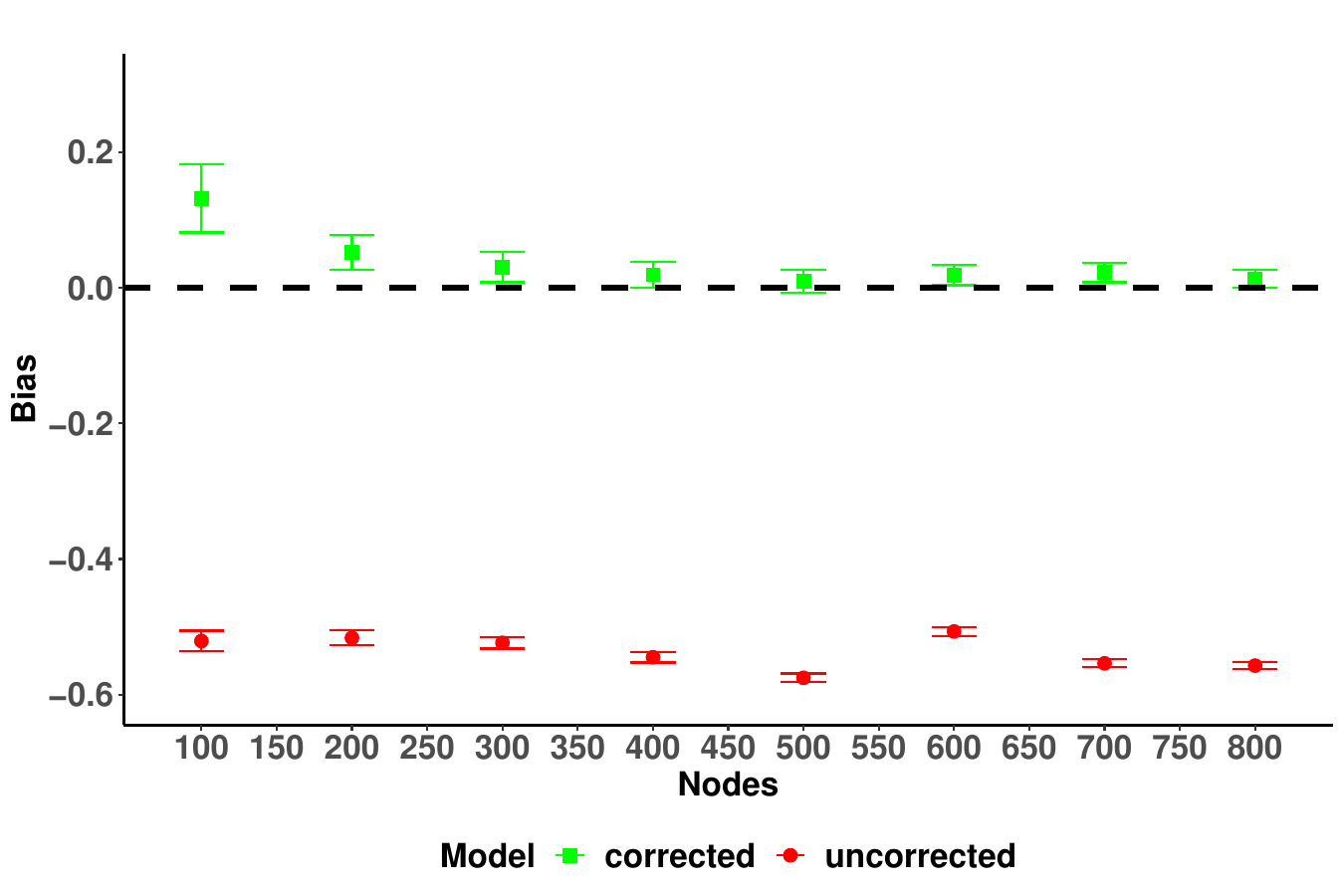}
    \caption*{b. $\hat{\beta_{2}}$} 
  \end{minipage}
  \begin{minipage}[b]{0.5\linewidth}
    \centering
    \includegraphics[width=\linewidth]{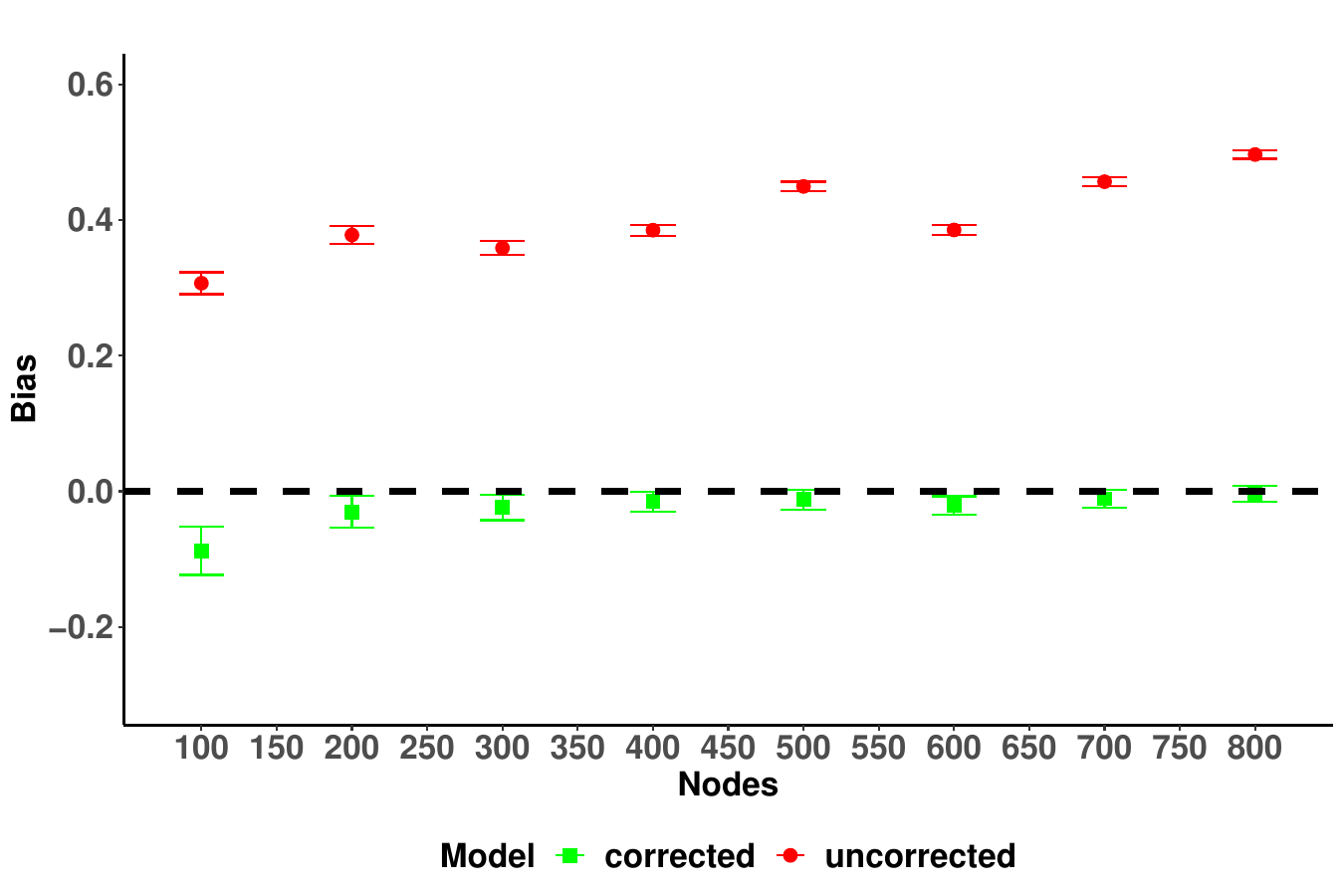} 
        \caption*{c. $\hat{\gamma_{1}}$} 
  \end{minipage}
   \begin{minipage}[b]{0.5\linewidth}
    \includegraphics[width=\linewidth]{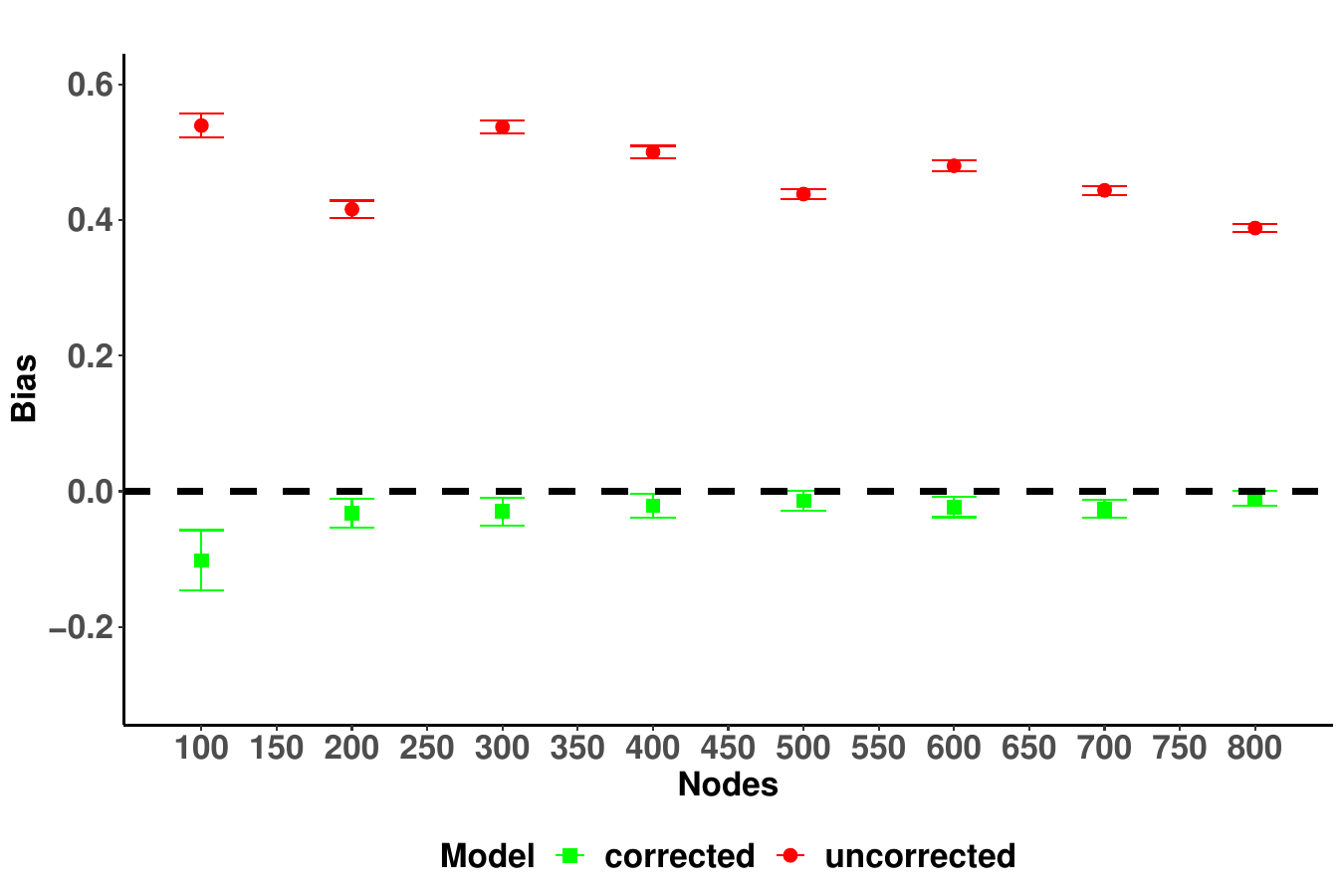}
    \caption*{d. $\hat{\gamma_{2}}$} 
  \end{minipage}
  \begin{minipage}{13.5 cm}{\footnotesize{Notes: These figures illustrate the average estimation bias and 1.96*SE for the mean bias over 300 simulations for the two error-prone covariates $\tilde{U}_1$ and $\tilde{U}_2$, and the two covariates measured without error $Z_1$ and $Z_2$.}}
\end{minipage} 
\end{figure}

\subsection{Covariate measurement error} We design a simulation study to illustrate how the proposed method can eliminate bias in parameter estimates due to measurement errors in some covariates. We consider 4 covariates $U=\{U_1, U_2\}, Z=\{Z_1, Z_2\}$ generated from a joint multivariate normal distribution $MVN(0,\Sigma_{X})$ with the covariance matrix $\Sigma_X$ having diagonal elements as $1.2$ and off-diagonal elements as $0.8$. In each replication, we generate an error-prone measurement of the first set of covariates $\tilde{U}$ by adding to $U$ an error generated from $MVN(0,\Sigma_{\xi})$ with $\Sigma_{\xi}$ having diagonal elements as $0.5$ and off-diagonal elements as $0.4$.

The observed network $A$ is generated from a stochastic block model with 4 communities with the $ 4\times 4$ block matrix of probabilities $\pi$ having $0.8$ as diagonal elements and $0.4$ as off-diagonal elements. The response $Y$ is then generated from a normal distribution according to the SAR model equation as follows:
\[
Y \sim N \left((I-\rho_0 L)^{-1}(U\beta_0 + Z \gamma_0), \sigma_0^2 (I-\rho_0 L)^{-1} (I-\rho_0 L)^{-1}\right).
\]
We set $\rho_0 = 0.4, \beta_0=(1,1), \gamma_0 = (1,1), \sigma_0 =1$. While the true data generating model uses the set of covariates $U,Z$, due to measurement error, we observe $\tilde{U},Z$ as our set of covariates. We vary the number of samples $n$ from $100$ to $800$ in increments of $100$ and perform 300 simulations for each $n$. We plot the average bias along with error bars representing 1.96*SE for average bias in estimating the coefficients associated with the two sets of covariates $U$ and $Z$ in Figure \ref{covariatebias}.
\begin{figure}[!htbp] 
  \caption{Mean Bias (+/- 1.96*SE) in $\hat{\beta},\hat{\gamma}$ varying the extent of error in $U$}
  \centering
  \label{incratio}
  \begin{minipage}[b]{0.5\linewidth}
    \centering
    \includegraphics[width=\linewidth]{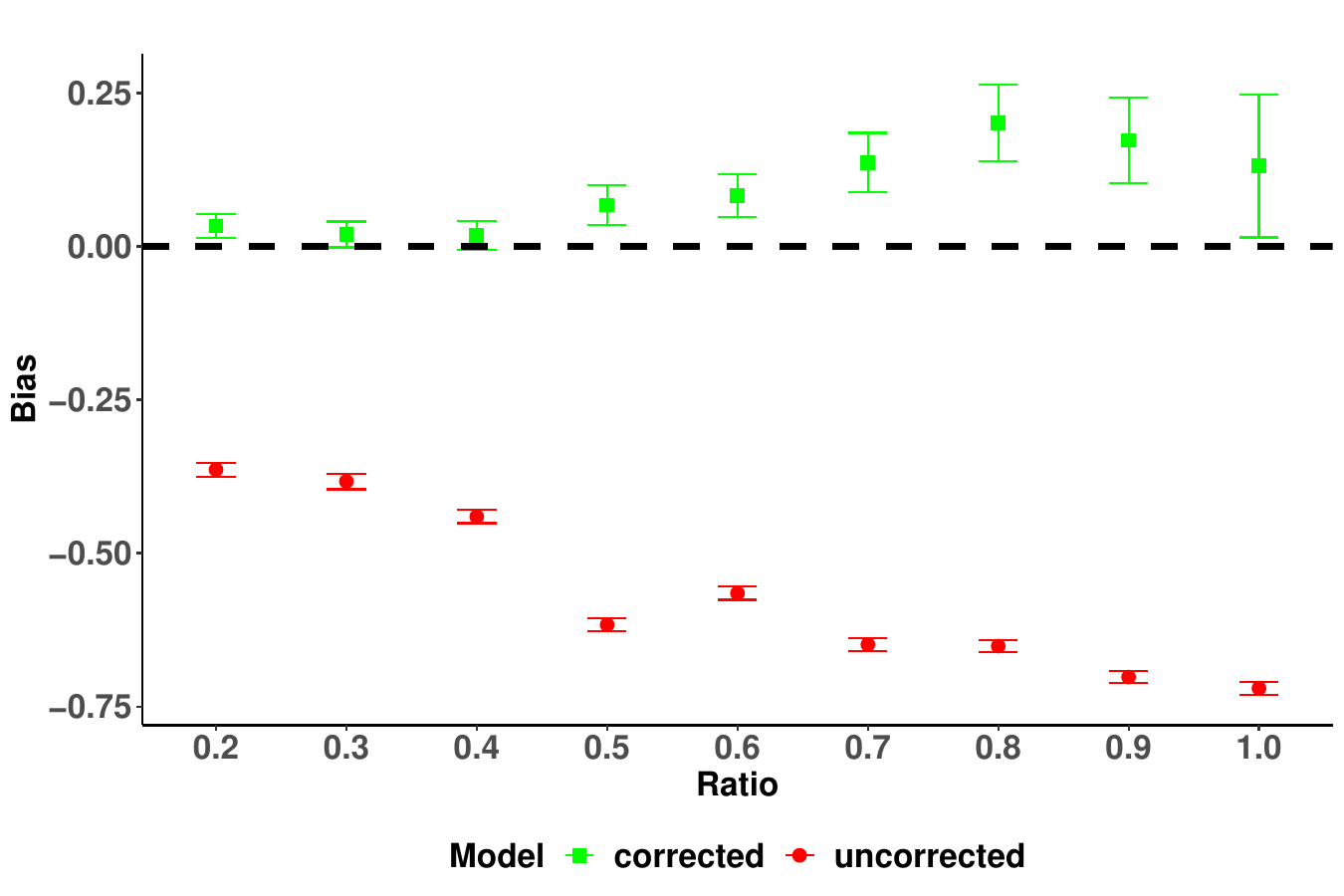} 
        \caption*{a. $\hat{\beta_{1}}$} 
  \end{minipage}
   \begin{minipage}[b]{0.5\linewidth}
    \includegraphics[width=\linewidth]{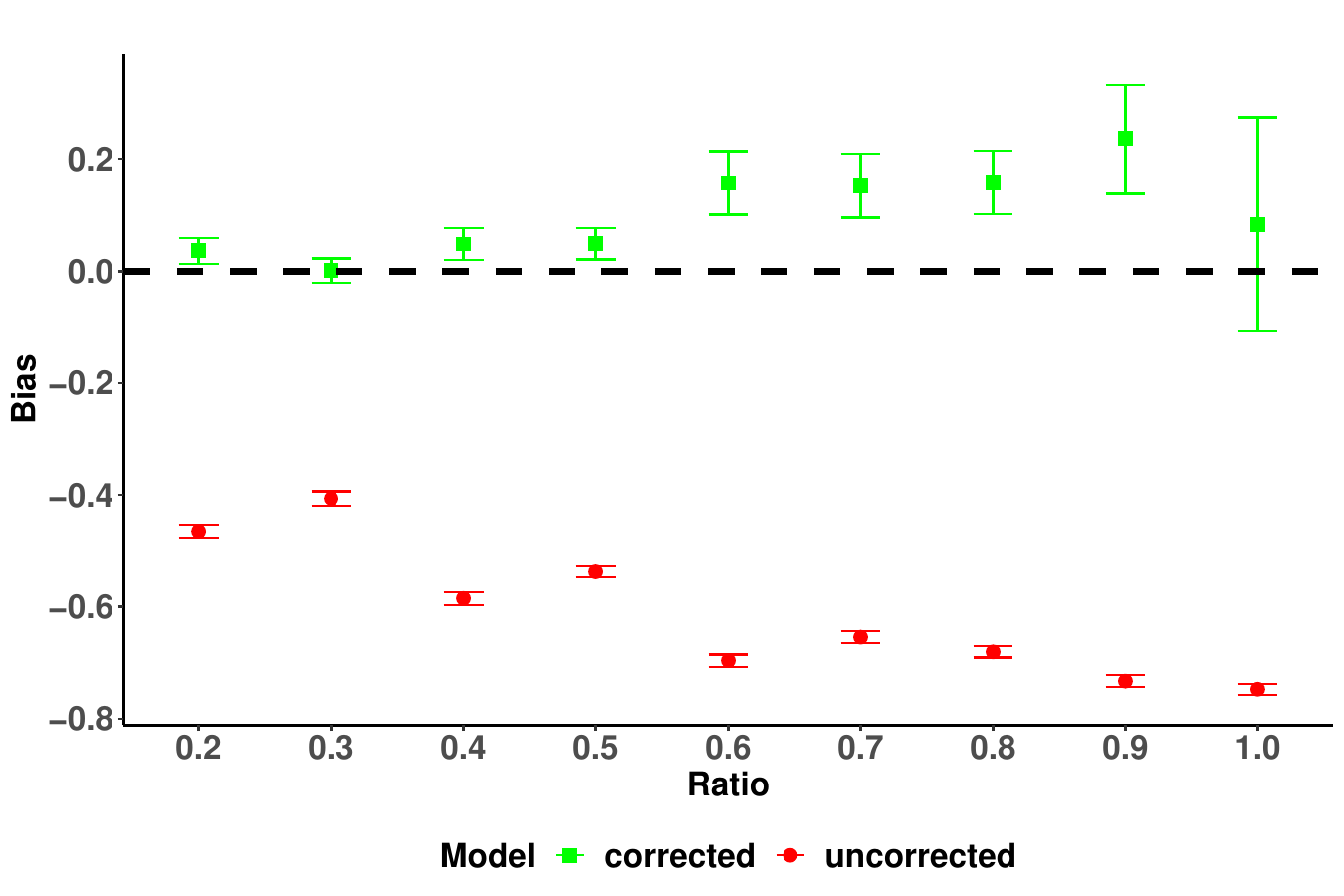}
    \caption*{b. $\hat{\beta_{2}}$} 
  \end{minipage}
  \begin{minipage}[b]{0.5\linewidth}
    \centering
    \includegraphics[width=\linewidth]{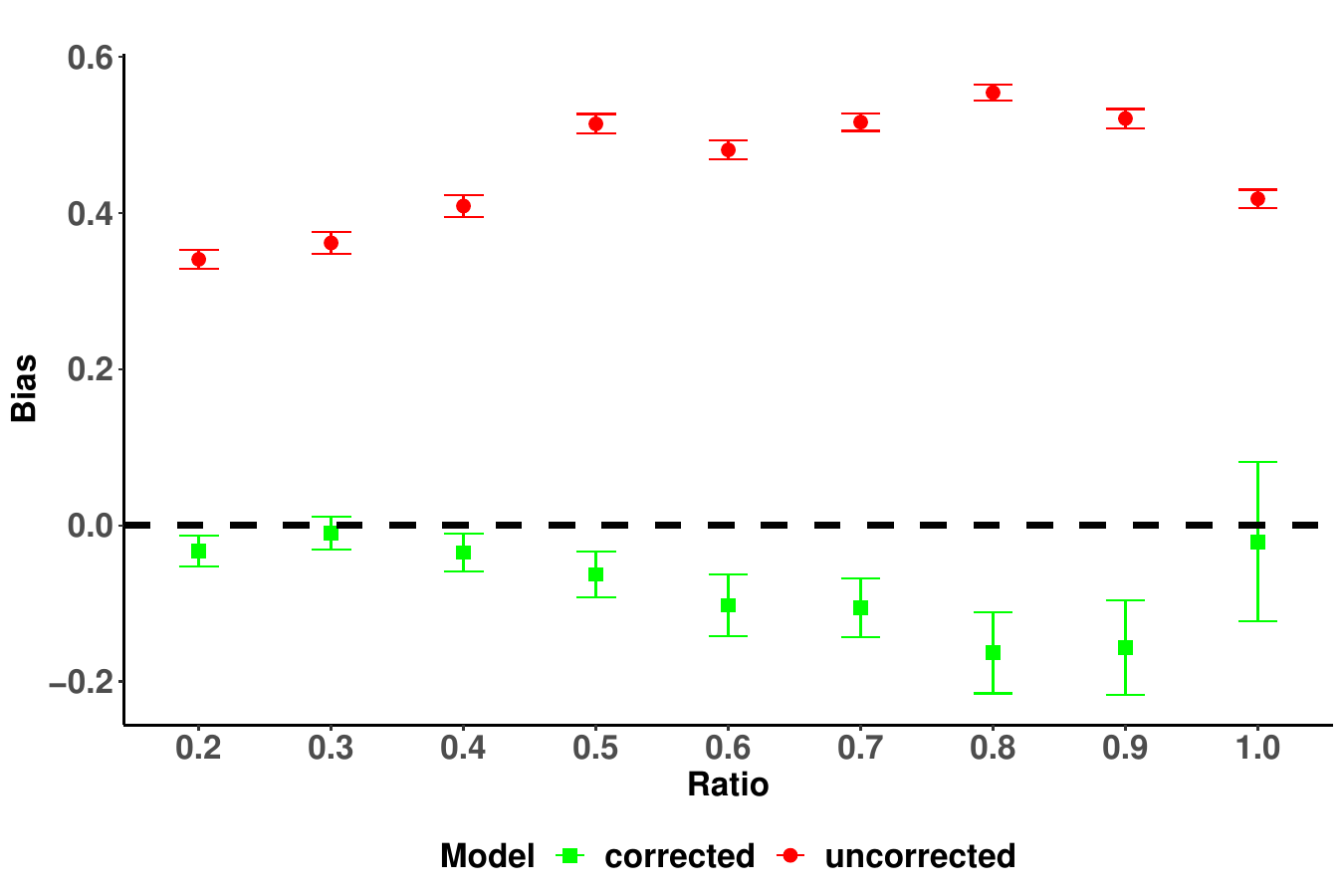} 
        \caption*{c. $\hat{\gamma_{1}}$} 
  \end{minipage}
   \begin{minipage}[b]{0.5\linewidth}
    \includegraphics[width=\linewidth]{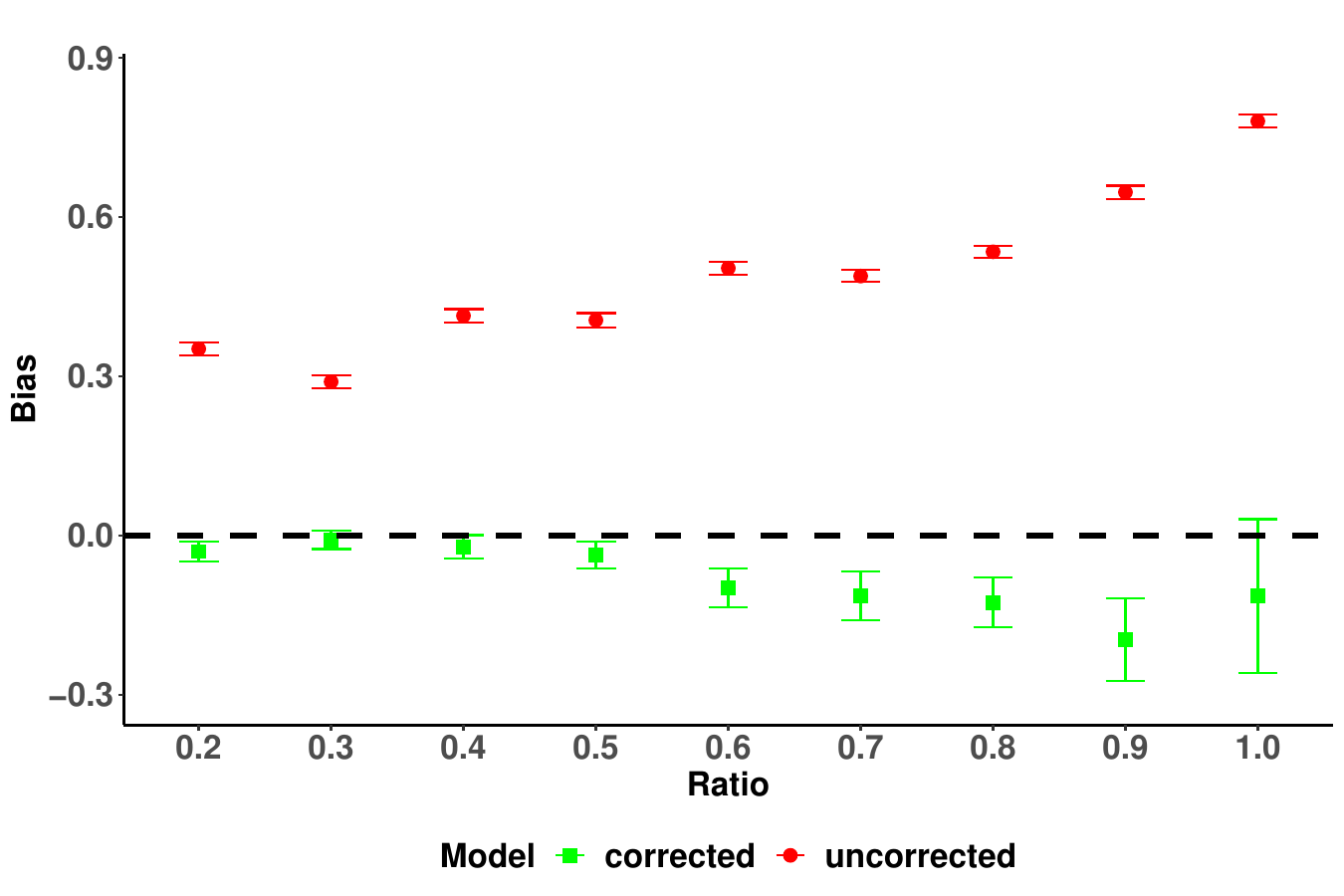}
    \caption*{d. $\hat{\gamma_{2}}$} 
  \end{minipage}
  \begin{minipage}{13.5 cm}{\footnotesize{Notes: These figures illustrate the average estimation bias and 1.96*SE for the mean bias over 300 simulations for the two error-prone covariates $\tilde{U}_1$ and $\tilde{U}_2$, and the two covariates measured without error $Z_1$ and $Z_2$. The number of nodes is kept fixed at 200 for all the simulation scenarios.}}
\end{minipage} 
\end{figure}
As expected from section 3.2, the parameter estimates for $U$ are negatively biased, and those for $Z$ are positively biased in the uncorrected method. The proposed correction methodology is successful in reducing bias rapidly and produces estimates with near zero bias, even for relatively small sample sizes.

The effectiveness of the proposed method is also dependent on the ``extent" of error in the measured proxy covariate from the true covariate, as seen in the asymptotic expansion in section 3.2.  We perform a second simulation to understand this phenomenon in finite samples. We set the number of nodes $n=200$. We keep $\Sigma_X$ and $\sigma_0$ as in the previous simulation. However, we make the matrix $\Sigma_{\xi} = \tau*\Sigma_0$, where $\Sigma_0$ has 1 in the diagonal and 0.8 in the off-diagonal, while $\tau$ is varied from $0.2$ to $1$ in increments of 0.1. We expect that for smaller values of $\tau$, since the extent of error in $\tilde{U}$ from $U$ is smaller relative to $\Sigma_X$, we will have a better performance of both the uncorrected and the corrected estimator. The performance of the estimator is presented in Figure \ref{incratio}. We see that the estimator performance in recovering the population parameter improves as we reduce the extent of the error in $\tilde{U}$. 
\begin{figure}[!htbp] 
  \caption{Comparison of Standard Error with Standard Deviation of Estimates}
  \centering
  \label{covariatese}
  \begin{minipage}[b]{0.5\linewidth}
    \centering
    \includegraphics[width=\linewidth]{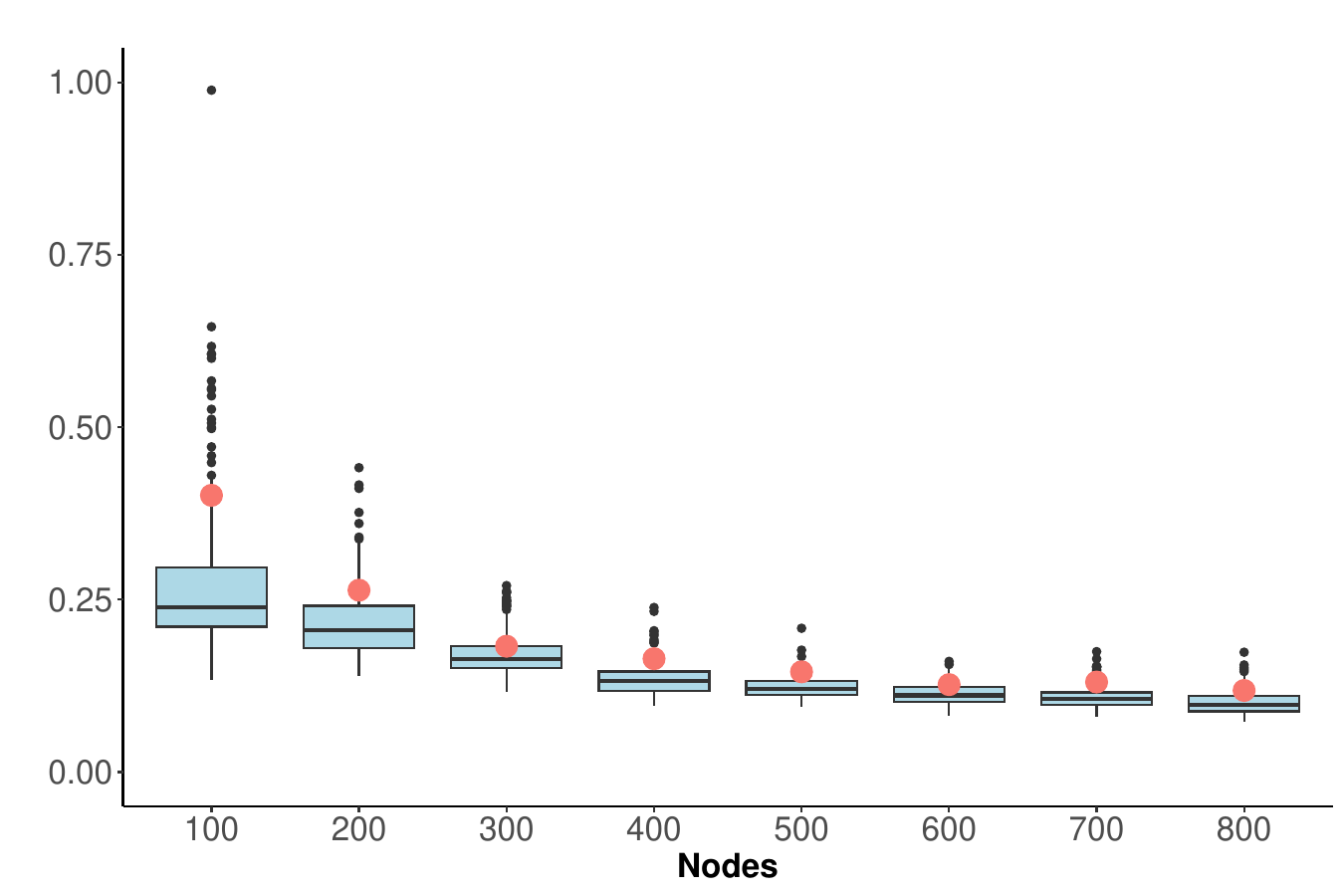} 
        \caption*{a. $\hat{\beta_{1}}$} 
  \end{minipage}
   \begin{minipage}[b]{0.5\linewidth}
    \includegraphics[width=\linewidth]{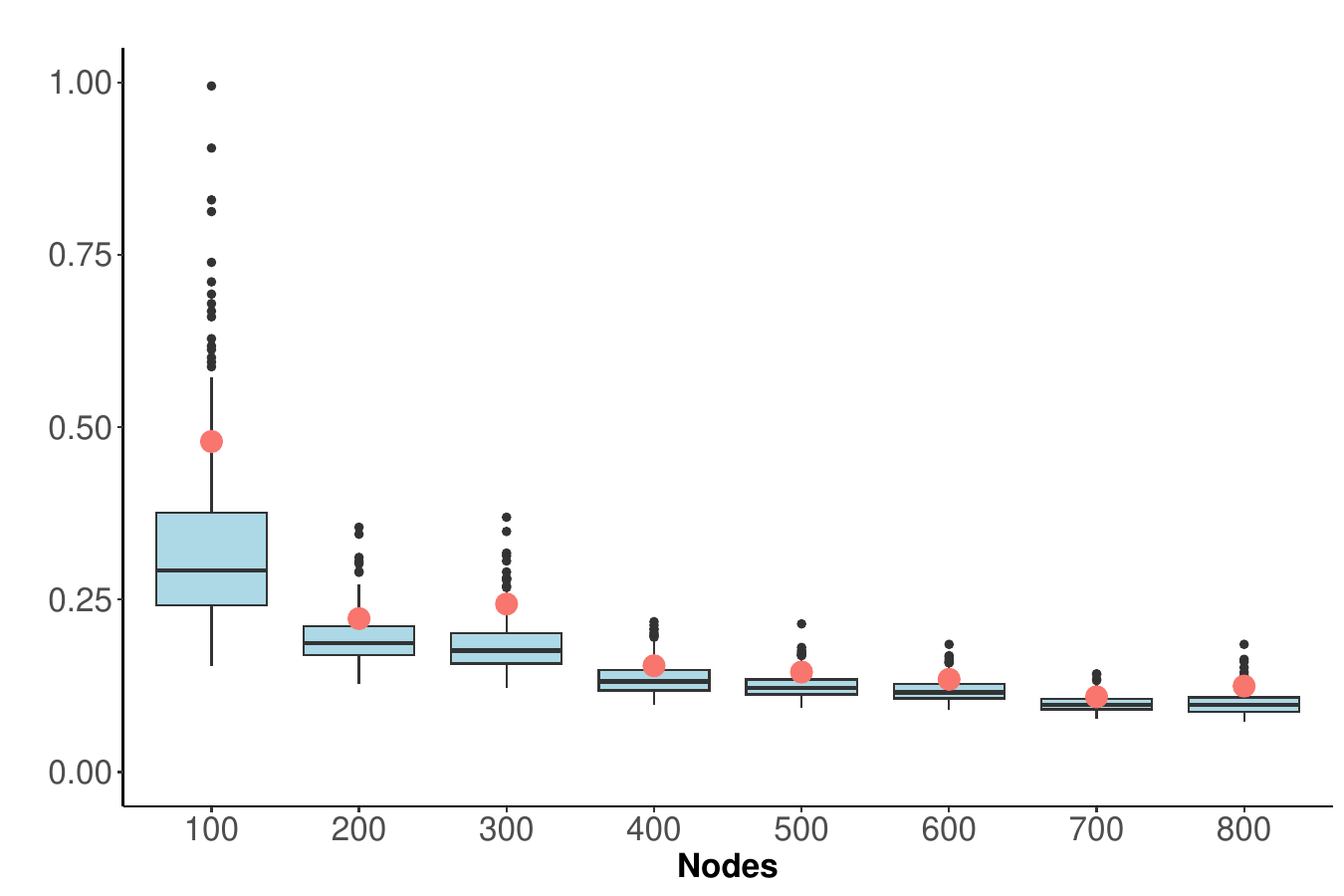}
    \caption*{b. $\hat{\beta_{2}}$} 
  \end{minipage}
  \begin{minipage}[b]{0.5\linewidth}
    \centering
    \includegraphics[width=\linewidth]{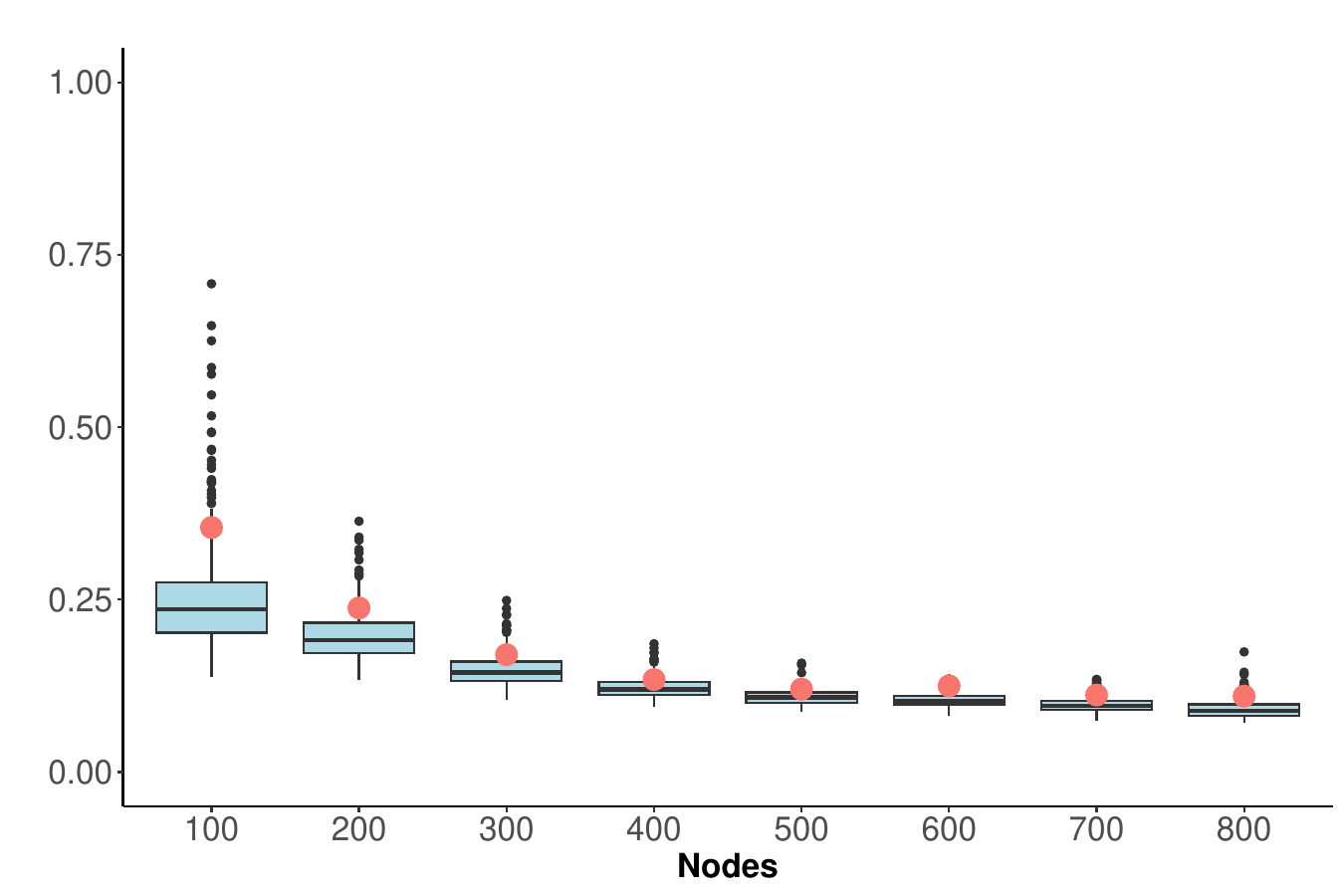} 
        \caption*{c. $\hat{\gamma_{1}}$} 
  \end{minipage}
   \begin{minipage}[b]{0.5\linewidth}
    \includegraphics[width=\linewidth]{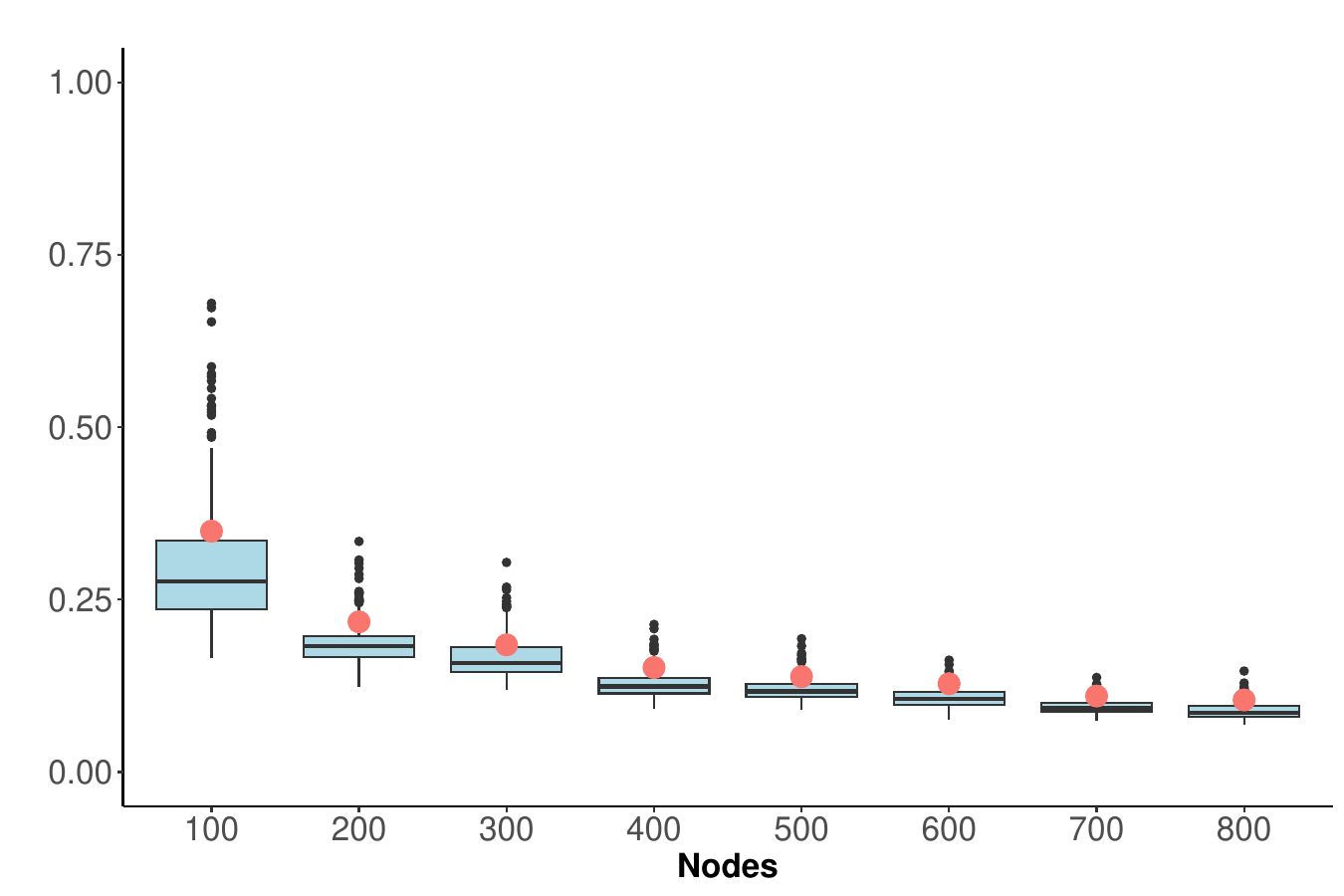}
    \caption*{d. $\hat{\gamma_{2}}$} 
  \end{minipage}
  \begin{minipage}{13.5 cm}{\footnotesize{Notes: These figures illustrate the comparison between standard error calculated using the close form solution derived in theorem \ref{asymptotic} with the empirical standard deviation (red points) of the parameter estimates over 300 simulations. }}
\end{minipage} 
\end{figure}
Finally, we verify the accuracy of our estimate of standard error from Theorem \ref{asymptotic}. For this purpose, Figure \ref{covariatese} displays box plots of the estimated standard errors over the 300 repetitions with the increasing nodes ($n$). We compare the boxplots with the estimated standard deviation of the estimates of parameters over the 300 repetitions (red dots). We notice that the estimated standard errors (SE) closely match the estimated sampling standard deviations with increasing $n$, validating the result in Theorem \ref{asymptotic}. On the other hand, Figure \ref{incratiose} shows this comparison as we increase the error in $U$. We see, especially for the cases with less error in $U$, that standard error estimates are close to the estimated sampling standard deviation. Comparing figures in Panel (a) and (b) shows that the estimated standard error better estimates the standard deviation of the estimates as we increase nodes from 200 to 500, similar to the pattern in Figure \ref{covariatese}. We provide these comparisons for $\beta_{1},\beta_{2}$ and $\gamma_{2}$ in the figure \ref{incratioseappendix} in the Appendix.

\begin{figure}[!htbp] 
  \caption{Comparing the standard error with standard deviation of the estimates}
  \centering
  \label{incratiose}
   \begin{minipage}[b]{0.5\linewidth}
    \includegraphics[width=\linewidth]{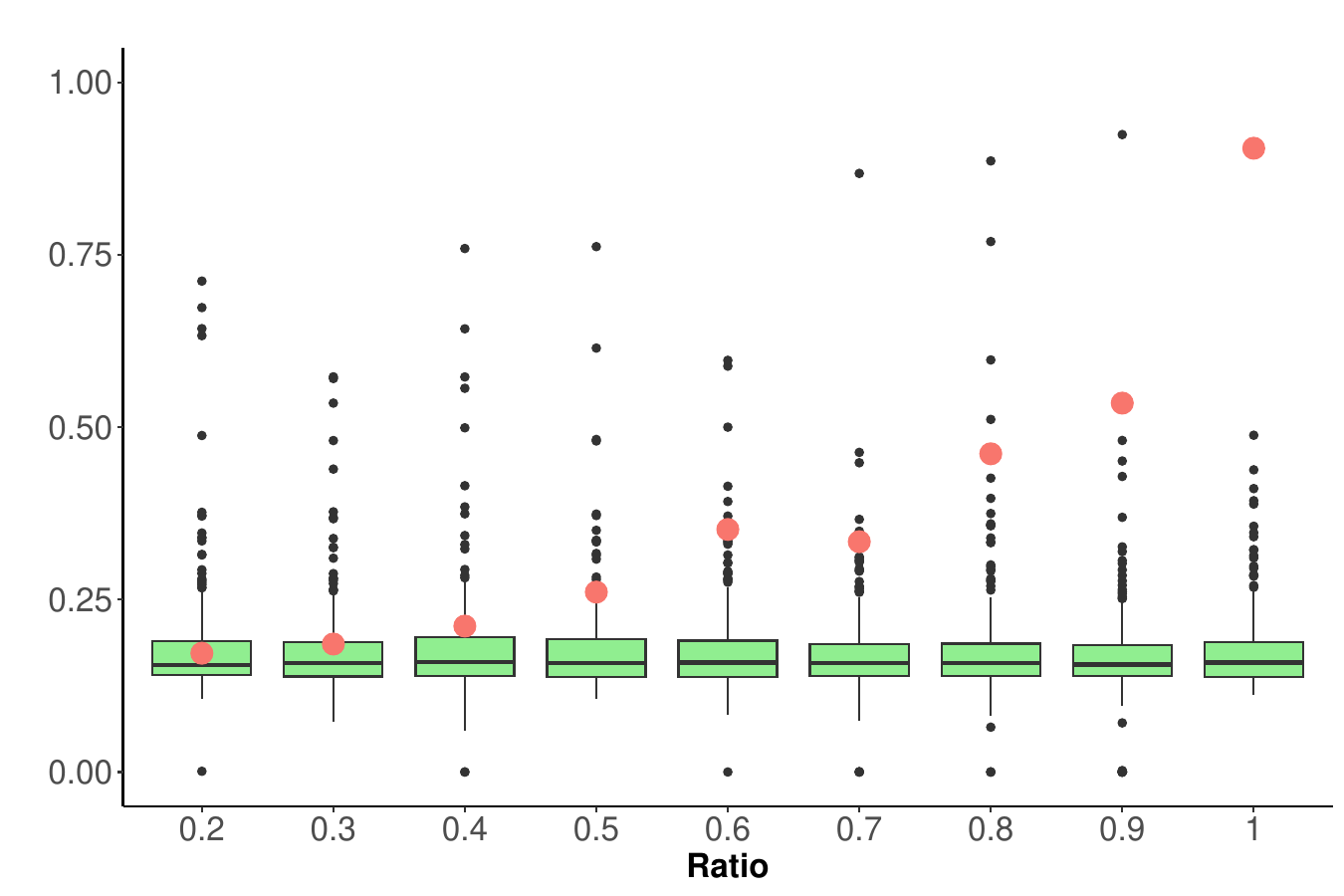}
    \caption*{a. $\hat{\gamma_{1}}$} 
  \end{minipage}
     \begin{minipage}[b]{0.5\linewidth}
    \includegraphics[width=\linewidth]{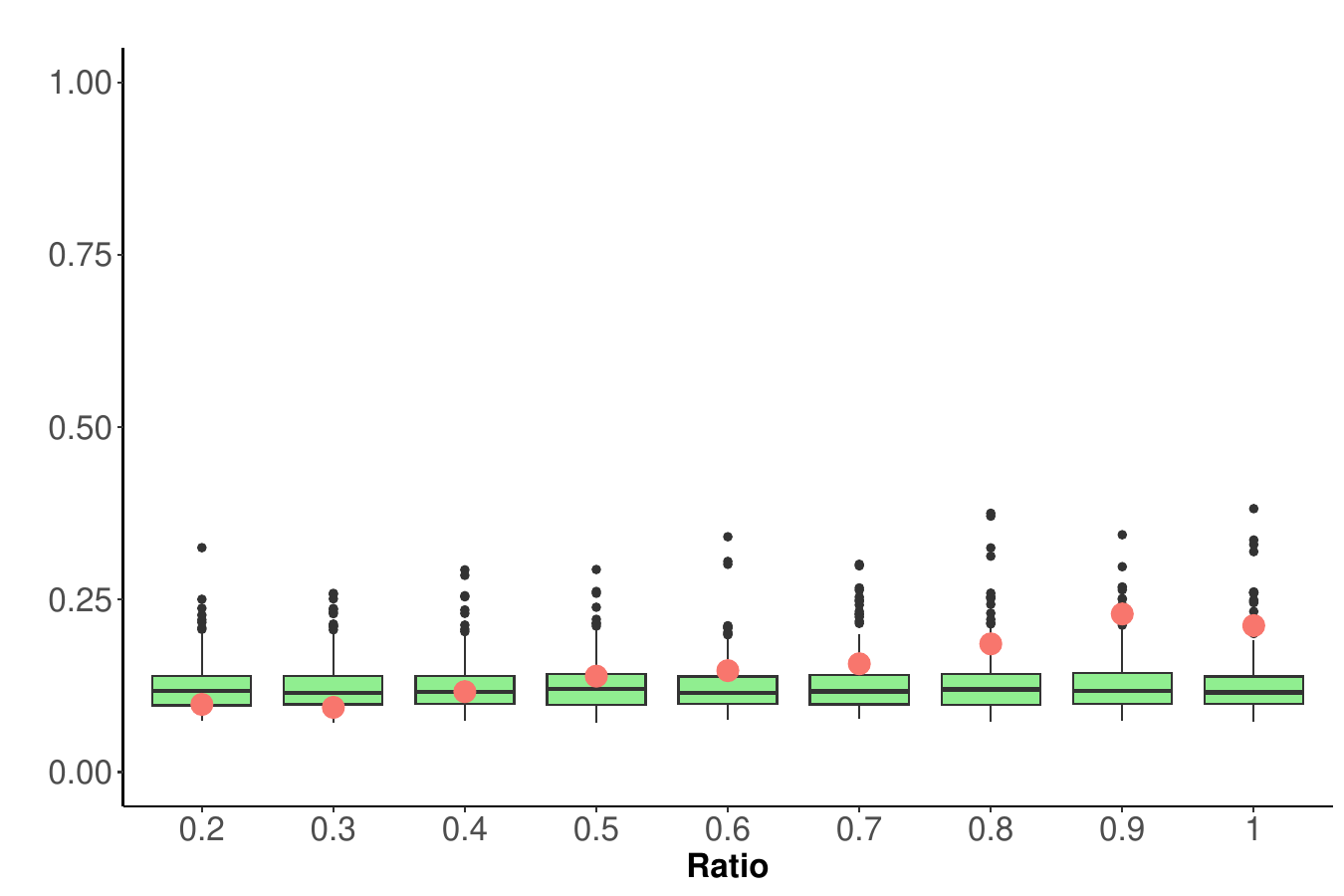}
    \caption*{b. $\hat{\gamma_{1}}$} 
  \end{minipage}
  \begin{minipage}{13.5 cm}{\footnotesize{Notes: These figures illustrate the standard error using the closed form solution provided in Theorem \ref{asymptotic} over 300 simulations. The number of nodes is kept at 200 for all the simulation scenarios in Panel (a). The distribution of standard errors is compared to the empirical standard deviation of the estimates highlighted in red. The number of nodes is increased to 500 for Panel (b) figure.}}
\end{minipage} 
\end{figure}

\subsection{Network homophily correction} We also design a simulation study to verify the performance of the proposed bias correction method along with latent homophily adjustment in terms of providing unbiased estimates of the network influence parameter. We generate the networks from a Stochastic Block Model (SBM) with an increasing number of nodes $n=\{50,  75, 100, 125, 150, 200$, $250, 300, 400, 500, 600\}$, a fixed number of latent factors at $d=2$, and the number of communities $k=4$. The matrix of probabilities $P$ is generated as $P=UU^T$, where the matrix $U_{n \times 2}$ is generated such that it has only 4 unique rows (with $k \geq d$). The resulting block matrix of probabilities is $\begin{pmatrix}
  0.53 &  0.19 &  0.18 & 0.45\\
 0.19 &  0.37 &  0.14 &  0.35 \\
 0.18 &  0.14 & 0.08 & 0.20 \\
0.45 & 0.35  & 0.20 & 0.50 \\
\end{pmatrix}$. The network edges $A_{ij}$ are generated independently from Bernoulli distribution with parameters $P_{ij}$. The covariate matrix $Z$ is correlated with the matrix $U$, and response $Y$ is generated from a multivariate normal distribution as follows:
\[
Y \sim N \left((I-\rho_0 L)^{-1}(U\beta_0 + Z \gamma_0), \sigma_0^2 (I-\rho_0 L)^{-1} (I-\rho_0 L)^{-1}\right).
\]

\begin{figure}[!h] 
  \caption{Mean Bias (+/- 1.96*SE) in network influence parameter $\hat{\rho}$ and accuracy of standard error computation}
  \centering
  \label{fig:simulation11}
  \begin{minipage}[b]{0.5\linewidth}
    \centering
    \includegraphics[width=\linewidth]{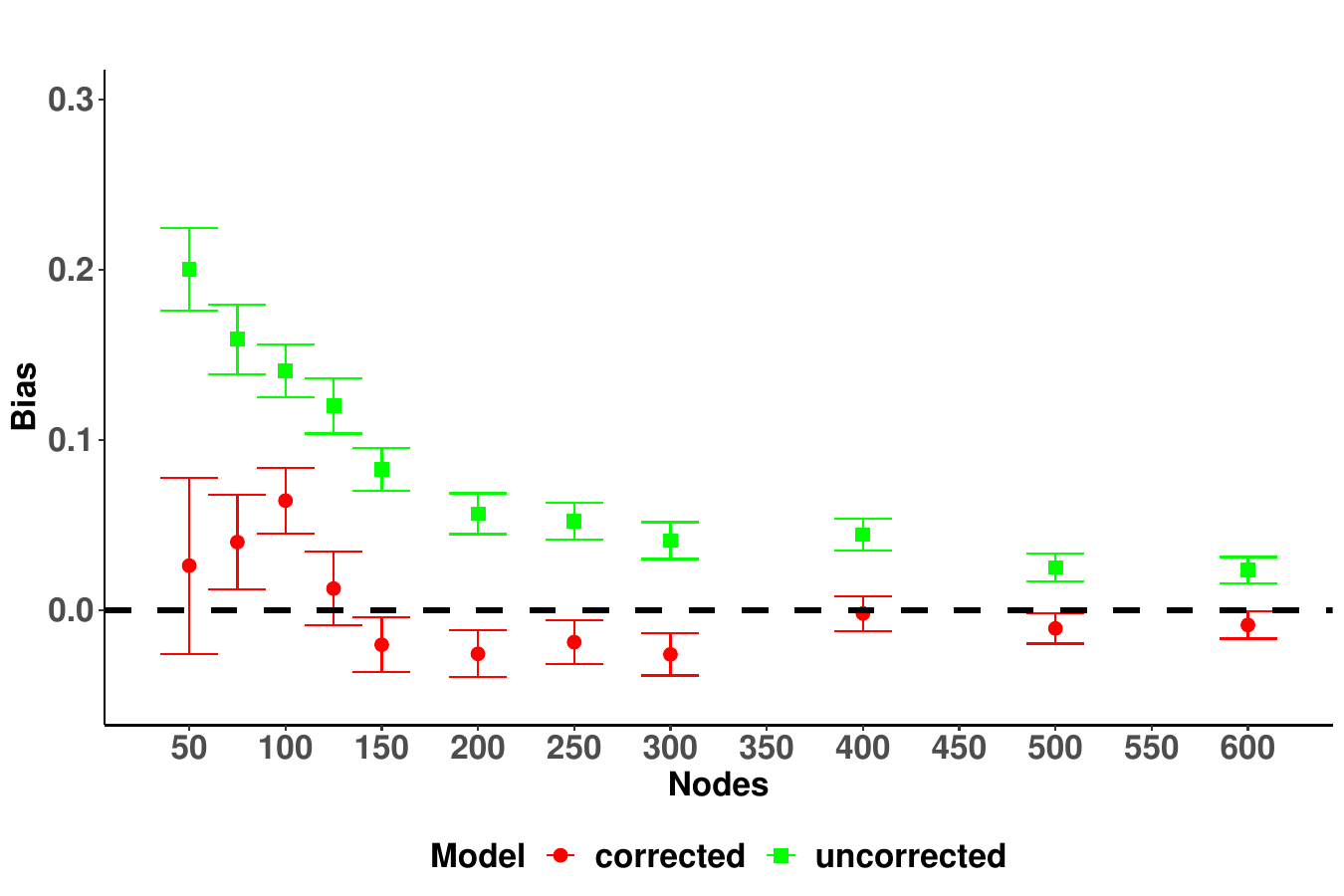} 
        \caption*{a. $\hat{\rho}$} 
  \end{minipage}
  \begin{minipage}{13.5 cm}{\footnotesize{Notes: Panel (a) shows a comparison of estimates of the network influence parameter $\rho$ from the model with latent factors but no bias correction (uncorrected), and the model with latent factors and bias correction (corrected). The points represent mean bias, and the error bars represent 1.96*SE for mean bias. }}
\end{minipage} 
\end{figure}

We set $\rho_0 = 0.4$, $\beta_0 = (1,2)$, $\sigma_0=0.8$ and $\gamma_0 = (0.2,-0.3)$. We compare two methods (a) latent factors estimated from SBM but without any bias correction (uncorrected model) and (b) estimated latent factor with bias correction (corrected model), in terms of bias of estimating the network influence parameter $\rho$. In the SAR context, $L$ is uncorrelated with $\tilde{X}$ that comprises $\tilde{U},Z$. However, in the network setting where the latent homophily factors are estimated from the network, they are highly correlated with $L$ and consequently with $LY$. Therefore, the extent of the bias in $\hat{\rho}$ is much more pronounced in the network homophily setting than in the SAR covariate setting. Here, we illustrate the extent of improvement our method provides over the uncorrected estimator for $\rho$. 

Panel (a) in Figure \ref{fig:simulation11} shows a comparison of the estimates of the network influence parameter from the two methods. It illustrates that the proposed measurement error bias-corrected estimator for the network effect parameter $\rho$ has less bias than the estimator with homophily control but no bias correction,  especially in small samples. The bias in the parameter estimate goes close to 0 more quickly with the measurement error correction. Therefore, the proposed bias correction methodology works well in correctly estimating peer effects in the presence of network homophily.

\begin{figure}[!h] 
  \caption{Mean Bias (+/- 1.96*SE) in $\hat{\beta},\hat{\gamma},\hat{\rho}$}
  \centering
  \label{3slssims}
  \begin{minipage}[b]{0.5\linewidth}
    \centering
    \includegraphics[width=\linewidth]{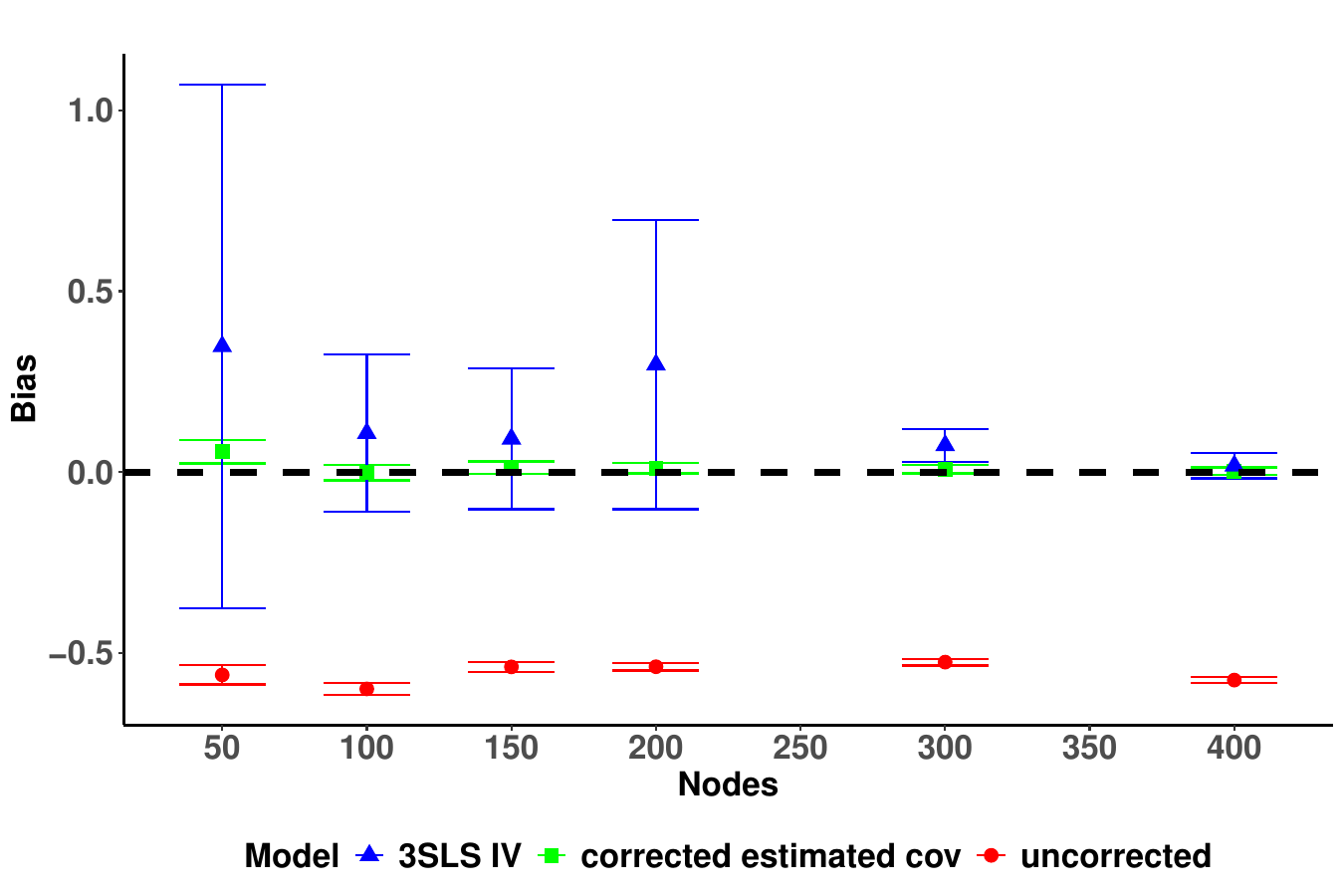} 
        \caption*{a. $\hat{\beta_{1}}$} 
  \end{minipage}
   \begin{minipage}[b]{0.5\linewidth}
    \includegraphics[width=\linewidth]{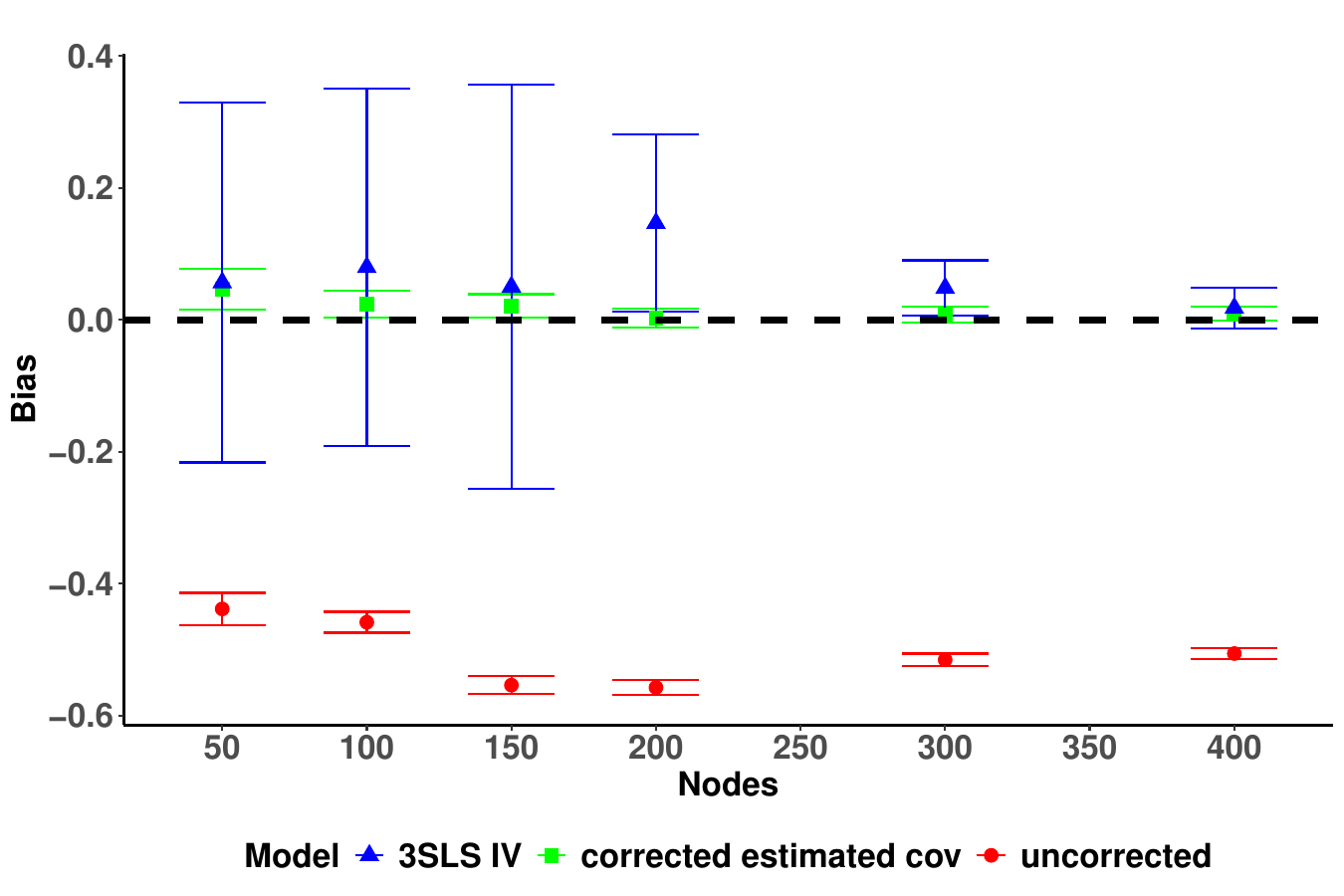}
    \caption*{b. $\hat{\beta_{2}}$} 
  \end{minipage}
  \begin{minipage}[b]{0.5\linewidth}
    \centering
    \includegraphics[width=\linewidth]{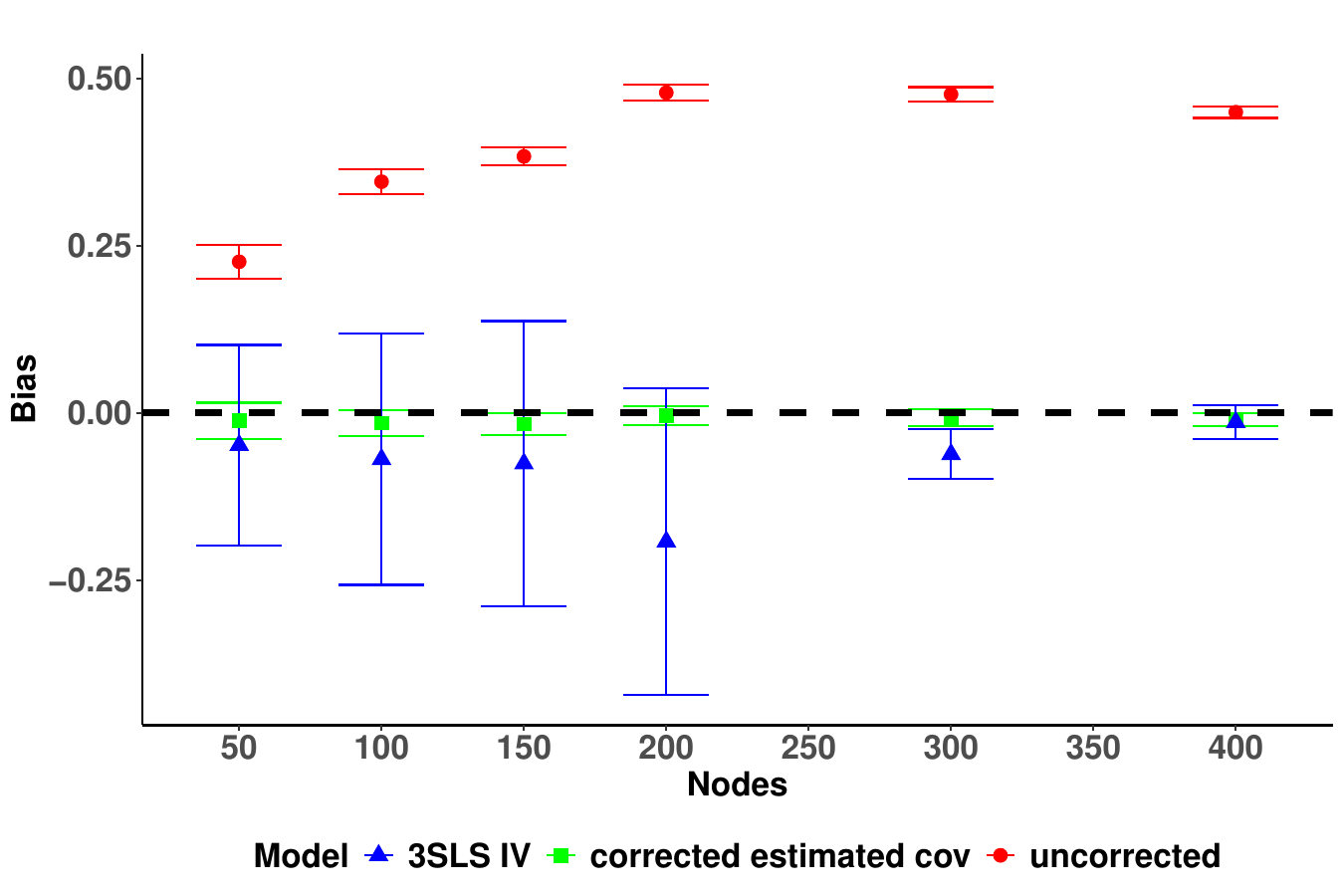} 
        \caption*{c. $\hat{\gamma_{1}}$} 
  \end{minipage}
   \begin{minipage}[b]{0.5\linewidth}
    \includegraphics[width=\linewidth]{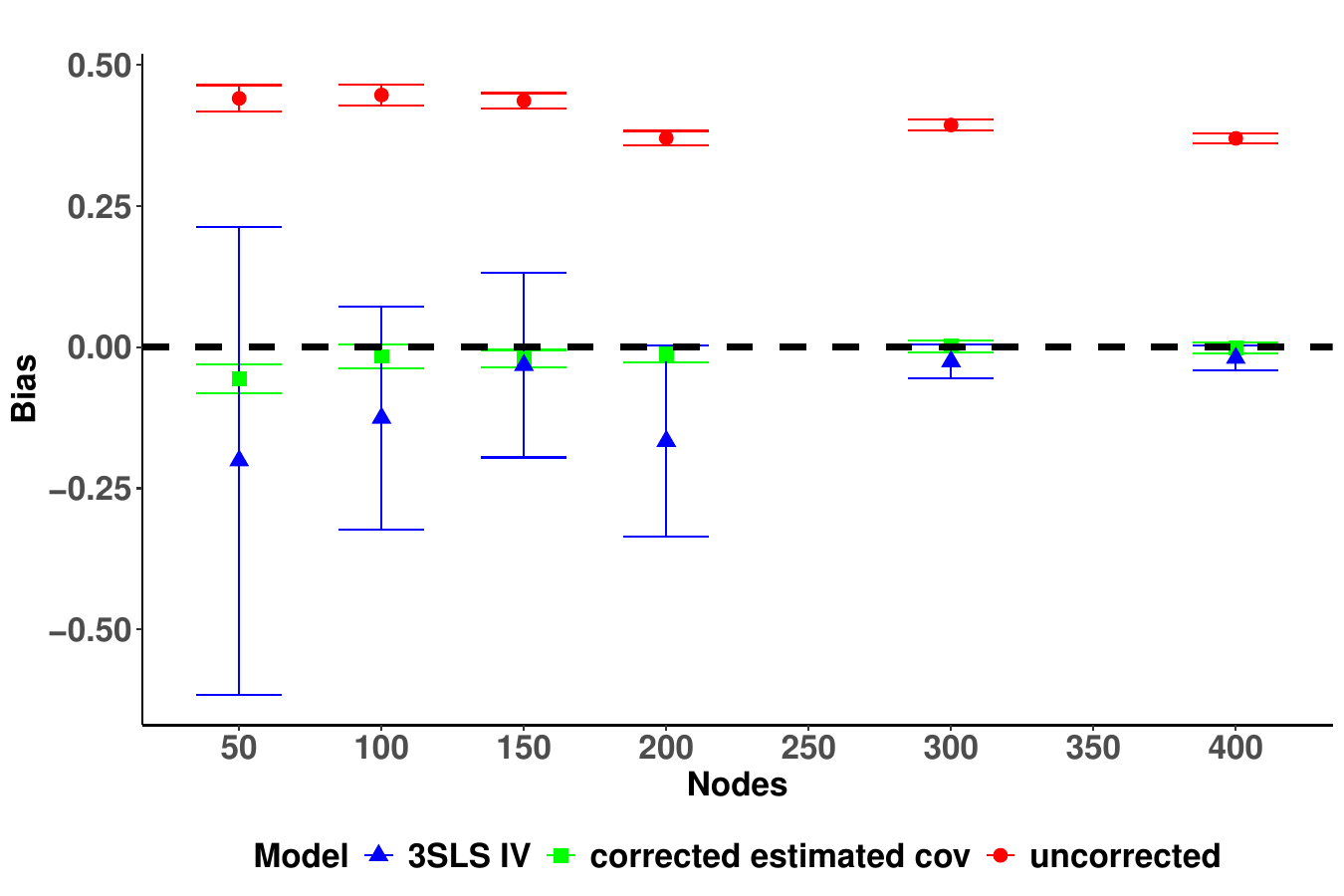}
    \caption*{d. $\hat{\gamma_{2}}$} 
  \end{minipage}
  \begin{center}
      \begin{minipage}[b]{0.5\linewidth}
    \includegraphics[width=\linewidth]{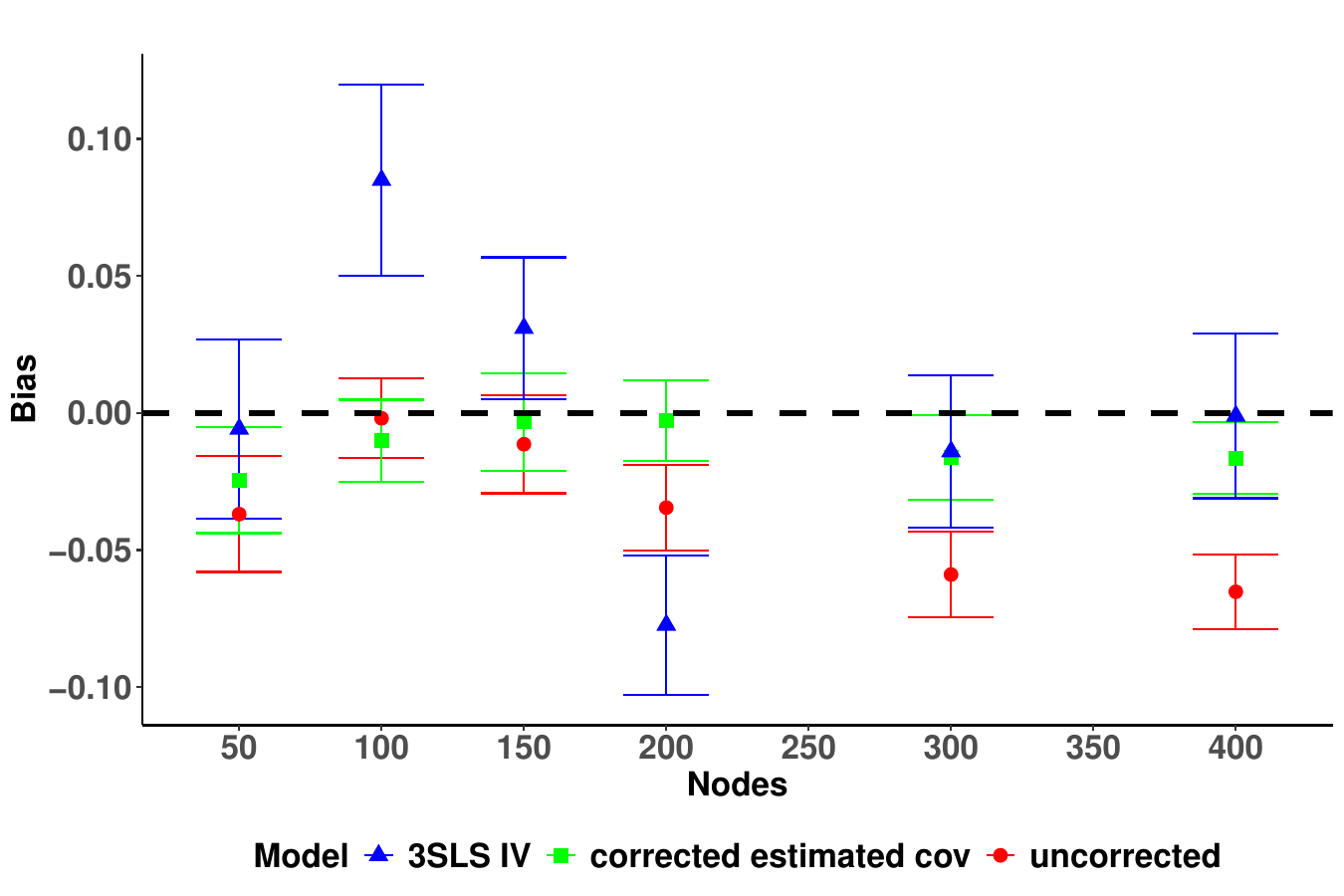}
    \caption*{e. $\hat{\rho}$} 
  \end{minipage}
  \end{center} 
  \begin{minipage}{13.5 cm}{\footnotesize{Notes: These figures compare the average estimation bias and 1.96*SE for the mean bias over 200 simulations for ME-QMLE in SAR model with estimated error covariance matrix and, uncorrected MLE of SAR, and the 3SLS instrumental variable method for SAR. }}
\end{minipage} 
\end{figure}
\subsection{Comparison with IV approach and estimated error covariance}
Next, we compare the proposed ME-QMLE with the 3SLS instrumental variables method proposed in \citep{luo2022estimation}. For the ME-QMLE we use the estimated covariance matrix of error which is fully data driven similar to the instrumental variable method. The setup is the same as the first simulation study on covariate measurement error, except we observe 4 replicate measurements $\tilde{U}$ from which we can estimate the covariance of the error $\Sigma_{\xi}$. As stated in section 4.1, we use the sample mean as the error-prone variable and estimate the covariance matrix of error using the formulae stated there. Next we also generate an instrumental variable that is correlated with $U$. We compare the performance of ME-QMLE with 3SLS with increasing sample size in Figure \ref{3slssims}. As the sample size increases both ME-QMLE and 3SLS estimators are able to provide unbiased parameter estimates for all parameters, while the uncorrected MLE continues to have a large bias. We note that the 3SLS estimator has a very large variance and is often heavily biased in small samples, especially for $n=50$ and $n=100$, while the ME-QMLE continues to have low variance and a bias close to 0. Therefore the ME-QMLE might have some advantages over the 3SLS for small samples. However, we also note that the performance of ME-QMLE depends on the availability of replicate measurements, while that of 3SLS depends on the availability of good instruments.

\section{Empirical Applications}
\label{coviddata}

\subsection{Conflict and Network}

In this section, analyze a publicly available conflict and social network data collected for the study in \cite{paluck2016changing} using the proposed estimators. The data comes from a randomized controlled trial in 56 schools to study the impact of a student-led intervention on conflict. As part of this study, pre-intervention, the authors collected a social network survey where the students in 5th, 6th, 7th, and 8th grade were asked to name up to 10 friends they interact with within the school. Additionally, the authors collected information on GPA and a set of rich covariates, capturing the socio-demographic information of the students.

The study sample consists of 24191 students out of which approximately 50\% were allocated to treatment and the remaining were part of the control group. To keep our analysis from being impacted by the intervention, we use the sample of the students in the control group. Table \ref{summarytableconflict} provides summary statistics on the variables of interest. The final sample comprises 7872 students in the control group for whom we could observe non-missing values for all the covariates and the outcome variable of interest. Our goal is to estimate peer influence on GPA. The methods developed in this paper allow us to disentangle peer effects from latent homophily. The estimation proceeds in two steps. In the first step, we use spectral embeddings on the friendship network to estimate latent homophily vectors. In the second step, we adjust for unobserved homophily using the estimated latent homophily in the outcome model to estimate the peer influence parameter.

The GPA varies between 0 and 4, with a mean of 3.19 and a standard deviation of 0.605. Panel (a) in Figure \ref{fig:gpa} shows the distribution of GPA. Panel (b) shows the density plots for GPA separately for high ($\geq$ mean of peer GPA) and low peer GPA (below the mean of peer GPA). We observe that the distribution of GPA is shifted to the right for those who have high GPA friends. The mean GPA for the students who have low GPA peers (high GPA peers) is 3.10 (3.27). In addition to peer GPA, we also control for several covariates in the regression, which have substantial explanatory power for the outcome variable of interest. About 25\% of the final sample have mothers with no college education. Panel (b) in Figure \ref{fig:gpacov} shows the density plots for GPA separately by mother's education. We see that there is a higher density of the high GPA students whose mothers had a college education. The survey also collected information on the family structure. We report the differences in the distribution of GPA separately for the students who reported to be living with a single parent. About 12\% of the students reported to be living with just their mother and 1.6\% reported to be living only with their father. We find that students reported to be staying with one parent have lower percentiles for GPA.

\begin{figure}[!htbp]
  \centering
   \caption{GPA and peer GPA}
    \label{fig:gpa}
\begin{minipage}[b]{0.5\linewidth}
    \centering
    \includegraphics[width=\linewidth]{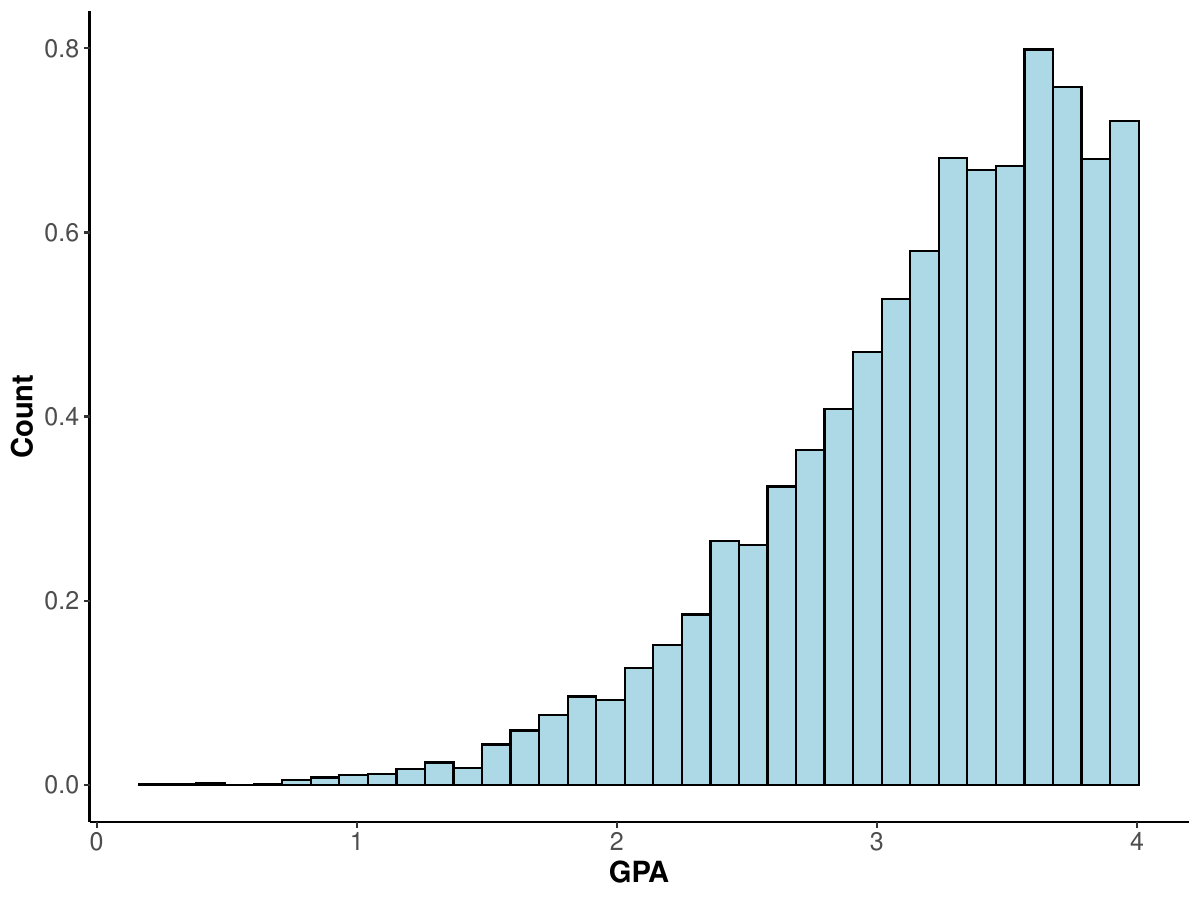} 
        \caption*{a. GPA} 
  \end{minipage}
   \begin{minipage}[b]{0.5\linewidth}
    \includegraphics[width=\linewidth]{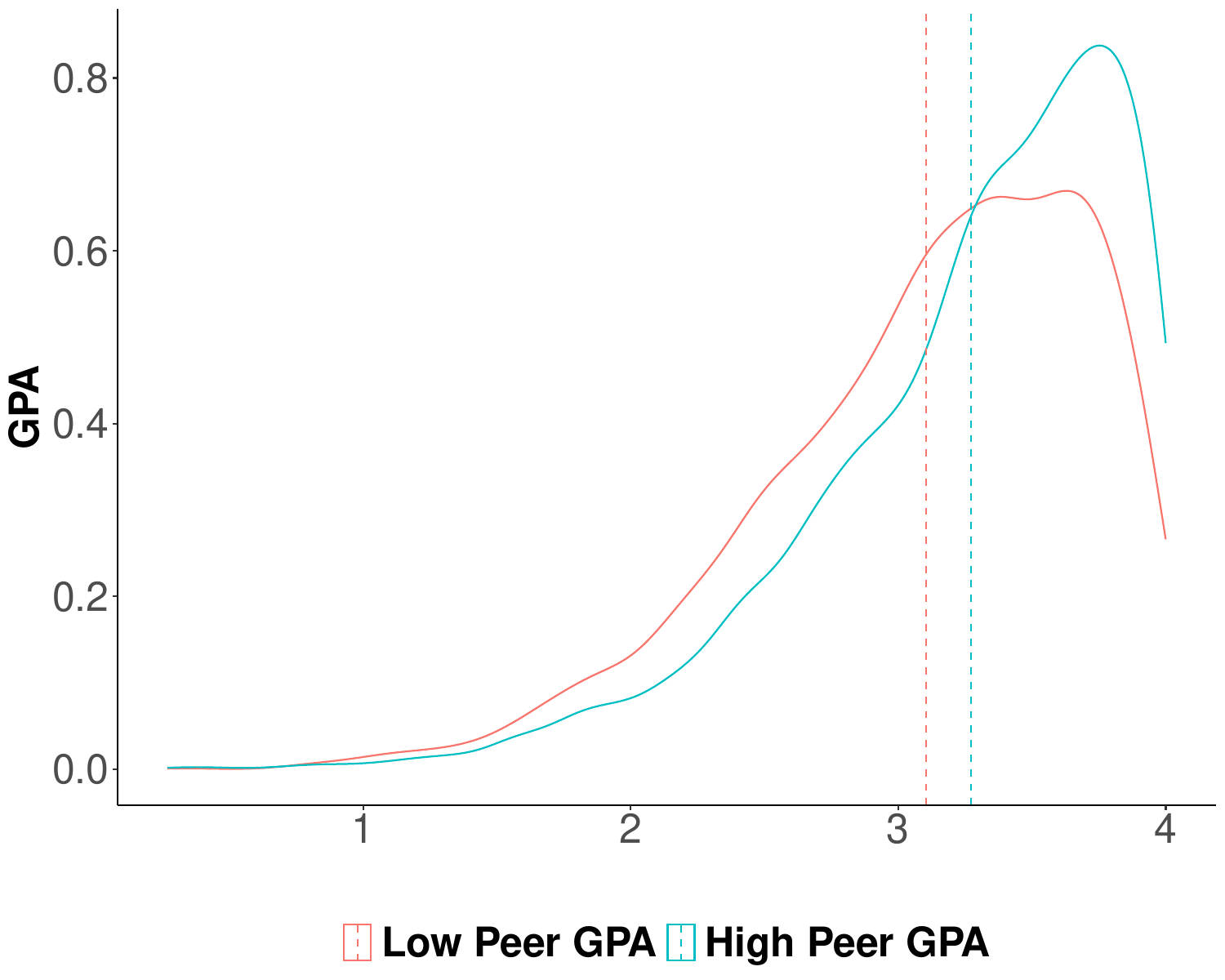}
    \caption*{b. Peer GPA} 
  \end{minipage}
 \begin{minipage}{13.5 cm}{\footnotesize{Notes: Panel (a) illustrates the distribution of GPA. Panel (b) shows the density plots for GPA separately for high and low peer GPA groups. The respective means are displayed by the dashed vertical lines.}}
\end{minipage} 
\end{figure}

\begin{figure}[!htbp]
  \centering
   \caption{GPA and other covariates}
    \label{fig:gpacov}
\begin{minipage}[b]{0.5\linewidth}
    \centering
    \includegraphics[width=\linewidth]{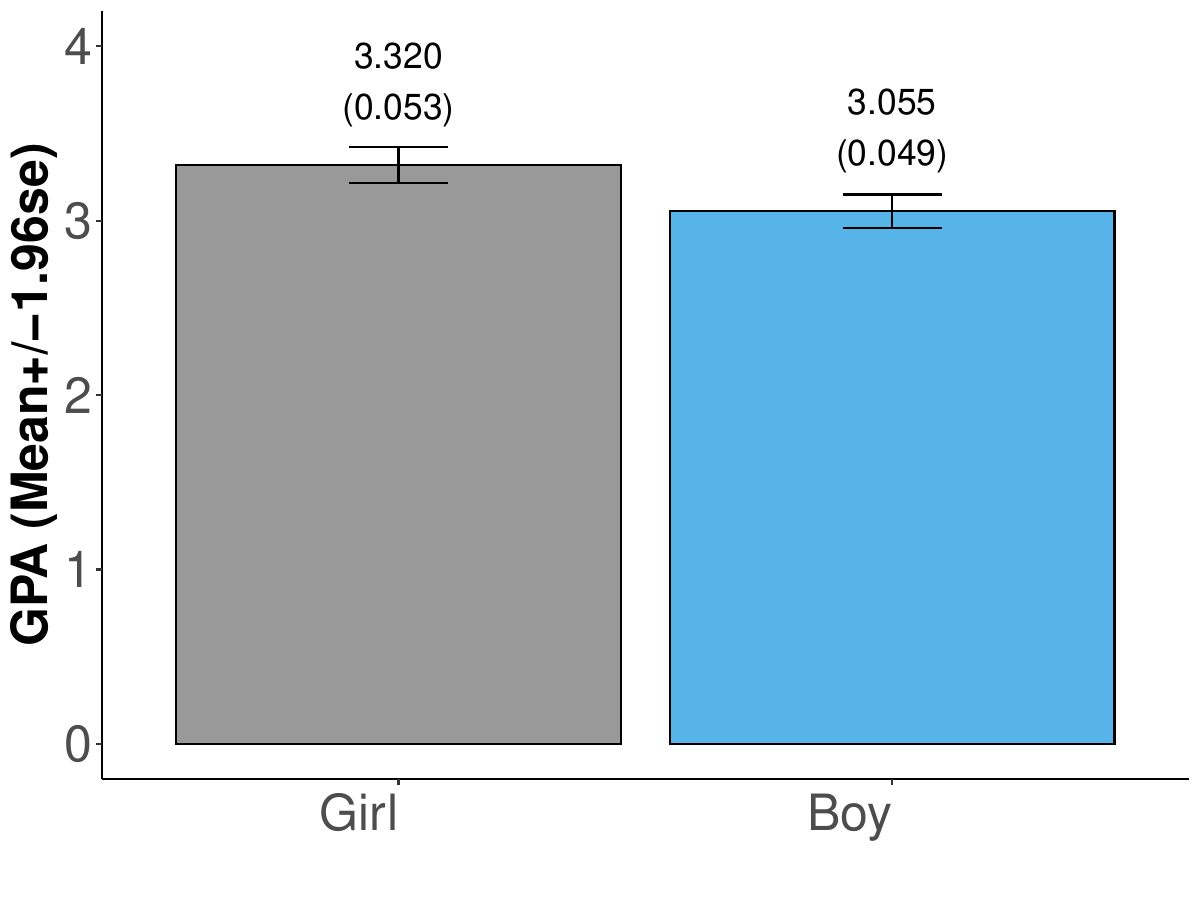} 
        \caption*{a. Gender} 
  \end{minipage}
   \begin{minipage}[b]{0.5\linewidth}
    \includegraphics[width=\linewidth]{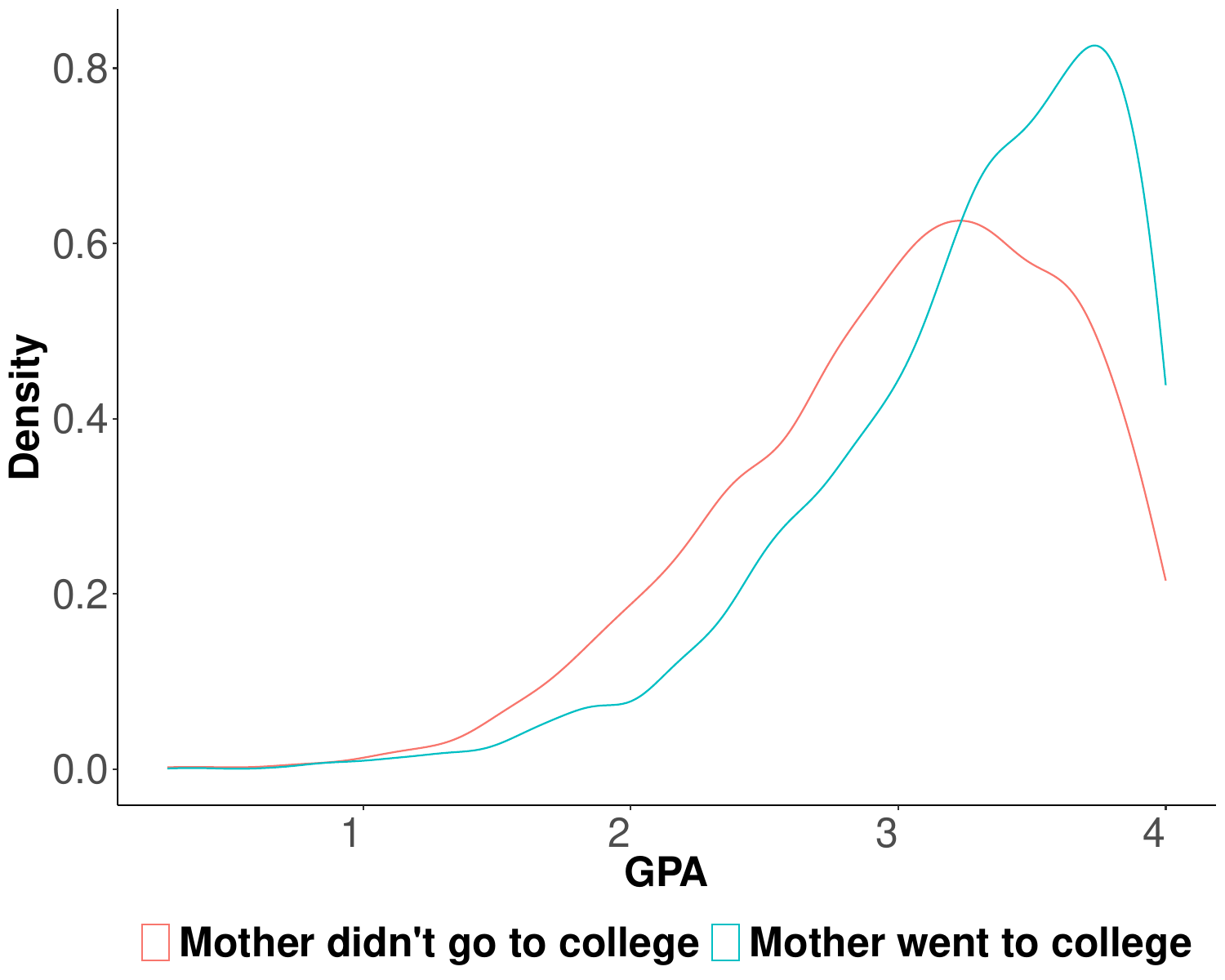}
    \caption*{b. Mother's Education} 
  \end{minipage}
  \begin{minipage}[b]{0.5\linewidth}
    \centering
    \includegraphics[width=\linewidth]{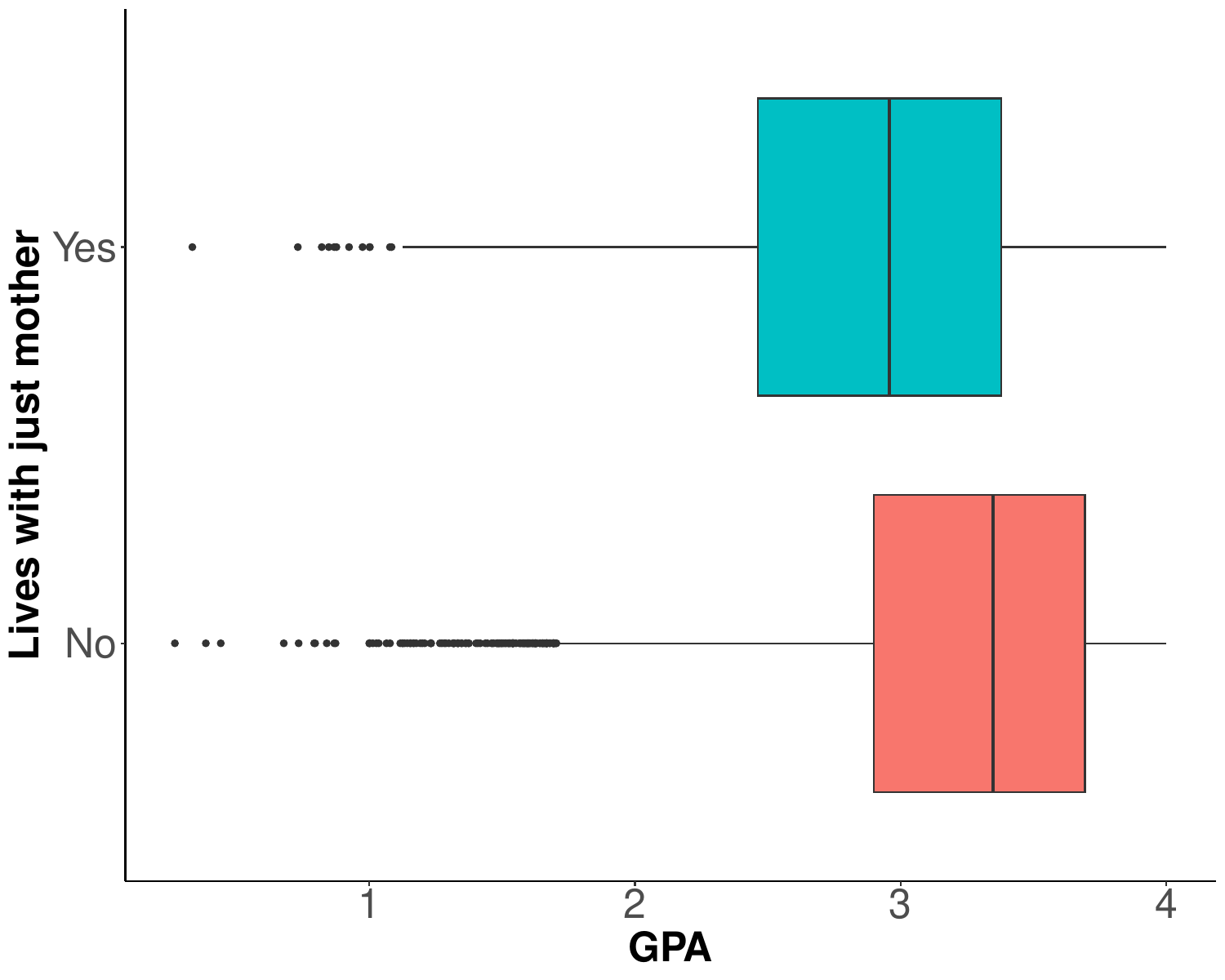} 
        \caption*{c. Single mother} 
  \end{minipage}
   \begin{minipage}[b]{0.5\linewidth}
    \includegraphics[width=\linewidth]{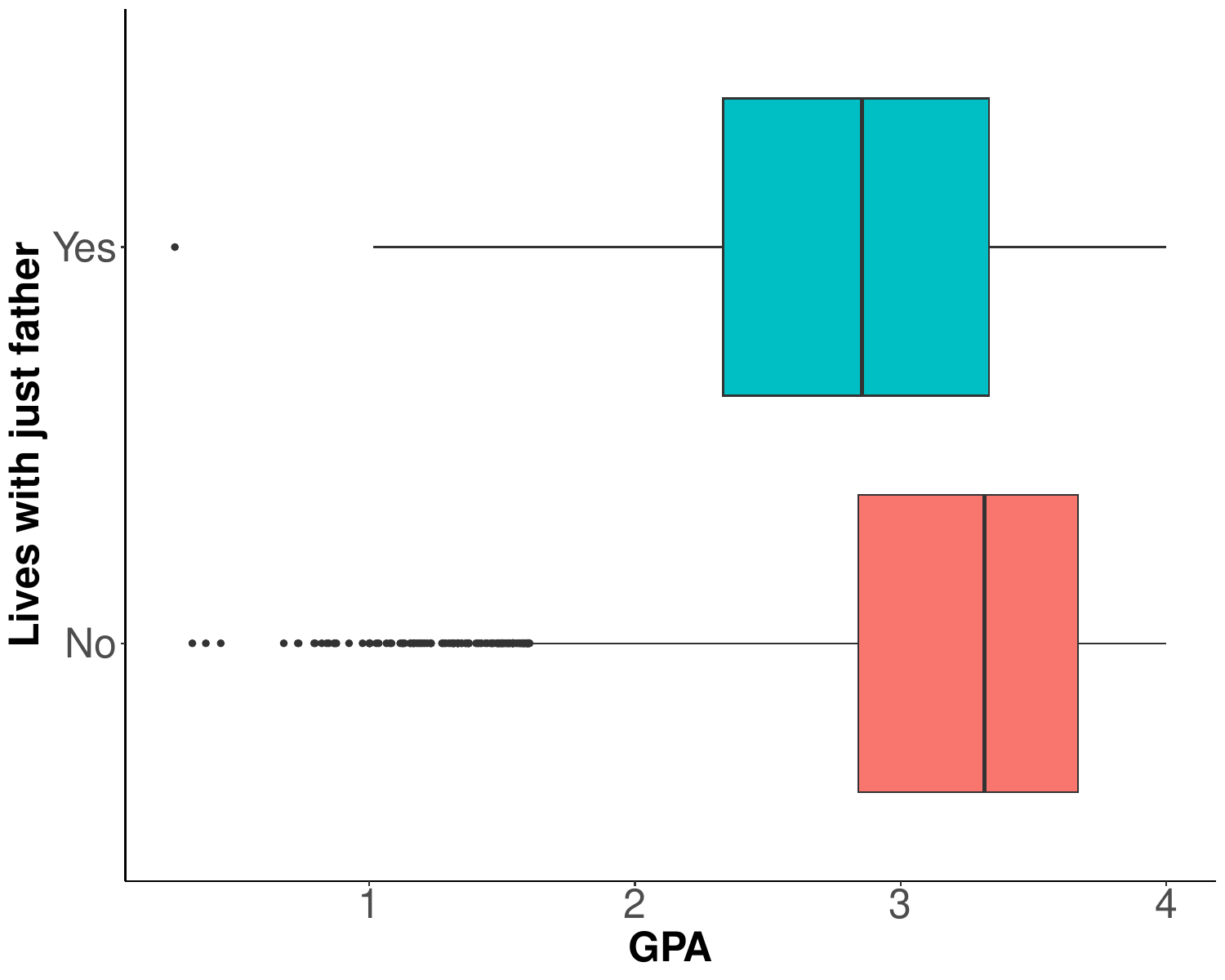}
    \caption*{d. Single father} 
  \end{minipage}
 \begin{minipage}{13.5 cm}{\footnotesize{Notes: Panel (a) illustrates the differences in average GPA by gender. 95\% CI is shown using error bars, and the mean and standard error are provided at the top of each bar. Panel (b) shows the density plots for GPA separately by mother's education. Panel (c) and (d) depict separate box plots for GPA for students who reported to be living with just their mother and father, respectively.}}
\end{minipage} 
\end{figure}

Table \ref{regressionestimatesgpa} provides the results on peer influence in student GPA. Column (1) provides the estimates of SAR when we do not account for latent homophily. Column (2) provides estimates for when we adjust for latent homophily and correct for measurement error using the methods developed in this paper. The selection of the dimension of latent homophily factors is done using network cross-validation methods developed in \cite{li2020network}. The peer influence parameter drops by 18.3\% once we adjust for the latent homophily factors. Although peer influence drops in magnitude after adjusting for latent homophily, the 95\% confidence interval does not include 0. This suggests that friends with high GPAs can positively impact GPA.

\begin{table}[!htbp]
\centering
\captionsetup{width=10cm}
\caption{Peer Effects in GPA}
\begin{adjustbox}{width=0.8 \columnwidth,center}
\label{regressionestimatesgpa}
\begin{tabular}{p{7 cm}p{4cm}p{4cm}}\hline\hline
Dependent Variable&\multicolumn{2}{c}{ Grade Point Average (GPA)}\\
     & SAR Unadjusted & SAR Adjusted \\
    &(1)&(2)\\
    \hline
Peer Influence                                             & 0.1452   & 0.1189   \\
                                                      & (0.0186) & (0.0162) \\
Age                                            & -0.1028  & -0.1022  \\
                                                      & (0.0128) & (0.0137) \\
6th grade                                    & -0.1152  & -0.1159  \\
                                                      & (0.0313) & (0.0265) \\
7th grade                                    & -0.0641  & -0.0633  \\
                                                      & (0.0382) & (0.0354) \\
8th grade                                    & -0.0292  & -0.0348  \\
                                                      & (0.0479) & (0.0471) \\
Gender (Boy)                                       & -0.2070  & -0.2098  \\
                                                      & (0.0124) & (0.0125) \\
Black ethnicity                             & -0.2642  & -0.2783  \\
                                                      & (0.0223) & (0.0257) \\
Hispanic                                    & -0.1426  & -0.1418  \\
                                                      & (0.0155) & (0.0169) \\
Mother went to college                       & 0.1344   & 0.1334   \\
                                                      & (0.0138) & (0.0146) \\
Yes: lives with just mom                  & -0.2282  & -0.2273  \\
                                                      & (0.0183) & (0.0205) \\
Yes: lives with just dad                  & -0.2806  & -0.2765  \\
                                                      & (0.0465) & (0.0564) \\
Yes: lives with parents but separated     & -0.1561  & -0.1537  \\
                                                      & (0.0206) & (0.0216) \\
Do participate in sports at school         & 0.0492   & 0.0501   \\
                                                      & (0.0125) & (0.0125) \\
Do participate in sports outside of school & 0.0919   & 0.0927   \\
                                                      & (0.0136) & (0.0138) \\
Do participate in theater/drama             & -0.0129  & -0.0161  \\
                                                      & (0.0190) & (0.0176) \\
Do participate in music                     & 0.0487   & 0.0525   \\
                                                      & (0.0131) & (0.0129) \\
Do participate in other arts                & -0.0072  & -0.0106  \\
                                                      & (0.0159) & (0.0155) \\
Do participate in other school clubs        & 0.1357   & 0.1358   \\
                                                      & (0.0144) & (0.0133) \\
Date people at this school                  & -0.1630  & -0.1597  \\
                                                      & (0.0146) & (0.0159) \\
Do lots of homework                         & 0.1661   & 0.1671   \\
                                                      & (0.0122) & (0.0122) \\
Do read books for fun                       & 0.1844   & 0.1829   \\
                                                      & (0.0132) & (0.0125) \\
Intercept                                             & 3.8878   & 3.9724   \\
                                                      & (0.1468) & (0.1507) \\\\
                                                      N&7872&7872\\
\hline\hline
\multicolumn{3}{c}{ \begin{minipage}{15cm}{\footnotesize{Notes: The table provides the parameter estimates and the standard errors in the parenthesis. Column (1) provides the estimates with no homophily correction and column (2) provides the estimates once we adjust for unobserved homophily and correct for the measurement error bias.}}
\end{minipage} } \\
\end{tabular}
\end{adjustbox}
\end{table}

In addition to peer GPA, we find that, on average, boys have lower GPAs than girls in the sample. The GPA in 6th, 7th, and 8th grade is lower than the baseline 5th grade. Students who live with just a single parent have lower GPAs on average than students living with both parents. Moreover, participating in sports in school, reading books, and doing lots of homework are positively correlated with GPA. In line with Figure \ref{fig:gpacov}, students whose mothers went to college have a higher GPA than those who did not go to college.

\subsection{Covid-19 Data}
\begin{table}[!htbp]
\centering
\captionsetup{width=12cm}
\caption{Mean Variable as Proxy for Target Covariate}
\label{regressionestimates2}
\begin{tabular}{p{6 cm}p{2cm}p{2cm}p{2cm}}\hline\hline
Dependent Variable&\multicolumn{3}{c}{(COVID-19 Deaths/Total Population)*10000}\\
    Variable & OLS & Uncorrected SAR & Corrected SAR \\
    &(1)&(2)&(3)\\
    \hline
Intercept                  & -19.259179 & -28.727864 & -28.450157 \\
                               & (3.089590) & (2.982157) & (3.676250) \\
Median Age 2010                  & 0.354176   & 0.348559   & 0.342334   \\
                               & (0.028633) & (0.027629) & (0.030604) \\
Heart Disease Mortality          & 0.021810   & 0.016798   & 0.016639   \\
                               & (0.003820) & (0.003686) & (0.004488) \\
Stroke Mortality                & 0.068863   & 0.052170   & 0.053388   \\
                               & (0.017779) & (0.017157) & (0.018296) \\
Smokers Percentage            & -0.061639  & -0.032255  & -0.030069  \\
                               & (0.050445) & (0.048680) & (0.055668) \\
Resp. Mortality Rate 2014          & -0.064712  & -0.053260  & -0.053044  \\
                               & (0.009347) & (0.009019) & (0.010508) \\
Hospitals                    & -0.092748  & -0.101527  & -0.269912  \\
                               & (0.124484) & (0.120118) & (0.105682) \\
ICU beds                    & -0.012690  & -0.013819  & -0.020501  \\
                               & (0.004213) & (0.004065) & (0.003242) \\
Mask Rarely                    & -0.822454  & -1.215399  & -1.071396  \\
                               & (4.009183) & (3.868561) & (4.455684) \\
Mask Sometimes                 & -2.169702  & -2.778055  & -2.795821  \\
                               & (3.528897) & (3.405119) & (3.644980) \\
Mask Frequently                & 0.944854   & 0.257828   & 0.471734   \\
                               & (3.044751) & (2.937959) & (3.406656) \\
Mask Always                    & 2.850727   & 3.251691   & 2.790626   \\
                               & (2.469976) & (2.383362) & (2.881731) \\
Population Density per Sq Mile 2010 & 0.000643   & 0.000748   & 0.000777   \\
                               & (0.000176) & (0.000170) & (0.000187) \\
SVI Percentile                  & 6.133201   & 5.539482   & 5.636224   \\
                               & (0.575071) & (0.554968) & (0.574216) \\
Mean mortality                  & 0.001794   & 0.002017   & 0.003863   \\
                               & (0.000665) & (0.000642) & (0.000607) \\
Cases per 10,000 population                      & 0.015359   & 0.013952   & 0.013994   \\
                               & (0.000528) & (0.000509) & (0.000836)  \\\\
N &2807&2807&2807\\\hline\hline
\multicolumn{4}{c}{ \begin{minipage}{15 cm}{\footnotesize{Notes: The table provides the parameter estimates and the standard errors in the parenthesis. The mean mortality for old population corresponds to $\tilde{U}$ in this application problem, and all other covariates are assumed to be measured without error (Z).}}
\end{minipage} } \\
\end{tabular}
\end{table}
This section illustrates the measurement error correction in covariates for SAR model on COVID-19 county-level deaths in 2020. For this analysis, we obtained the county-level abridged data file constructed in \cite{altieri2020curating}. The authors of \cite{altieri2020curating} provide the description of the data variables in this \href{https://github.com/Yu-Group/covid19-severity-prediction/blob/master/data/list_of_columns.md}{GitHub repo}. We work with a subset of covariates relevant to our application listed in the Appendix Table \ref{summarytable}. The unit of observation is county $i$, and we observe the latitude-longitude information for the county's population center. Using the geo-codes, we compute the spatial weight matrix of geodesic distances using the \texttt{\href{https://cran.r-project.org/web/packages/geodist/geodist.pdf}{geodist}} package in R.

As discussed earlier in Section \ref{dataexample}, we are interested in estimating the impact of old age mortality rate on COVID-19 deaths. However, this variable is not measured in the data. Instead, we observe 3 variables: the mortality rate for the age group 65-74 years, 75-84 years, and 85 and above. Consequently, we compute an average of these 3 variables and estimate the variance of $\xi$ using the formulae in Section \ref{dataexample}. Table \ref{regressionestimates2} provides the estimated coefficients for OLS for linear models, the uncorrected SAR, and the SAR model corrected for measurement error in Columns (1), (2), and (3), respectively. As discussed above, measurement error, even in one covariate, introduces bias not only for the coefficient corresponding to that covariate but also for the other covariates that are measured without error ($Z$). This can be seen when we compare Columns (2) and (3) in Table \ref{regressionestimates2}. In comparison to the linear model (OLS), we see changes in the coefficients for several covariates when we use a SAR model (whether uncorrected or corrected). This shows that accounting for spatial dependence through SAR might be beneficial. Second, between the uncorrected and corrected SAR, a few coefficients differ, including that of the target covariate ``mean mortality" for which we used replicate measurements.

\section{Conclusion}
In this paper, we develop new methods for bias correction for SAR models when there is measurement error in covariates. The need for this new methodology is motivated by two types of data applications. First, the researcher might face a situation where the target covariate is not available and instead replicate error-prone measurements are available or the target variable is not available for the whole data but a related proxy variable is observed for all units. In this scenario, with an estimate of the covariance between the proxy and target covariate, one can use the proposed ME-QMLE method to correct for the bias on the parameters of the outcome model. Second, this method can also be applied in network SAR models where the researcher controls for unobserved homophily in the outcome model by estimating latent factors from the network adjacency matrix. Since the latent factors are measured with error, our proposed methods can be used to correct the bias induced by this measurement error. 

The bias-corrected estimator has useful asymptotic properties. We prove that our estimator is consistent and follows an asymptotically normal distribution. The limiting covariance matrix involves unconditional expectation (with respect to both model error term and measurement error term) of the negative Hessian matrix and the unconditional variance of the score vector. We also provide a consistent estimator for the limiting covariance matrix in the presence of measurement error. We use these asymptotic results to derive closed-form solutions for the standard error of the estimator. Our simulation analysis verifies the bias correction property and the validity of the standard error estimates of the estimator in finite samples. Finally, we apply the new estimator to two datasets, one on estimating peer influence in GPA for middle school students in New Jersey, and on county-level COVID-19 deaths in 2020 in the United States. 

\noindent The proofs and additional data and tables are provided in the Appendix.


\bibliographystyle{plainnat}
\bibliography{netgen,coevol,hyperbol,psychology,vc,triads,triads1,Networks,GraphTheory,
cluster,hawkesref,sarpeer,msp}

\newpage

\section*{Appendix S: Proofs of Results}
    \renewcommand{\theequation}{S.\arabic{equation}}
    \renewcommand{\thesection}{S}
    \setcounter{equation}{0}
\setcounter{table}{0}
\renewcommand{\thetable}{S\arabic{table}}

\renewcommand{\thefigure}{S\arabic{figure}}
\setcounter{figure}{0}   

    \medskip
\section{Proofs, Additional Tables and Figures}

\subsection{Some propositions and Lemmas}
\label{lemmaproof}
The following results will be used repeatedly in the proofs.

\begin{prop}
    Under the assumptions A1-A6, we have
    \begin{equation}
\frac{1}{n}\begin{pmatrix}
 \tilde{X}_n^T X_n\delta_0\\
 \tilde{X}_n^T G_nX_n\delta_0\\
\tilde{X}_n^T\tilde{X}_n-\sum_{i}\Omega_{i}
\end{pmatrix} \equiv \begin{pmatrix}
 \frac{1}{n}\sum_i (X_n\delta_0)_i \tilde{X}_{in}^T\\
 \frac{1}{n}\sum_i (G_nX_n\delta_0)_i \tilde{X}_{in}^T\\
\frac{1}{n} \sum_i(\tilde{X}_{in}\tilde{X}_{in}^T-\Omega/n)
\end{pmatrix} \overset{p}{\to} 
\frac{1}{n} \begin{pmatrix}
X_n^T  X_n\delta_0\\
X_n^T  G_nX_n\delta_0 \\
X_n^TX_n
\end{pmatrix}.
\end{equation}
\label{wlln}
\end{prop}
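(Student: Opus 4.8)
The plan is to reduce everything to a triangular-array weak law of large numbers for independent (not identically distributed) but \emph{bounded} random quantities. Write $\tilde X_n = X_n + \eta_n$, where by A4 the rows $\eta_{in}$ are independent, mean zero, bounded, with $\mathbb E[\eta_{in}\eta_{in}^T]=\Omega_i$. Since the right-hand side of the displayed identity is deterministic, the claim $\frac1n(\cdot)\overset{p}{\to}\frac1n(\cdot)$ is exactly the statement that each block's stochastic part minus its deterministic part tends to $0$ in probability, and throughout I would establish this through an $L^2$ bound (expected squared entry, equivalently expected squared Frobenius norm, tends to $0$) combined with Chebyshev/Markov.

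For the first block, substituting $\tilde X_n = X_n+\eta_n$ gives
\[
\frac{1}{n}\tilde{X}_n^T X_n\delta_0 - \frac{1}{n}X_n^TX_n\delta_0 = \frac{1}{n}\eta_n^T X_n\delta_0 = \frac{1}{n}\sum_i (X_n\delta_0)_i\,\eta_{in}.
\]
This is a weighted average of the independent, mean-zero vectors $\eta_{in}$ with scalar weights $(X_n\delta_0)_i$ that are uniformly bounded in $n$ by Lemma \ref{uniformbound}(2); consequently each coordinate has variance $\tfrac{1}{n^2}\sum_i (X_n\delta_0)_i^2 (\Omega_i)_{jj} = O(1/n)\to 0$, so the term is $o_p(1)$. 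The second block is identical after replacing the weights $(X_n\delta_0)_i$ by $(G_nX_n\delta_0)_i$, which are also uniformly bounded by the same lemma (here $G_n=G_n(\rho_0)$ is evaluated at the fixed true value, so no uniformity over $\rho$ is involved); hence $\frac1n\eta_n^T G_nX_n\delta_0 = o_p(1)$.

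For the third block, expand $\tilde X_n^T\tilde X_n = X_n^TX_n + X_n^T\eta_n + \eta_n^TX_n + \eta_n^T\eta_n$, so that
\[
\frac{1}{n}\Big(\tilde{X}_n^T\tilde{X}_n - \sum_i \Omega_i\Big) - \frac{1}{n}X_n^TX_n = \frac{1}{n}\big(X_n^T\eta_n + \eta_n^TX_n\big) + \frac{1}{n}\sum_i\big(\eta_{in}\eta_{in}^T - \Omega_i\big).
\]
The first piece is $o_p(1)$ by the weighted-average argument above applied columnwise, using that the entries of $X_n$ are bounded (A5). For the second piece, $\eta_{in}\eta_{in}^T-\Omega_i$ are independent, mean-zero matrices, and since the $\eta_{in}$ are bounded, every entry of $\eta_{in}\eta_{in}^T-\Omega_i$ is bounded, hence has bounded variance; therefore $\frac1n\sum_i(\eta_{in}\eta_{in}^T-\Omega_i)$ has entrywise variance $O(1/n)\to 0$ and is $o_p(1)$. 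Combining the three blocks yields the proposition.

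This argument is essentially routine; the only point requiring care is verifying that all the scaling weights—the entries of $X_n$, of $X_n\delta_0$, and of $G_nX_n\delta_0$—are uniformly bounded in $n$, which is precisely what Assumption A5 and Lemma \ref{uniformbound}(2) provide, together with the boundedness of the $\eta_{in}$ from A4. These bounds are exactly what keep every relevant variance of order $1/n$ rather than potentially growing with $n$, so I expect them to be the only place where the hypotheses are genuinely used.
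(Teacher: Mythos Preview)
Your proposal is correct and follows the same approach as the paper, namely the weak law of large numbers; the paper's own proof is a single sentence (``These results follow from the weak law of large numbers since the assumptions provide sufficient regularity conditions on $\tilde{X}_n$''), so your write-up is simply a fully spelled-out version of that remark, supplying the variance computations and the references to the boundedness of $(X_n\delta_0)_i$ and $(G_nX_n\delta_0)_i$ that the paper leaves implicit.
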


\begin{proof}
    These results follow from the weak law of large numbers since the assumptions provide sufficient regularity conditions on $\tilde{X}_n$. 
\end{proof}

\begin{lem}
    Let $\tilde{X}_n$ be a $n \times d$ random matrix with  uniformly bounded entries for all $i,j$. Further assume  the elements of the matrix $ (\frac{1}{n} \tilde{X}_n^T \tilde{X}_n -\frac{\sum_i \Omega_i}{n})^{-1})$ are also uniformly bounded for all $i,j$. Then for the projectors $M_{2n}$ and $ I_n- M_{2n}$ where $M_{2n} = I_n - \tilde{X}_n(\tilde{X}_n^T\tilde{X}_n - \sum_i \Omega_i)^{-1}\tilde{X}_n^T $, both the row and column absolute sums are bounded uniformly, i.e., $ \sum_j |(M_{2n})_{ij}| <\infty$ and all $i$.
    \label{m2bound}
\end{lem}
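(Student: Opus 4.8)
The plan is to reduce everything to elementary entrywise bounds on the rank-$d$ correction term $\tilde{X}_n(\tilde{X}_n^T\tilde{X}_n - \sum_i \Omega_i)^{-1}\tilde{X}_n^T$, exploiting that the number of columns $d$ is fixed while the dimension $n\to\infty$, so that the $\tfrac1n$ buried in the inverse exactly offsets the $n$ summands appearing in a row (or column) sum.

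First I would write $M_{2n} = I_n - \tilde{X}_n B_n \tilde{X}_n^T$ with $B_n := (\tilde{X}_n^T\tilde{X}_n - \sum_i \Omega_i)^{-1}$, and rescale: $B_n = \tfrac1n\big(\tfrac1n\tilde{X}_n^T\tilde{X}_n - \tfrac1n\sum_i \Omega_i\big)^{-1}$. By hypothesis the entries of the $d\times d$ matrix $\big(\tfrac1n\tilde{X}_n^T\tilde{X}_n - \tfrac1n\sum_i \Omega_i\big)^{-1}$ are bounded uniformly in $n$, say by $c_B$, so $|(B_n)_{kl}| \le c_B/n$ for all $k,l$ and all $n$. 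Combining this with the uniform entry bound $|(\tilde{X}_n)_{ik}| \le c_X$, the $(i,j)$ entry of the correction term obeys
\[
\Big|\big(\tilde{X}_n B_n \tilde{X}_n^T\big)_{ij}\Big| = \Big|\sum_{k=1}^d\sum_{l=1}^d (\tilde{X}_n)_{ik}(B_n)_{kl}(\tilde{X}_n)_{jl}\Big| \le \frac{d^2 c_X^2 c_B}{n}
\]
uniformly over $i,j$ and $n$.

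Next I would sum this bound. For any fixed row $i$, $\sum_{j=1}^n |(\tilde{X}_n B_n \tilde{X}_n^T)_{ij}| \le n\cdot \tfrac{d^2 c_X^2 c_B}{n} = d^2 c_X^2 c_B$, independent of $n$; since $\sum_i\Omega_i$ and hence $B_n$ are symmetric, the matrix $\tilde{X}_n B_n \tilde{X}_n^T$ is symmetric, so the same bound holds for every column sum. Thus $I_n - M_{2n} = \tilde{X}_n B_n \tilde{X}_n^T$ is uniformly bounded in row and column sum norm, and since the identity contributes $1$ to each row and column sum, the triangle inequality gives $\sup_n\|M_{2n}\|_\infty<\infty$ and $\sup_n\|M_{2n}\|_1<\infty$, which is the claim (the row-sum statement of the appendix version being a special case).

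There is no essential obstacle; the only care needed is bookkeeping. One must track that the factor $1/n$ from rescaling the inverse exactly compensates the $n$ terms in each row/column sum, and that $c_X$, $c_B$, and $d$ are all independent of $n$ — precisely what Assumptions A4--A6 provide. It is worth stating explicitly that the bound on the entries of $\big(\tfrac1n\tilde{X}_n^T\tilde{X}_n - \tfrac1n\sum_i\Omega_i\big)^{-1}$ is Assumption A6 (alternatively it holds on an event of probability tending to one via Proposition \ref{wlln} and nonsingularity of the limit in A5), so that the conclusion is deterministic on that event.
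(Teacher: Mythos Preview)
Your proposal is correct and follows essentially the same approach as the paper: rescale the inverse by $\tfrac1n$, bound each entry of $\tilde X_n B_n \tilde X_n^T$ by $d^2 c_X^2 c_B/n$, and sum over the $n$ entries in a row. If anything, your write-up is slightly more careful than the paper's, since you explicitly invoke symmetry for the column sums and the triangle inequality with the identity to pass from $I_n-M_{2n}$ to $M_{2n}$.
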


\begin{proof}
    Let $B_n = (\frac{1}{n}(\tilde{X}_n^T\tilde{X}_n - \sum_i \Omega_i))^{-1}.$ Then, 
    \[
    I-M_{2n} = \frac{1}{n} \tilde{X}_n(\frac{\tilde{X}_n^T\tilde{X}_n - \sum_i \Omega_i}{n})^{-1}\tilde{X}_n^T = \frac{1}{n} \sum_{r} \sum_s B_{n,r,s}\tilde{x}_{n,r} \tilde{x}_{n,s}^T, 
    \]
    where $\tilde{x}_{n,r}$ denotes the $r$th column of $\tilde{X}_{n}$.
    Then we compute
    \begin{align*}
        \sum_j |(M_{2n})_{ij}| & \leq \frac{1}{n} \sum_j \sum_{r} \sum_s |B_{n,r,s}\tilde{x}_{n,ir} \tilde{x}_{n,js}|\\
        & \leq \frac{1}{n} \sum_j \sum_{r} \sum_s |B_{n,r,s}| |\tilde{x}_{n,ir}|| \tilde{x}_{n,js}]|.
    \end{align*}

    Now, from the given conditions of the lemma, we have $|\tilde{x}_{n,js}| \leq c_x$ and $|B_{n,r,s}| \leq c_b$ for some constants $c_x$ and $c_b$. Therefore the sum above is upper bounded by $d^2c_xc_b$, which is constant as a function of $n$. Hence the claimed result follows.
\end{proof}

\begin{lem}
    Under the assumptions A1-A6, we have the following results
\begin{enumerate}
    \item $S_n(\rho)S_n^{-1}$ is uniformly bounded in row and column sums. 
    \item The elements of the vectors $X_n \delta_0$ and $G_n X_n \delta_0$ are uniformly bounded in $n$, where $G_{n}=L_{n}S_{n}^{-1}$.
    \item  $M_{2n} S_n(\rho)S_n^{-1}$ and $(S_n^T)^{-1})S_n^T(\rho)M_{2n}S_n(\rho)S_n^{-1}$ are uniformly bounded in row and column sums.
\end{enumerate}
\label{uniformbound}
\end{lem}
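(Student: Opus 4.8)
The plan is to establish the three claims in order, relying on four simple tools: submultiplicativity of the row- and column-sum norms ($\|AB\|_{\infty}\le\|A\|_{\infty}\|B\|_{\infty}$ and $\|AB\|_{1}\le\|A\|_{1}\|B\|_{1}$); the transpose relation $\|A^{T}\|_{\infty}=\|A\|_{1}$; the vector bound $\max_i|(Av)_i|\le\|A\|_{\infty}\max_i|v_i|$; and the already-proved Lemma~\ref{m2bound}. The computational engine for parts (1) and (3) is the algebraic identity
\[
S_n(\rho)S_n^{-1}=(I_n-\rho L_n)(I_n-\rho_0L_n)^{-1}=I_n+(\rho_0-\rho)L_nS_n^{-1}=I_n+(\rho_0-\rho)G_n ,
\]
obtained by writing $I_n-\rho L_n=(I_n-\rho_0L_n)+(\rho_0-\rho)L_n$ and using $G_n=L_nS_n^{-1}$.

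For claim (1): by Assumption A3 both $L_n$ and $S_n^{-1}=S_n(\rho_0)^{-1}$ are uniformly (in $n$) bounded in row and column sums, hence so is $G_n=L_nS_n^{-1}$ by submultiplicativity. Since the parameter space $R$ is compact, $c_\rho:=\sup_{\rho\in R}|\rho_0-\rho|<\infty$, so the identity above gives $\|S_n(\rho)S_n^{-1}\|_{\infty}\le 1+c_\rho\|G_n\|_{\infty}$ and the analogous bound for $\|\cdot\|_1$, both uniformly in $n$ and in $\rho\in R$. For claim (2): by Assumption A5 the entries of $X_n$ are uniformly bounded and $\delta_0$ is a fixed finite vector, so $|(X_n\delta_0)_i|\le\sum_j|(X_n)_{ij}||(\delta_0)_j|$ is bounded uniformly in $i,n$; applying the vector bound with $A=G_n$ (bounded by claim (1)) and $v=X_n\delta_0$ shows the entries of $G_nX_n\delta_0=H_n$ are likewise uniformly bounded.

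For claim (3): the hypotheses of Lemma~\ref{m2bound} are in force, because the entries of $\tilde X_n=X_n+\eta_n$ are uniformly bounded (those of $X_n$ by A5, and the $\eta_n$ are bounded random vectors by A4), and the entries of $(\tfrac1n\tilde X_n^T\tilde X_n-\tfrac1n\sum_i\Omega_i)^{-1}$ are bounded by Assumption A6. Hence $M_{2n}$ is uniformly bounded in row and column sums (in fact $M_{2n}$ is symmetric, so the two norms coincide). Combining with claim (1) and submultiplicativity, $M_{2n}S_n(\rho)S_n^{-1}$ is uniformly bounded in both norms, uniformly over $\rho\in R$. For the quadratic form, note $(S_n^{-1})^{T}S_n(\rho)^{T}=(S_n(\rho)S_n^{-1})^{T}$, whose row-sum norm equals the column-sum norm of $S_n(\rho)S_n^{-1}$ and vice versa, so it too is uniformly bounded in both norms; the triple product $(S_n(\rho)S_n^{-1})^{T}M_{2n}S_n(\rho)S_n^{-1}$ is then uniformly bounded in both norms by two applications of submultiplicativity.

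I do not expect a genuine obstacle here: the content is a chain of norm inequalities. The two points that need care are (i) making the bounds uniform in $\rho$, which is precisely what compactness of $R$ supplies through the finite constant $c_\rho$, and (ii) checking that Lemma~\ref{m2bound} is applicable, which reduces to observing that $\tilde X_n$ inherits uniformly bounded entries from A4--A5 before invoking A6. One should also keep in mind that $M_{2n}$ is a random matrix, so ``uniformly bounded'' is to be read as holding for the realized sequence under Assumptions A4 and A6, which is the form in which those assumptions are stated.
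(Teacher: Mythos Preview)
Your proposal is correct and follows essentially the same approach as the paper: submultiplicativity of the row/column-sum norms together with Lemma~\ref{m2bound}. The only cosmetic difference is that for part~(1) the paper bounds $\|S_n(\rho)\|_\infty\|S_n^{-1}\|_\infty$ directly, whereas you first rewrite $S_n(\rho)S_n^{-1}=I_n+(\rho_0-\rho)G_n$ (an identity the paper uses elsewhere) and then bound; both arguments yield the same uniform-in-$\rho$ conclusion via compactness of $R$.
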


\begin{proof}
    
For the first result,
$\|S_n(\rho)S_n^{-1}\|_{\infty} \leq \|S_n(\rho)\|_{\infty} \|S_n^{-1}\|_{\infty} <\infty$, for all $\rho$. The same holds for the column sum norm.
For the second result, 
since all elements of $X_n$ are uniformly bounded in $n$ and $\delta_0$ is finite, the elements of $X_n \delta_0$ are composed of sums of $d$ bounded elements (with $d$ not growing with $n$). Consequently, the elements of $X_n \delta_0$ are bounded for all $i$. For $G_nX_n\delta_0$ we note that,
$\|G_n X_n\|_{\infty} \leq \|L_n\|_{\infty} \|S_n^{-1}\|_{\infty} \|X_n\|_{\infty} <\infty$. Then, the same argument as above leads to the conclusion that elements of $G_nX_n\delta_0$ are bounded.

For the third result, we note that from Lemma \ref{m2bound}, $M_{2n}$ is uniformly bounded in row and column sums. Then, by the norm inequality of row and column sum norms, the uniform bound result follows.
\end{proof}
\subsection{Proof of Theorem 1}

\begin{proof} 

\label{proofthm1}

To show consistency of $\hat{\theta}_n$, we first analyze the consistency of $\hat{\rho}_n$ and then that of $\hat{\delta}_n$ and $\hat{\sigma}^2_n$. 

The true data-generating process is
\[
Y_n=  S_n^{-1} X_n\delta_0 + S_n^{-1}V_n, \quad V_n \sim N(0, \sigma^2 I_n).
\]
Recall the corrected loglikelihood function maximized to obtain the estimators is
\[
l_n^{*}(\theta) = -\frac{n}{2} \log ( 2 \pi \sigma^2) -\frac{1}{2\sigma^2}[ \tilde{V}_n(\rho)^T \tilde{V}_n(\rho) -  \delta^T(\sum_i \Omega_i) \delta] + \log |S_n(\rho)|,
\]
where 
\[
\tilde{V}_n(\rho) = S_n(\rho)Y_n - \tilde{X}_n \delta.
\]
Now define
\[
Q_n(\rho) = \max_{\beta, \sigma^2} \mathbb E[l_n^*(\theta)] = \max_{\beta, \sigma^2} \mathbb E^+ \mathbb E^* [l_n^*(\theta)],
\]
as the concentrated unconditional expectation of the corrected log-likelihood function and $\mathbb E^+$ is the expectation with respect to the random variable $V$ and $\mathbb E^{*}$ as the expectation with respect to the random variable $\xi$. Moreover, 
\begin{align}
\label{eqproof}
    \mathbb E^+ \mathbb E^* &[l_n^*(\theta)] = -\frac{n}{2} \log ( 2 \pi \sigma^2)-\frac{1}{2\sigma^2}\mathbb E^+ \mathbb E^* \bigg [(S_n(\rho)Y_n - \tilde{X}_n \delta)^{T}(S_n(\rho)Y_n - \tilde{X}_n \delta)\bigg ]-\nonumber\\
    &\frac{1}{2\sigma^2}[\delta^T(\sum_i \Omega_i) \delta]+\log |S_n(\rho)|\nonumber\\
    &= -\frac{n}{2} \log ( 2 \pi \sigma^2)-\frac{1}{2\sigma^2}\mathbb E^+ \mathbb E^* \bigg [(S_n(\rho)Y_n - ({X}_n+\eta) \delta)^{T}(S_n(\rho)Y_n - ({X}_n+\eta) \delta)\bigg ]-\nonumber\\
    &\frac{1}{2\sigma^2}[\delta^T(\sum_i \Omega_i) \delta]+\log |S_n(\rho)|\nonumber\nonumber\\
    &= -\frac{n}{2} \log ( 2 \pi \sigma^2)-\frac{1}{2\sigma^2}\mathbb E^+\mathbb E^*  \bigg [ Y_{n}^{T}S_{n}(\rho)^{T}S_{n}(\rho) Y_{n}-Y_{n}^{T}S_{n}(\rho)^{T}\tilde{X}_{n}\delta -\delta^{T}\tilde{X}_{n}^{T}S_{n}(\rho) Y_{n}+\delta^{T}\tilde{X}_{n}^{T}\tilde{X}_{n}\delta \bigg ]\nonumber\\
    &-\frac{1}{2\sigma^2}[\delta^T(\sum_i \Omega_i) \delta]+\log |S_n(\rho)|\nonumber\\
    &= -\frac{n}{2} \log ( 2 \pi \sigma^2)-\frac{1}{2\sigma^2}\mathbb E^+ \bigg [ Y_{n}^{T}S_{n}(\rho)^{T}S_{n}(\rho) Y_{n}-Y_{n}^{T}S_{n}(\rho)^{T}{X}_{n}\delta -\delta^{T}{X}_{n}^{T}S_{n}(\rho) Y_{n}+\nonumber\\
    &\delta^{T}\mathbb E^{*}({X}_{n}^{T}{X}_{n}+\eta^{T}\eta+X_{n}^{T}\eta+\eta^{T}X_{n})\delta \bigg ]-\frac{1}{2\sigma^2}[\delta^T(\sum_i \Omega_i) \delta]+\log |S_n(\rho)|\nonumber\\
    &= -\frac{n}{2} \log ( 2 \pi \sigma^2)-\frac{1}{2\sigma^2}\mathbb E^+ \bigg [ Y_{n}^{T}S_{n}(\rho)^{T}S_{n}(\rho) Y_{n}-Y_{n}^{T}S_{n}(\rho)^{T}{X}_{n}\delta -\delta^{T}{X}_{n}^{T}S_{n}(\rho) Y_{n}+\delta^{T}({X}_{n}^{T}{X}_{n})\delta)\bigg ]\nonumber\\
    &+\log |S_n(\rho)|
\end{align}

Now solving the optimization problem, we obtain the solutions for $\delta, \sigma^2$ for a given $\rho$ as follows.

\noindent\textbf{Gradient w.r.t $\delta$}:
\begin{align*}
     & -\frac{1}{2\sigma^{2}}\bigg[ -2X_{n}^{T}S_{n}(\rho)\mathbb E^{+}[Y_{n} ]+2X_{n}^{T} X_{n}\delta   \bigg]=0\\
     &\delta^{*}(\rho)  =(X_n^TX_n)^{-1}X_n^TS_n(\rho) \mathbb E^{+}[Y_n] \\
     &=  (X_n^TX_n)^{-1}X_n^TS_n(\rho) S_n^{-1}X_n \delta_0
\end{align*}

\noindent\textbf{Gradient w.r.t $\sigma^{2}(\rho)$}: We use the $\mathbb E^{+}$ as stated in the last step in equation \ref{eqproof}. 
\begin{align*}
\mathbb E^+ \mathbb E^* [l_n^*(\theta)]= &-\frac{n}{2} \log ( 2 \pi \sigma^2)-\frac{1}{2\sigma^2} \bigg [ \mathbb E^+[Y_{n}^{T}S_{n}(\rho)^{T}S_{n}(\rho) Y_{n}]-\mathbb E^+[Y_{n}^{T}]S_{n}(\rho)^{T}{X}_{n}\delta -\\&\delta^{T}{X}_{n}^{T}S_{n}(\rho) \mathbb E^+[Y_{n}]+\delta^{T}({X}_{n}^{T}{X}_{n})\delta)\bigg ]
    +\log |S_n(\rho)|\\
    &=\frac{n}{2} \log ( 2 \pi \sigma^2)-\frac{1}{2\sigma^2} \bigg [ \delta_{0}^{T}X_{n}^{T}(S_{n}^{-1})^{T}S_{n}(\rho)^{T}S_{n}(\rho)S_{n}^{-1}X_{n}\delta_{0}+\sigma_{0}^{2}tr((S_{n}^{-1})^{T}S_{n}(\rho)^{T}S_{n}(\rho)S_{n}^{-1})\\
    &-\delta_{0}^{T}X_{n}^{T}(S_{n}^{-1})^{T}S_{n}(\rho)^{T}X_{n}(X_{n}^{T}X_{n})^{-1}X_{n}^{T}S_{n}(\rho)S_{n}^{-1}X_{n}\delta_{0}\bigg]\\
    &=\frac{n}{2} \log ( 2 \pi \sigma^2)-\frac{1}{2\sigma^2} \bigg [ \delta_{0}^{T}X_{n}^{T}(S_{n}^{-1})^{T}S_{n}(\rho)^{T}M_{1n}S_{n}(\rho)S_{n}^{-1}X_{n}\delta_{0}+\\   &\sigma_{0}^{2}tr((S_{n}^{-1})^{T}S_{n}(\rho)^{T}S_{n}(\rho)S_{n}^{-1})\bigg]\\
    &=\frac{n}{2} \log ( 2 \pi \sigma^2)-\frac{1}{2\sigma^2} \bigg [ \delta_{0}^{T}X_{n}^{T}(G_{n}^{T}(\rho_{0}-\rho)+I)M_{1n}((\rho_{0}-\rho)G_{n}+I)X_{n}\delta_{0}+\\    &\sigma_{0}^{2}tr((S_{n}^{-1})^{T}S_{n}(\rho)^{T}S_{n}(\rho)S_{n}^{-1})\bigg]\\
    &=\frac{n}{2} \log ( 2 \pi \sigma^2)-\frac{1}{2\sigma^2} \bigg [ (\rho_{0}-\rho)^2\delta_{0}^{T}X_{n}^{T}(G_{n}^{T})M_{1n}(G_{n})X_{n}\delta_{0}+\sigma_{0}^{2}tr((S_{n}^{-1})^{T}S_{n}(\rho)^{T}S_{n}(\rho)S_{n}^{-1})\bigg]\\
\end{align*}
where $M_{1n} = (I_n - X_n (X_n^TX_n)^{-1}X_n^T)$. The penultimate line follows since $S_n(\rho)S_n^{-1} =(\rho_0 - \rho)G_n + I_n$ and the final line follows since $M_{1n}X_n\delta_0 = 0$ and $\delta_0^TX_n^TM_{1n}=0$. Now solving for optimal $\sigma^{2}(\rho)$
\begin{align*}
    {\sigma^2}(\rho)^{*} & = \frac{1}{n}\left ((\rho_0-\rho)^2(G_n X_n\delta_0)^TM_{1n}(G_nX_n
    \delta_0\right) + \frac{\sigma^2_0}{n} tr \left((S_n^{-1})^{T}S_n(\rho)^TS_n(\rho)S_n^{-1} \right), 
    \label{sigmasq}
\end{align*}

We note that, therefore, the function $Q_n(\rho)$ is identical to the function analyzed in \cite{lee2004asymptotic}. Therefore, the same arguments in \cite{lee2004asymptotic} prove the uniqueness of $\rho_0$ as the global maximizer of $Q_n(\rho)$ in the compact parameter space $R$. In particular, for any $\epsilon>0$, we have
\[
\lim \sup_{n \to \infty} \max_{\rho \in N_{\epsilon}(\rho_0)} |Q_n(\rho) - Q_n(\rho_0)|<0.
\]
Therefore, we only need to show the uniform convergence of the concentrated corrected log-likelihood $l_n^*(\rho)$ to $Q_n(\rho)$ over the compact parameter space $R$.
Now we have
\[
\frac{1}{n} (l^*_n(\rho) - Q_n(\rho)) = -\frac{1}{2}(\log (\hat{\sigma}_n^2) - \log (\sigma^{2}(\rho))),
\]
where 
\begin{equation}
   \hat{\sigma}^2_n(\rho) = \frac{1}{n} Y_n^T S_n^T(\rho) M_{2n} S_n(\rho) Y_n,
\end{equation}
with 
\[
M_{2n}= I_n - \tilde{X}_n(\tilde{X}_n^T\tilde{X}_n - \Omega)^{-1}
\tilde{X}^T_n.
\]
Since
\[
I + (\rho_0 -\rho) G_n = I + \rho_0 L_nS_n^{-1}
 - \rho L_n S_n^{-1} = I + (I-S_n) S_n^{-1}
 - \rho L_n S_n^{-1} = S_n^{-1} - \rho L_n S_n^{-1} = S_n(\rho) S_n^{-1},
\]
We can expand $\hat{\sigma}_n^2(\rho)$ to obtain
\begin{align*}
   \hat{\sigma}^2_n(\rho) &  = \frac{1}{n} (S_n^{-1}X_n \delta_0 + S_n^{-1}V_n)^T S_n^T(\rho) M_{2n} S_n(\rho) (S_n^{-1}X_n \delta_0 + S_n^{-1}V_n)\\
   & = \frac{1}{n} (X_n\delta_0 + (\rho_0-\rho)G_n X_n\delta_0 + S_n(\rho) S_n^{-1}V_n)^T M_{2n} (X_n\delta_0 + (\rho_0-\rho)G_n X_n\delta_0 + S_n(\rho) S_n^{-1}V_n) \\
  & =  \frac{1}{n} \bigg ((\rho_0-\rho)^2(G_nX_n\delta_0)^{T}M_{2n}(G_nX_n\delta_0 )   + (X_n \delta_0)^T M_{2n} (X_n \delta_0) \\
  & \quad + V_n^T(S_n^{-1})^{T} S^T_n(\rho) M_{2n}S_n (\rho) S_n^{-1}V_n  +  2 (\rho_0 -\rho) (G_nX_n \delta_0)^T M_{2n}S_n(\rho) S_n^{-1}V_n \\
  & \quad  + 2 (\rho_0 -\rho) (X_n\delta_0)^T M_{2n}(G_nX_n\delta_0) + 2(X_n\delta_0)^T M_{2n}S_n(\rho)S_n^{-1}V_n \bigg )
\end{align*}

The decomposition of $\hat{\sigma}^2(\rho)$ produces 6 terms that we analyze in groups of terms that can be analyzed with similar techniques.\\

\noindent\textbf{Terms 1, 2 and 5:}
We note the convergences for terms that do not contain $\rho$ from Proposition \ref{wlln}. Therefore, by the multivariate continuous mapping theorem
\begin{align*}
& (G_nX_n\delta_0 )^T M_{2n}(G_nX_n\delta_0 ) \\
& = (G_nX_n\delta_0 )^T (G_nX_n\delta_0 ) - (G_nX_n\delta_0 )^T  \tilde{X}_n(\tilde{X}_n^T\tilde{X}_n - \Omega)^{-1}
\tilde{X}_n^T G_nX_n\delta_0  \\
& \overset{p}{\to} (G_nX_n\delta_0 )^T (G_nX_n\delta_0 ) - (G_nX_n\delta_0 )^T  X_n(X_n^TX_n)^{-1}X_n^T G_nX_n\delta_0  \\
&=(G_nX_n\delta_0 )^T M_{1n}G_nX_n\delta_0  .
\end{align*}
In addition, $(\rho_0-\rho)^2$ is bounded by a constant since $\rho_0, \rho \in R$, which is a compact set. Therefore, the first term converges in probability to the first term in Equation \eqref{sigmasq} uniformly for all $\rho$.

For the second term, the multivariate continuous mapping theorem gives
\begin{align*}
 (X_n\delta_{0} )^T M_{2n}(X_n\delta_{0} ) &  = (X_n\delta_{0} )^T (X_n\delta_{0} ) - (X_n\delta_{0} )^T  \tilde{X}_n(\tilde{X}_n^T\tilde{X}_n - \Omega)^{-1}
\tilde{X}_n^T X_n\delta_{0}  \\
& \overset{p}{\to} (X_n\delta_{0} )^T (X_n\delta_{0} ) - (X_n\delta_{0} )^T  X_n(X_n^TX_n)^{-1}X_n^T X_n\delta_{0} \\
&=0.
\end{align*}
Finally, for the fifth term a similar argument shows
\begin{align*}
 (X_n\delta_0 )^T M_{2n}(G_n X_n\delta_0 ) &  = (X_n\delta_0 )^T (G_n X_n\delta_0 ) - \delta_0^T X_n^T \tilde{X}_n(\tilde{X}_n^T\tilde{X}_n - \Omega)^{-1}
\tilde{X}_n^T G_n X_n\delta_0  \\
& \overset{p}{\to} (X_n\delta_0 )^T (G_n X_n\delta_0 ) - \delta_0^T X_n^T X_n(X_n^TX_n)^{-1}X_n^T G_n X_n\delta_0 \\
&=0 .
\end{align*}

\noindent\textbf{Terms 4 and 6 :}  Notice for term 6,
\begin{align*}
 (X_n\delta_0)^T M_{2n}S_n(\rho)S_n^{-1}V_n & =  (X_n\delta_0)^T (S_n(\rho)S_n^{-1}V_n)  - (X_n\delta_0)^T \tilde{X}_n(\tilde{X}_n^T\tilde{X}_n - \Omega)^{-1}
\tilde{X}_n^T S_n(\rho)S_n^{-1}V_n. 
\end{align*}
For this term, we apply the Uniform WLLN (\cite{keener2010theoretical}, chapter 9) to the random function $g(V_i, \tilde{X}_n, \rho) = \frac{1}{n} (X_n\delta_0)^T (M_{2n}S_n(\rho)S_n^{-1}V_n) = \frac{1}{n} \sum_i (M_{2n}S_n(\rho)S_n^{-1}V_n)_i (X_n\delta_0)_i$.
Further from Lemma \ref{uniformbound}, the elements of $X_n \delta_0$ are uniformly bounded, and the absolute row sums of $M_{2n}S_n(\rho)S_n^{-1}$ are bounded . Therefore we  have for any $i$ and any $\rho$,
\begin{align*}
\mathbb E | (M_{2n}S_n(\rho)S_n^{-1}V_n)_i (X_n\delta_0)_i | & \leq \mathbb E [ | (M_{2n}S_n(\rho)S_n^{-1}V_n)_i|| (X_n\delta_0)_i |] \\
& \leq | (X_n\delta_0)_i | \mathbb E \sum_{j} | (M_{2n} S_n(\rho)S_n^{-1})_{ij} (V_n)_j| \\
& \leq | (X_n\delta_0)_1 |  \sum_{j} \mathbb E |(M_{2n} S_n(\rho)S_n^{-1})_{ij}| \mathbb E|(V_n)_j| < \infty.
\end{align*}
The last inequality follows since $M_{2n}$ and $V_n$ are independent by assumption and expectations of the absolute row and column sums of $M_{2n}S_n(\rho)S_n^{-1}$ are bounded for all $\rho$. Therefore, the expectation is finite uniformly for all $\rho \in R$. We can also compute the expectation as 
\begin{align*}
\mathbb E [ (M_{2n}S_n(\rho)S_n^{-1}V_n)_i (X_n\delta_0)_i ] & = \mathbb E [\sum_j (M_{2n}S_n(\rho)S_n^{-1})_{ij}(V_n)_j ](X_n\delta_0)_i\\
& = \sum_j \mathbb E[(M_{2n}S_n(\rho)S_n^{-1})_{ij}] \mathbb E[(V_n)_j ](X_n\delta_0)_i = 0
\end{align*}
Therefore, since $R$ is a compact set and $  \mathbb{E} [\sup_{\rho \in R} | (M_{2n} S_n(\rho)S_n^{-1}V_n)_i (X_n\delta_0)_i | ]< \infty$ for all $i$,  applying Uniform LLN (Theorem 9.2 in \cite{keener2010theoretical}), we conclude, 
\[
(X_n\delta_0)^T M_{2n}S_n(\rho)S_n^{-1}V_n \overset{p}{\to} 0 
\]
uniformly in  $\rho \in R$.

The convergence of Term 4 follows similarly by noting that the elements of $G_nX_n \delta_0$ are also uniformly bounded. 

\noindent\textbf{Term 3:}
 The 3rd term is a quadratic form in the error vector $V_n$ with a matrix $B_n(\tilde{X}_n, \rho) = (S_n^{T})^{-1} S^T_n(\rho) M_{2n}S_n (\rho) S_n^{-1}$ which itself is stochastic (due to presence of measurement error). In contrast, the commonly encountered quadratic forms  are of the form $y^T A y$, with $A$ being a nonrandom matrix. 
 We first note that from the results of Lemma \ref{uniformbound} the matrix $S_n^{T})^{-1} S^T_n(\rho) M_{2n}S_n (\rho) S_n^{-1}$ is bounded in row and column sums. Then using the relationship among matrix norms, namely, $\|A\|_2 \leq \sqrt{\|A\|_{\infty} \|A\|_1}$, the spectral norm of the matrix $B_n(\tilde{X}_n, \rho)$ is uniformly bounded in $\rho$:
 \[
 \|B_n(\tilde{X}_n,\rho)\|_2 \leq \|(S_n^{T})^{-1} S^T_n(\rho) M_{2n}S_n (\rho) S_n^{-1}\|_{\infty} \|(S_n^{T})^{-1} S^T_n(\rho) M_{2n}S_n (\rho) S_n^{-1}\|_{1}\leq C.
 \]
 Given the above result and the assumption that $E|V_i^{4 +\eta}| <\infty$, the result in (Lemma C.3, of \cite{dobriban2018high}) is applicable on the quadratic term (with random middle matrix) $V_n^TB_n(\tilde{X}_n,\rho)V_n$. Therefore, we have the following convergence result.
 \[
 \frac{1}{n} V_n^T(S_n^{T})^{-1} S^T_n(\rho) M_{2n}S_n (\rho) S_n^{-1}V_n - \frac{\sigma_0^2}{n} \tr((S_n^{T})^{-1} S^T_n(\rho) M_{2n}S_n (\rho) S_n^{-1}) \overset{p}{\to} 0.
 \]
Now note that 
\begin{align*}
\tr((S_n^{T})^{-1} S^T_n(\rho) M_{2n}S_n (\rho) S_n^{-1}) 
 & = \frac{1}{n} \sum_i \{(S_n (\rho) S_n^{-1})^T \tilde{X}_n)_i (\frac{  \tilde{X}^T\tilde{X} -\sum_i \Omega_i}{n})^{-1} (\tilde{X}_n^T S_n (\rho) S_n^{-1})\}_{i}.
\end{align*}

Now applying Uniform WLLN (since $(S_n (\rho) S_n^{-1})$ has bounded row sums) we obtain
\[
\frac{1}{n} \sum_j (S_n (\rho) S_n^{-1})_{ij}(\tilde{X}_n)_j \overset{p}{\to} \frac{1}{n} \sum_j (S_n (\rho) S_n^{-1})_{ij}(X_n)_j,
\]
uniformly for all $\rho$ and all $i$, and the middle term which is not dependent on $\rho$ converges as $ \frac{1}{n}  (\tilde{X}^T\tilde{X} -\Omega) \overset{p}{\to}  \frac{1}{n}  X^TX$. 
Therefore for a given $i$, we have
\[
\frac{1}{n} (S_n (\rho) S_n^{-1})_i^T M_{2n} (S_n (\rho) S_n^{-1})_i \overset{p}{\to} \frac{1}{n} (S_n (\rho) S_n^{-1})_i^T M_{1n} (S_n (\rho) S_n^{-1})_i
\]

Taking summation over $i$, we have the desired result for term 3.

Now that convergence of $\hat{\rho}_n$ to $\rho_0$ is established, the convergence of $\hat{\delta}_n$ can be established. We note
\begin{align*}
    \hat{\delta}_n & = (\tilde{X}_n^T \tilde{X}_n - \sum_i \Omega_i)^{-1} \tilde{X}_n^T (I + (\rho_0-\rho)G_n)(X_n\delta_0 + V_n)
\end{align*}
Now from Proposition \ref{wlln}, we have,
\begin{align*}
\frac{1}{n}(\tilde{X}_n^T \tilde{X}_n - \sum_i \Omega_i) & = \frac{1}{n} X_n^TX_n + o_p(1)\\
\frac{1}{n}(\tilde{X}_n^TX_n\delta_0 + (\rho_0 -\rho)\tilde{X}_n^TG_nX_n\delta_0 )& = \frac{1}{n}X_n^TX_n\delta_0 + o_p(1)\\
\frac{1}{n}(\tilde{X}_n^TV_n + (\rho_0 -\rho)\tilde{X}_n^TG_nV_n )& = o_p(1).
\end{align*}
Therefore
\begin{align*}
    \hat{\delta}_n & = (\frac{1}{n}(\tilde{X}_n^T \tilde{X}_n - \sum_i \Omega_i))^{-1} \frac{1}{n}\{\tilde{X}_n^T (I + (\rho_0-\rho)G_n)X_n\delta_0 + \tilde{X}_n^T (I + (\rho_0-\rho)G_n)V_n)\} \nonumber \\
    & = \left(\frac{1}{n} X_n^TX_n + o_p(1)\right)^{-1} \left(\frac{1}{n}X_n^TX_n\delta_0 + o_p(1) + o_p(1)\right) \\
    & = \delta_0 + o_p(1).
\end{align*}

\end{proof}

\subsection{Proof of Theorem 2}

\begin{proof}
\label{proofthm2}

From Taylor expansion with intermediate value theorem, we have
\[
\nabla_{\theta}l^*_n(\hat{\theta}_n,\tilde{X}_n) = \nabla_{\theta}l^*_n(\theta_0,\tilde{X}_n) + \nabla^2_{\theta}l^*_n(\tilde{\theta}_n,\tilde{X}_n)(\hat{\theta}_n - \theta_0),
\]
for some $\tilde{\theta}_n$ which is an intermediate vector between $\hat{\theta}_n$ and $\theta_0$ (therefore $\tilde{\theta}_n \overset{p}{\to} \theta_0$).
Since $\nabla_{\theta}l^*(\hat{\theta}_n,\tilde{X}_n) =0$, by definition of $\hat{\theta}_n$, this implies
\[\sqrt{n} (\hat{\theta}_n - \theta_0) = \left(-\frac{1}{n} \nabla^2_{\theta}l^*_n(\tilde{\theta}_n,\tilde{X}_n)\right)^{-1} \frac{1}{\sqrt{n}} \nabla_{\theta}l^*_n(\theta_0,\tilde{X}_n)
\]
We show the following two convergence in probability results by analyzing the convergence of each element of the corresponding matrices:
\begin{align*}
     \frac{1}{n} \nabla^2_{\theta}l_n^*(\tilde{\theta}_n,\tilde{X}_n) \overset{p}{\to} \frac{1}{n} \nabla^2_{\theta}l_n^*(\theta_0,\tilde{X}_n) \quad \quad \text{ and } \quad  \frac{1}{n} \nabla^2_{\theta}l^*_n(\theta_0,\tilde{X}_n) \overset{p}{\to} I(\theta_0, X_n).
\end{align*}

We start by noting the following two results:
\begin{align*}
    \tilde{V}_n(\theta_0)  = V_n - \eta_n \delta_0 \quad \text{ and    } \quad   L_nY_n  = L_n(S_n^{-1}X_n \delta_0 + S_n^{-1}V_n) = G_nX_n\delta_0 + G_n V_n.
\end{align*}

\subsection*{Convergence of $\frac{1}{n}\nabla^2_{\theta} l_n^* (\tilde{\theta}_n, \tilde{X}_n)$ to $ \frac{1}{n}\nabla^2_{\theta} l_n^* (\theta_0, X_n).$}

In this subsection, we will show the convergence of the observed Hessian matrix of the log-likelihood function evaluated at $\tilde{\theta}_n$, which is $\frac{1}{n}\nabla^2_{\theta} l_n^* (\tilde{\theta}_n, \tilde{X}_n)$. Note, this is a function of $\tilde{X}_n$ since we observed the erroneous version of $X_n$ in practice and constructed our corrected likelihood with it. Therefore, unlike the result in \cite{lee2004asymptotic}, we need to show two convergences - (1) the function converges to its value evaluated at $\theta_0$, and (2) the function with $\tilde{X}_n$ converges to a function with $X_n$. 

Since $\tilde{\theta}_n$ is an intermediate value between $\hat{\theta}_n$ and $\theta_0$, then we have $\tilde{\theta}_n - \theta_0 = o_p(1)$ from Theorem 1. Now we show that as a consequence, each element of the matrix $\frac{1}{n}\nabla^2_{\theta} l_n^* (\tilde{\theta}_n, \tilde{X}_n) - \frac{1}{n}\nabla^2_{\theta} l_n^* (\theta_0, X_n)$ is $o_p(1)$. 

We start with the following result:
\begin{align}
\frac{1}{n}\nabla^2_{\delta,\delta} l_n^* (\tilde{\theta}_n, \tilde{X}_n) - \frac{1}{n}\nabla^2_{\delta,\delta} l_n^* (\theta_0,\tilde{X}_n) = (\frac{1}{\sigma^2_0} - \frac{1}{\tilde{\sigma}^2_n}) [ -\frac{\tilde{X}_n^T\tilde{X}_n}{n}] & = o_p(1)(\frac{X_n^TX_n + \sum_i \Omega_i}{n} + o_p(1)) \nonumber \\
& = o_p(1)O_p(1)=o_p(1),
\end{align}
since $\frac{1}{\sigma^2_0} - \frac{1}{\tilde{\sigma}^2_n}=o_p(1)$ by continuous mapping theorem, and assumptions $\lim_{n \to \infty} \frac{1}{n}X_n^TX_n$ is finite and $\Omega_i$ are finite.

Next since $G_n$ is uniformly bounded in row sum, we have $\mathbb E[|(G_nV_n)_1 (\tilde{X}_n)^T_1| ]< \infty$. Then noting that $\mathbb E[(G_nV_n)_1 (\tilde{X}_n)^T_1] = 0 $, leads to
\[ 
\frac{1}{n} \tilde{X}_n^TG_nV_n \overset{p}{\to} 0.
\]
Further we have seen before from WLLN that $ \frac{1}{n}\tilde{X}_n^TG_nX_n\delta_0 \overset{p}{\to}  \frac{1}{n} X_n^TG_nX_n\delta_0 $.
Therefore,
\[
\frac{1}{n}\tilde{X}_n^TL_nY_n = \frac{1}{n} \tilde{X}_n^TG_nX_n\delta_0 + \frac{1}{n}\tilde{X}_n^TG_nV_n= \frac{1}{n}X_n^TG_nX_n\delta_0 + o_p(1) + o_p(1) = O_p(1).
\]

Consequently,
\begin{align}
\frac{1}{n}\nabla^2_{\delta,\rho} l_n^* (\tilde{\theta}_n) - \frac{1}{n}\nabla^2_{\delta,\rho} l_n^* (\theta_0) = (\frac{1}{\sigma^2_0} - \frac{1}{\tilde{\sigma}^2_n})\frac{1}{n}(\tilde{X}_n^TL_nY_n)=o_p(1)O_p(1)=o_p(1).
\end{align}

Next we analyze $\frac{1}{n}\nabla^2_{\delta,\sigma^2} l_n^* (\tilde{\theta}_n) - \frac{1}{n}\nabla^2_{\delta,\sigma^2} l_n^* (\theta_0) $.
We need the following intermediate results:
\begin{align*}
V_n(\tilde{\theta}_n) & = V_n(\theta_0) -  (\tilde{\rho}_n -\rho_0)L_nY_n  -\tilde{X}_n(\tilde{\delta}_n-\delta_0) \\
\frac{1}{n}\tilde{X}_n^TV_n(\theta_0) & = \frac{1}{n} \tilde{X}_n^TV_n + \frac{1}{n} X_n^T\eta_n\delta_0   -\frac{1}{n}\eta_n^T\eta_n\delta_0 = o_p(1) + o_p(1) + (\frac{\sum_i\Omega}{n} \delta_0 + o_p(1)) = O_p(1)\\
\frac{1}{n}\tilde{X}_n^T \tilde{X}_n &  = \frac{1}{n} X_n^TX_n + \frac{1}{n} \sum_i \Omega_i + o_p(1) = O_p(1). 
\end{align*}
Then 
\begin{align*}
\frac{1}{n}\tilde{X}_n^T(\frac{V_n(\tilde{\theta}_n)}{\tilde{\sigma}^4} - \frac{V_n(\theta_0)}{\sigma^4_0}) & =  (\frac{1}{\tilde{\sigma}^4_n} -\frac{1}{\sigma^4_0})\frac{1}{n}\tilde{X}_n^TV_n(\theta_0)  - \frac{1}{\tilde{\sigma}^2_n} \frac{1}{n}\tilde{X}_n^T\tilde{X}_n ((\tilde{\delta}_n-\delta_0) -  \frac{1}{\tilde{\sigma}^2_n} \frac{1}{n} \tilde{X}_n^TL_nY_n  (\tilde{\rho}_{n} -\rho_0) \\
& = o_p(1)O_p(1) + O_p(1)O_p(1)o_p(1) +O_p(1)O_p(1)o_p(1) = o_p(1),
\end{align*}
since $\frac{1}{\tilde{\sigma}_n^2} = \frac{1}{\sigma_0^2} + o_p(1)$, $\tilde{\rho}_n = \rho_0 +o_p(1)$, $\tilde{\delta}_n = \delta_0 +o_p(1)$ and ${\sigma}_0, \rho_0, \delta_0$ are finite values. 
\begin{align}
\frac{1}{n}\nabla^2_{\delta,\sigma^2} l_n^* (\tilde{\theta}_n) - \frac{1}{n}\nabla^2_{\delta,\sigma^2} l_n^* (\theta_0) & = \frac{1}{n}(\sum_i \Omega_i)(\frac{\tilde{\delta}_n}{\tilde{\sigma}_n^4}-\frac{\delta_0}{\sigma_0^4}) -
\frac{1}{n}\{\tilde{X}_n^T(\frac{V_n(\tilde{\theta}_n)}{\tilde{\sigma}^4} - \frac{V_n(\theta_0)}{\sigma^4_0})\} \nonumber \\
& = O_p(1)o_p(1) + o_p(1) =o_p(1). 
\end{align}

For $\frac{1}{n}\nabla^2_{\sigma^2,\sigma^2} l_n^* (\tilde{\theta}_n) - \frac{1}{n}\nabla^2_{\sigma^2,\sigma^2} l_n^* (\theta_0)$ we note the following results:
\begin{align*}
    \frac{1}{n} V_n(\tilde{\theta}_n)^TV_n(\tilde{\theta}_n) & = \frac{1}{n} V_n(\theta_0)^TV_n(\theta_0) + (\tilde{\delta}_n-\delta_0)^T  \frac{\tilde{X}_n^T\tilde{X}_n}{n}(\tilde{\delta}_n-\delta_0) + (\tilde{\rho} -\rho_0)^2 \frac{(L_nY_n)^T(L_nY_n)}{n} \\
    & \quad + 2 (\tilde{\rho} -\rho_0) (\tilde{\delta}_n-\delta_0)^T  \frac{\tilde{X}_n^TL_nY_n}{n} +  2  (\tilde{\delta}_n-\delta_0)^T  \frac{\tilde{X}_n^TV_n(\theta_0)}{n} + 2 (\tilde{\rho} -\rho_0) \frac{(L_nY_n)^TV_n(\theta_0)}{n} \\
    & = \frac{1}{n} V_n(\theta_0)^TV_n(\theta_0)  + o_p(1).
\end{align*}
Note, $\tilde{V}_n(\theta_0) = V_n - \eta_n\delta_0.$ Therefore,
\begin{align*}
\frac{1}{n}\tilde{V}_n(\theta_0)^T \tilde{V}_n(\theta_0) & = (V_n - \eta_n\delta_0)^T (V_n - \eta_n \delta_0)\\
& = V_n^TV_n + \delta_0^T\eta_n^T\eta_n \delta_0 + o_p(1) \\
& =\sigma_0^2 +o_p(1) + o_p(1) + \frac{1}{n} \delta_0^T(\sum_i\Omega_i) \delta_0 + o_p(1) = O_p(1). 
\end{align*}
Therefore,
\begin{align}
& \frac{1}{n}\nabla^2_{\sigma^2,\sigma^2} l_n^* (\tilde{\theta}_n) - \frac{1}{n}\nabla^2_{\sigma^2,\sigma^2} l_n^* (\theta_0) \nonumber \\ & \quad  = \frac{1}{2}(\frac{1}{\tilde{\sigma}_n^4}-\frac{1}{\sigma_0^4}) - \frac{1}{2}(\frac{1}{\tilde{\sigma}_n^6}-\frac{1}{\sigma_0^6}) [\frac{1}{n} V_n(\theta_0)^TV_n(\theta_0) +(\tilde{\delta}_n^T \sum_i (\Omega_i) \tilde{\delta}_n - \delta_0^T \sum_i (\Omega_i) \delta_0) ] \nonumber \\
& \quad =o_p(1) + o_p(1)[O_p(1) +o_p(1)] = o_p(1).
\end{align}

Next
\begin{align}
\frac{1}{n}\nabla^2_{\sigma^2,\rho} l_n^* (\tilde{\theta}_n) - \frac{1}{n}\nabla^2_{\sigma^2,\rho} l_n^* (\theta_0) & =\frac{1}{\tilde{\sigma}^4}\frac{1}{n}((\tilde{\delta}_n-\delta_0)^{T}\tilde{X}_n^{T}L_{n}Y_n+ \frac{1}{\tilde{\sigma}^4}\frac{1}{n}(Y_n^TL_n^TL_nY_n(\tilde{\rho}_n-\rho_0) \nonumber \\
& \quad - (\frac{1}{\tilde{\sigma}^4}-\frac{1}{\sigma_0^4})\frac{1}{n}Y_n^TL_n^TV_n(\theta_0)  \\
& =O_p(1)O_p(1)o_p(1) + O_p(1)O_p(1)o_p(1) +o_p(1)O_p(1)=o_p(1) \nonumber.
\end{align}

For the final result from the intermediate value theorem, we have
\[
\frac{1}{n}(\tr(G_n(\tilde{\rho_{n}})^2) - \tr(G_n(\rho_0)^2) = 2 \frac{1}{n}\tr(G_n(\bar{\rho})^3)(\tilde{\rho_{n}}-\rho_0) = \frac{1}{n}O(n/h_n)o_p(1)=o_p(1), 
\]
and $\frac{1}{n}(Y_nL_n)^T(Y_nL_n)=\frac{1}{n}O_p(n/h_n)= o_p(1)$ under the assumptions by results of Lemma A.3 of \cite{lee2004asymptotic}. Then
\begin{align}
\frac{1}{n}\nabla^2_{\rho,\rho} l_n^* (\tilde{\theta}_n) - \frac{1}{n}\nabla^2_{\rho,\rho} l_n^* (\theta_0) =o_p(1).
\end{align}

\subsection* {Convergence of $\frac{1}{n} \nabla^2_{\theta}l_n^*(\theta_0, \tilde{X}_n)$
to $I (\theta_0, X_n) = \mathbb E^+ \mathbb E^*[\nabla^2_{\theta}l_n^*(\theta_0, \tilde{X}_n)]$ } 

Most of the convergence results for the elements follow from the convergence results used in the earlier proofs. The first element,
\begin{align*}
\frac{1}{n}\nabla^2_{\delta,\delta} l_n^* (\theta_0, \tilde{X}_n) - E[\nabla^2_{\delta,\delta} l_n^* (\theta_0, \tilde{X}_n)] = \frac{1}{\sigma_0^2}[\frac{1}{n}(\tilde{X}_n^T\tilde{X}_n - \sum \Omega_i) - \frac{1}{n} X_n^TX_n) = o_p(1).
\end{align*}
The second element using the result in Lemma \ref{wlln},
\begin{align*}
\frac{1}{n}\nabla^2_{\delta,\rho} l_n^* (\theta_0, \tilde{X}_n) - E[\nabla^2_{\delta,\rho} l_n^* (\theta_0, \tilde{X}_n)] & = \frac{1}{\sigma_0^2}[\frac{1}{n}(\tilde{X}_n^TL_nY_n)  - \frac{1}{n} (X_n^TG_nX_n\delta_0) \\
& = \frac{1}{n} \tilde{X}_n^TG_nX_n\delta_0 - \frac{1}{n} (X_n^TG_nX_n\delta_0) +\frac{1}{n}\tilde{X}_n^TG_nV_n \\
& = o_p(1) + o_p(1)= o_p(1).
\end{align*}

For the third element, we have
\begin{align*}
\frac{1}{n}\nabla^2_{\delta,\sigma^2} l_n^* (\theta_0, \tilde{X}_n) - E[\nabla^2_{\delta,\sigma^2} l_n^* (\theta_0, \tilde{X}_n)] & = \frac{1}{\sigma_0^4}[\frac{1}{n}(\tilde{X}_n^T\tilde{V}_n(\theta_0) + \sum_i \Omega_i\delta_0)  \\
& = \frac{1}{\sigma_0^4}[\frac{1}{n} \tilde{X}_n^TV_n + \frac{1}{n} X_n^T\eta_n\delta_0   -\frac{1}{n}\eta_n^T\eta_n\delta_0  + \frac{\sum_i \Omega_i\delta_0}{n} \\
& = o_p(1) + o_p(1).
\end{align*}

For the fourth element, we have
\begin{align*}
\frac{1}{n}\nabla^2_{\sigma^2,\sigma^2} l_n^* (\theta_0, \tilde{X}_n) - E[\nabla^2_{\sigma^2,\sigma^2} l_n^* (\theta_0, \tilde{X}_n)] & = \frac{1}{\sigma_0^6}[\frac{1}{n}(\tilde{V}_n(\theta_0)^T\tilde{V}_n(\theta_0) - \delta_0^T(\sum_i \Omega_i)\delta_0) -\sigma_0^2]  \\
& = \frac{1}{\sigma_0^6}[\frac{1}{n}V_n^TV_n + \frac{1}{n}\delta_0^T\eta_n^T\eta_n \delta_0 + o_p(1)  - \frac{\delta_0^T(\sum_i \Omega_i)\delta_0)}{n} -\sigma_0^2] \\
& = \frac{1}{\sigma_0^6}[\sigma_0^2 + o_p(1) +\frac{\delta_0^T(\sum_i \Omega_i)\delta_0)}{n}+ o_p(1) - \frac{\delta_0^T(\sum_i \Omega_i)\delta_0)}{n} -\sigma_0^2] \\
& = o_p(1) + o_p(1).
\end{align*}
Now, for the fifth term, we have
\begin{align*}
\frac{1}{n}\nabla^2_{\sigma^2,\rho} & l_n^* (\theta_0, \tilde{X}_n) - E[\nabla^2_{\sigma^2,\rho} l_n^* (\theta_0, \tilde{X}_n)]  = \frac{1}{\sigma_0^4}[\frac{1}{n}\tilde{V}_n^T(G_nX_n\delta_0 + G_n V_n) -\frac{\sigma_0^2}{n} \tr(G_n)]  \\
& = \frac{1}{\sigma_0^4} [\frac{1}{n}V_n^TG_nX_n\delta_0 + V_n^TG_n V_n - \delta_0^T\eta_n^TG_nX_n\delta_0 - \delta_0^T\eta_n^TG_n V_n)-\frac{\sigma_0^2}{n} \tr(G_n)]\\
& = o_p(1).
\end{align*}
For the sixth and the final term, first notice
\begin{align*}
\frac{1}{n}Y_n^TL_n^TL_nY_n &= \frac{1}{n}(G_nX_n\delta_0 + G_nV_n)^T(G_nX_n\delta_0 + G_nV_n) \\
& = \frac{1}{n}( (G_nX_n\delta_0)^TG_nX_n\delta_0 + 2 (G_nX_n\delta_0)^TG_nV_n + (G_nV_n)^T(G_nV_n))\\
& = \frac{1}{n}H_n^TH_n +o_p(1) +\frac{\sigma_0^2}{n}\tr(G_n^TG_n) + o_p(1).
\end{align*}
\begin{align*}
\frac{1}{n}\nabla^2_{\rho,\rho} & l_n^* (\theta_0, \tilde{X}_n) - E[\nabla^2_{\rho,\rho} l_n^* (\theta_0, \tilde{X}_n)] \\
& = \frac{1}{\sigma_0^2}[\frac{1}{n}H_n^TH_n +o_p(1) + \frac{\sigma_0^2}{n}\tr(G_n^TG_n) + o_p(1) +\frac{\sigma_0^2}{n} \tr(G_nG_n) -\frac{1}{n}H_n^TH_n 
 - \frac{\sigma_0^2}{n} \tr(G_n^TG_n + G_nG_n)]  \\
& =o_p(1).
\end{align*}

\subsection*{Limiting distribution of the scaled corrected score vector}
The corrected score vector, i.e., the gradient of the corrected log-likelihood function $\nabla_{\theta} l_n^{*}(\theta)$ evaluated at the true parameter vector $\theta_0$ is as follows:
\begin{align*}
    \nabla_{\delta} l_n^* (\theta_0) &=  \frac{1}{\sigma_0^2}[X_n^T V_n + \eta_n^TV_n  - X_n^T\eta_n \delta_0 - \eta_n^T\eta_n\delta_0 + \sum_{i}\Omega_{i}\delta_0 ],\\
\nabla_{\rho} l^* (\theta_0)&  = \frac{1}{\sigma_0^2} [(G_nX_n\delta_0)^TV_n+ V_n^TG_nV_n -(G_nX_n\delta_0)^T\eta_n \delta_0 - V_n^TG_n \eta_n \delta_0 ] - \tr (G_n)\\
\nabla_{\sigma^2} l^* (\theta_0)& = -\frac{n}{2 \sigma_0^2} + \frac{1}{2\sigma_0^4} \{V_n^TV_n- V_n^T\eta_n\delta_0  +\delta_0^T\eta_n^T\eta_n\delta_0 - \delta_0^T (\sum_i \Omega_i) \delta_0\}.
\end{align*}

We first verify that the unconditional expectation of the gradient vector is 0. Note that $\eta_n$ is a $n \times d$ matrix whose rows $(\eta_n)_i$ are mean 0 independent random vectors. Since $(\eta_n)_i$ and $(V_n)_i$ are independent, we have $\mathbb E[\eta_n^TV_n]=0$ and $\mathbb E[V_n^TG_n\eta_n]=0$. Further, $\mathbb E[V_n^TG_nV_n] = \sigma^2 \tr(G_n)$. Finally, $\mathbb E[\eta_n^T \eta_n]= \sum_i \Omega_i$, by definition. With these results, we have,
\begin{align*}
    \mathbb E[\nabla_{\delta} l_n^* (\theta_0)] &=  \frac{1}{\sigma_0^2}[ -\mathbb E[ \eta_n^T\eta_n \delta_0 + \sum_{i}\Omega_{i}\delta_0 ] =0,\\
\mathbb E[\nabla_{\rho} l^* (\theta_0)]& = \frac{1}{\sigma_0^2} [(G_nX_n\delta_0)^T\mathbb E[V_n]+ \sigma_0^2 tr(G_n) -(G_nX_n\delta_0)^T\mathbb E[\eta_n] \delta_0 - \mathbb E[V_n^TG_n \eta_n] \delta_0 ] - tr (G_n) =0\\
\mathbb E[\nabla_{\sigma^2} l^* (\theta_0)] & = -\frac{n}{2 \sigma_0^2} + \frac{1}{2\sigma_0^4} \{n \sigma_0^2- \mathbb E[V_n^T\eta_n]\delta_0  +\delta_0^T\mathbb E[\eta_n^T\eta_n]\delta_0 -  \delta_0^T (\sum_i \Omega_i) \delta_0\}=0.
\end{align*}

The unconditional variance is given by,
\[
var[\frac{1}{\sqrt{n}}\nabla_{\theta}l_n^*(\theta_0)] = \mathbb E[( \frac{1}{\sqrt{n}} \nabla_{\theta}l_n^*(\theta_0))( \frac{1}{\sqrt{n}} \nabla_{\theta}l_n^*(\theta_0))^T] = \Sigma(\theta_0, X),
\]
is finite since the elements of the matrix are functions of moments of the bounded random vectors $(\eta_n)_i$, and second, third and fourth moments of $(V_n)_i$, all of which are finite by assumptions.

We note that the elements of the score vector consist of linear and quadratic forms on $V_n$ and $\eta_n$. The random variables $(\eta_n)_i, (V_n)_i, (\eta_n)_i(V_n)_i$ are independent and their means are 0. The matrix for quadratic form $G_n$ is uniformly bounded in row sums, the elements of the vector $G_nX_n\delta_0$ are uniformly bounded. Finally $\sup E|(\eta_n)_i|^{4+\epsilon}<\infty$, $\sup E|(V_n)_i|^{4+\epsilon}<\infty$ by assumptions. This also implies $\sup E|(\eta_n)_i(V_n)_i|^{4+\epsilon}<\infty$
therefore, the central limit theorem for linear and quadratic forms in \cite{kelejian1999generalized} and \cite{lee2004asymptotic} can be applied and we conclude
\[
\sqrt{n}(\frac{1}{n} \nabla_{\theta}l_n^*(\theta_0)) \overset{d}{\to} N(0, \Sigma(\theta_0,X))
\]

\end{proof}

Then, combining the three subparts, we arrive at the stated conclusion.

\subsection{Proof of Lemma 3}
The Lemma can be proved following the ideas in \cite{carroll2006measurement} Appendix B.6. We note that
\[\sqrt{n} \left(-\frac{1}{n} \nabla^2_{\theta}l^*_n(\tilde{\theta}_n,\tilde{X}_n)\right) 
 (\hat{\theta}_n - \theta_0) = \frac{1}{\sqrt{n}} \nabla_{\theta}l^*_n(\theta_0,\tilde{X}_n) + D_n(\theta,\Omega) (\hat{\omega} - \omega).
\]

First note that since $\tilde{X}_n$ and $\hat{\Omega}$ are independent, the RHS converges to a normal distribution limit with a limiting covariance matrix of
\[
\Sigma(\theta_0,X) + D(\theta_0, \Omega)C(\Omega) D(\theta_0,\Omega)^T,
\]
where $D(\theta_0,\Omega)$ and $C(\Omega)$ are as given in the statement of the lemma. In the previous theorem, we have already shown that $-\frac{1}{n} \nabla^2_{\theta}l^*_n(\tilde{\theta}_n,\tilde{X}_n) \to I(\theta_0,X)$ in probability. Therefore, combining the two results, we have the result claimed in the lemma.

\subsection{Miscellaneous results}

\subsubsection{$M_2^TM_2-K=M_2$}
\begin{proof}

\label{MTM}
\begin{align*}
M_2^TM_2 - K & = I - 2 \tilde{X}(\tilde{X}^T\tilde{X} - (\sum_i \Omega_i))^{-1} \tilde{X}^T + \tilde{X}(\tilde{X}^T\tilde{X} - (\sum_i \Omega_i))^{-1} \tilde{X}^T \tilde{X}(\tilde{X}^T\tilde{X} - (\sum_i \Omega_i))^{-1} \tilde{X}^T\\
& \quad - \tilde{X}(\tilde{X}^T\tilde{X} - (\sum_i \Omega_i))^{-1} (\sum_i \Omega_i) (\tilde{X}^T\tilde{X} - (\sum_i \Omega_i))^{-1}\tilde{X}^T \\
& = I - \tilde{X}(\tilde{X}^T\tilde{X} - (\sum_i \Omega_i))^{-1} \tilde{X}^T = M_2.
\end{align*}
\end{proof}
\subsubsection{First and second derivative of $l^*(\rho)$ with symmetric $L$}

\begin{proof}

\label{rhoderiv}
Let $\lambda_1, \ldots \lambda_n$ be the eigenvalues of the symmetric matrix $L$. The first and second derivatives of the corrected concentrated log-likelihood function $l^*(\rho)$ with respect to $\rho$ are given by:
\begin{align*}
\frac{\partial l^*}{\partial \rho} & =\frac{2}{n} \sum_{i=1}^n \frac{\lambda_i}{1-\rho\lambda_i} + 2\frac{\rho L^TY^TMYL-Y^TMYL}{\hat{\sigma}^2}\\
\frac{\partial^2 l^*}{\partial \rho} & =\frac{2}{n}*\sum_{i=1}^{n}\frac{\lambda_i^2}{(1-\rho \lambda_i)^2} + \frac{2L^TY^TMYL}{\hat{\sigma}^2}- \frac{4(\rho L^TY^TMYL-Y^TMYL)^2}{(\hat{\sigma}^2)^2}
\end{align*}
\end{proof}

\subsubsection{The gradient and Hessian of log-likelihood function}
\begin{proof}

\label{gradhess}

\begin{align*}
    \nabla_{\delta} l_n^* (\theta) &= \frac{1}{\sigma^2} [\tilde{X}_n^{T}(S_n(\rho)Y_n- \tilde{X}_n \delta) + \sum_i (\Omega_i) \delta],\\
\nabla_{\rho} l_n^* (\theta)&=\frac{1}{2\sigma^2}[2(L_{n}Y_{n})^T(\tilde{V}_n(\rho)) ] +  \frac{1}{|S_n(\rho)|}|S_n(\rho)|\tr (-S_n(\rho)^{-1}L_n) \\
 & \quad  \quad =\frac{1}{\sigma^2} [(L_{n}Y_{n})^T(\tilde{V}_n(\rho)) ] - \tr(G_n(\rho)),\\
\nabla_{\sigma^2} l_n^* (\theta)& = -\frac{n}{2 \sigma^2} + \frac{1}{2\sigma^4} \{(S_n(\rho)Y_n- \tilde{X}_n \delta)^T(S_n(\rho)Y_n- \tilde{X}_n \delta)-  \delta^T (\sum_i \Omega_i)\delta\},
\end{align*}

The elements of the Hessian matrix are given by:
\begin{align*}
    \nabla^2_{\delta,\delta} l_n^* (\theta) &= -\frac{1}{\sigma^2} [ -\tilde{X}_n^T\tilde{X}_n  +\sum_i \Omega_i],\\
\nabla_{\delta,\rho} l_n^* (\theta)&=-\frac{1}{\sigma^2}[\tilde{X}_n^TL_nY_n]\\
\nabla_{\delta,\sigma^2}l_n^* (\theta)& = - \frac{1}{\sigma^4} \{\tilde{X}_n^T(S_n(\rho)Y_n -\tilde{X}_n\delta) + (\sum_i \Omega_i) \delta\}\\
\nabla_{\sigma^2,\sigma^2}l_n^* (\theta)& =\frac{n}{2\sigma^4} -\frac{1}{\sigma^6}\{(S_n(\rho)Y_n -\tilde{X}_n\delta)^T(S_n(\rho)Y_n -\tilde{X}_n\delta) +  \delta^T (\sum_i \Omega_i)\delta\}\\
\nabla_{\sigma^2,\rho}l_n^* (\theta)& =-\frac{1}{\sigma^4}(S_n(\rho)Y_n -\tilde{X}_n\delta)^TL_nY_n\\
\nabla_{\rho,\rho}l_n^* (\theta)& =-\frac{1}{\sigma^2}(L_nY_n)^TL_nY_n - tr(\frac{\partial S_{n}(\rho)^{-1}L_{n}}{\partial \rho})\\
&=-\frac{1}{\sigma^2}(L_nY_n)^TL_nY_n + \tr(S_{n}(\rho)^{-1}\frac{\partial S_{n}(\rho)}{\partial \rho}S_{n}(\rho)^{-1}L_{n})\\
&=-\frac{1}{\sigma^2}(L_nY_n)^TL_nY_n - \tr(G_n(\rho)G_n(\rho)).
\end{align*}

Evaluating the negatives of these elements at the true parameter value $\theta_0 = \{\delta_0,\rho_0,\sigma_0^2\}$, we obtain the \textit{corrected observed} Fisher information matrix $\hat{I}(\theta_0, \tilde{X}_n)$ as follows:

\begin{align*}
     \frac{1}{n}\begin{pmatrix}
    \frac{1}{\sigma_0^2}(\tilde{X}_n^T\tilde{X}_n - \sum_i \Omega_i) & \frac{1}{\sigma_0^2}\tilde{X}_n^TL_nY_n& \frac{1}{\sigma_0^4} (\tilde{X}_n^T\tilde{V}_n(\theta_0) + \sum_i \Omega_i \delta_0)
    \\
\frac{1}{\sigma_0^2}Y_n^TL_n^T\tilde{X}_n  &  \frac{1}{\sigma_0^2} (L_nY_n)^T(L_nY_n) +  tr (G_n G_n) &  \frac{1}{\sigma_0^4} (\tilde{V}_n(\theta_0)^TL_nY_n \\
    \frac{1}{\sigma_0^4} (\tilde{V}_n(\theta_0)^T\tilde{X}_n +\delta_0^T \sum_i \Omega_i ) & \frac{1}{\sigma_0^4} (\tilde{V}_n(\theta_0)^TL_nY_n & \frac{n}{2\sigma_0^4}-\frac{1}{\sigma_0^6}(\tilde{V}_n(\theta_0)^T\tilde{V}_n(\theta_0) + \delta_0^T \sum_i \Omega_i \beta_0)
    \end{pmatrix}
\end{align*}
This matrix can also be used as an empirical estimator for the matrix $I(\theta_0,X_n)$.
\end{proof}

\newpage

\subsection{Additional Tables and Figures}

\begin{figure}[h] 
  \caption{Comparing the standard error with standard deviation of the estimates}
  \centering
  \label{incratioseappendix}
   \begin{minipage}[b]{0.5\linewidth}
    \includegraphics[width=\linewidth]{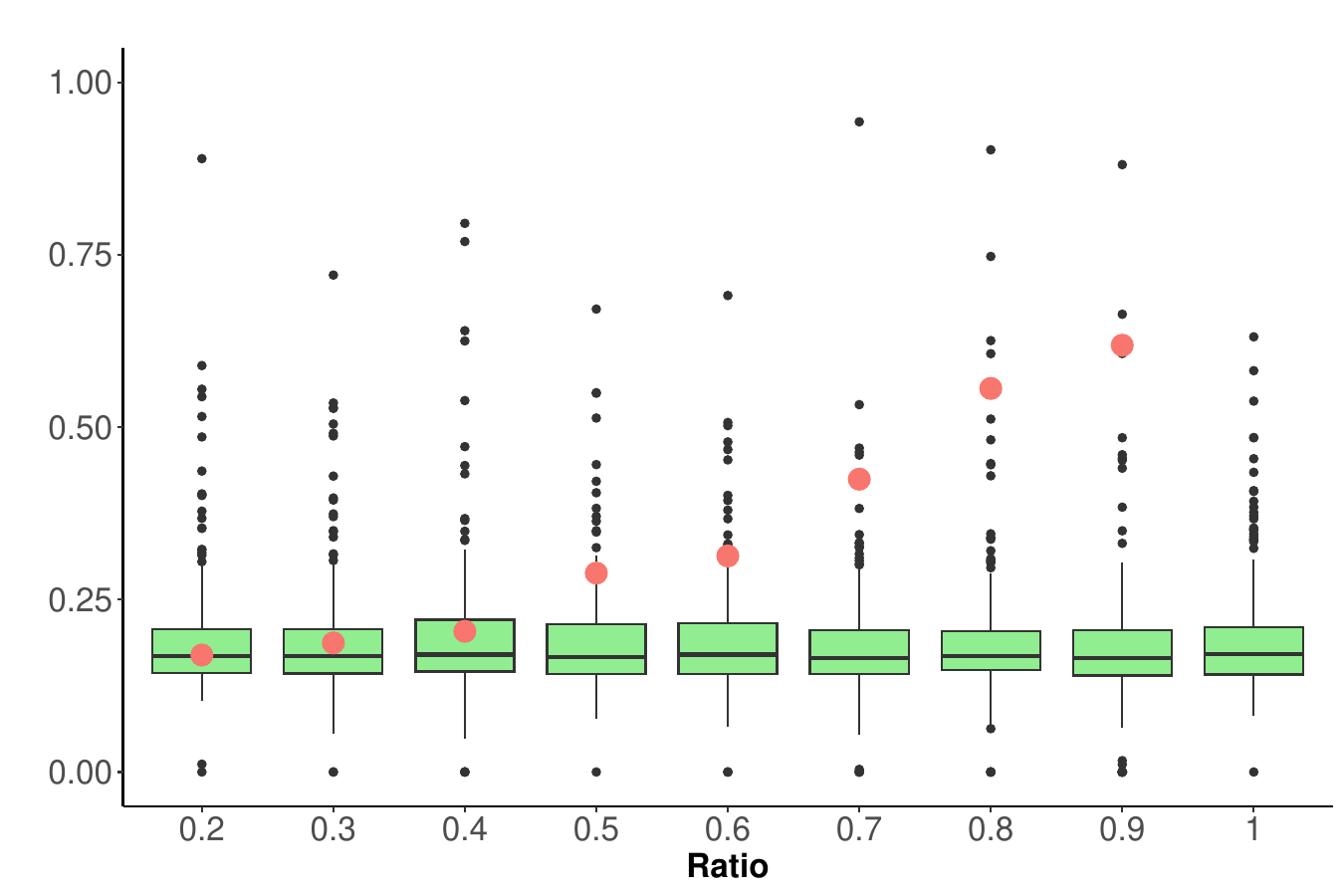}
    \caption*{a. $\hat{\beta_{1}}$} 
  \end{minipage}
     \begin{minipage}[b]{0.5\linewidth}
    \includegraphics[width=\linewidth]{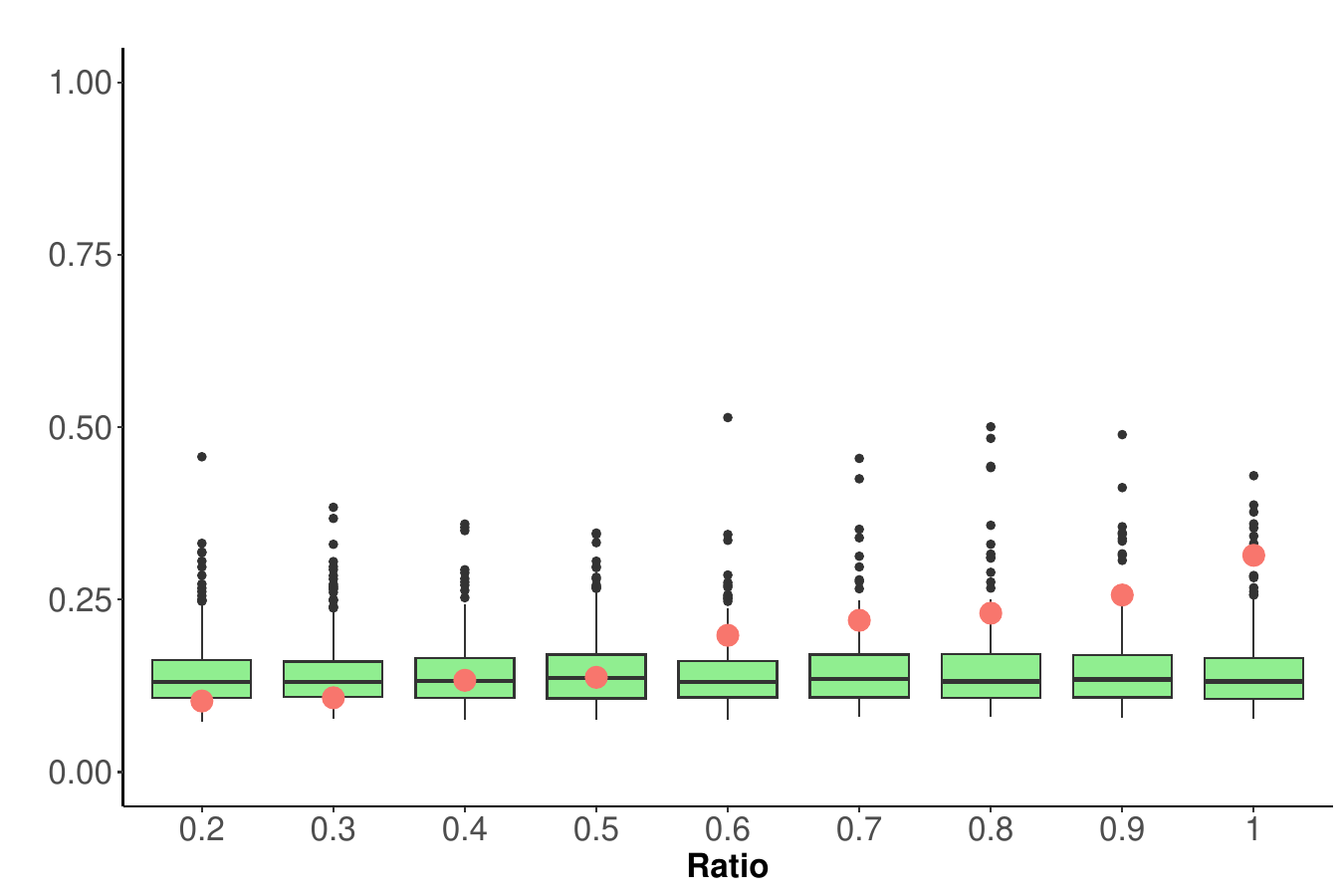}
    \caption*{b. $\hat{\beta_{1}}$} 
  \end{minipage}
     \begin{minipage}[b]{0.5\linewidth}
    \includegraphics[width=\linewidth]{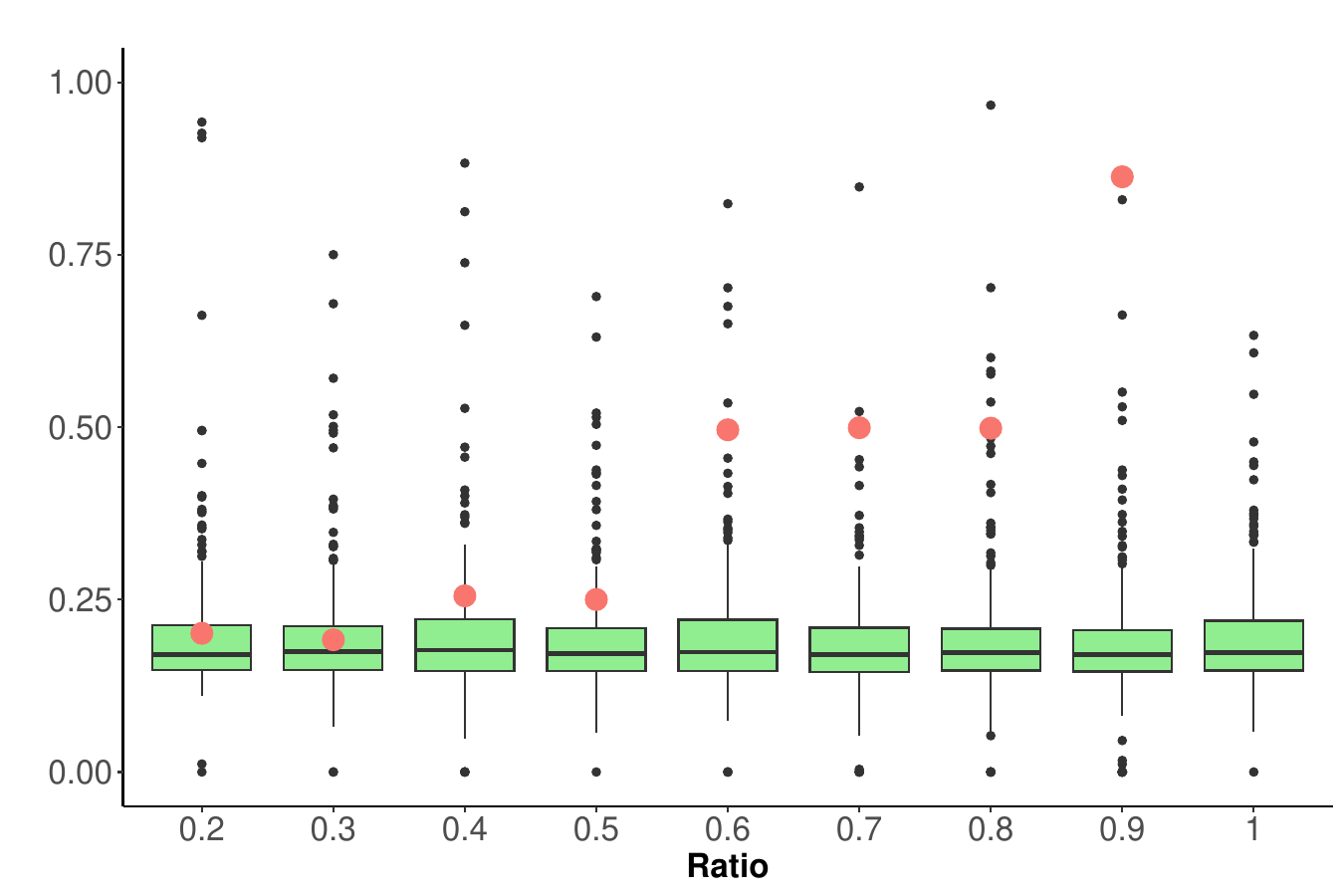}
    \caption*{c. $\hat{\beta_{2}}$} 
  \end{minipage}
     \begin{minipage}[b]{0.5\linewidth}
    \includegraphics[width=\linewidth]{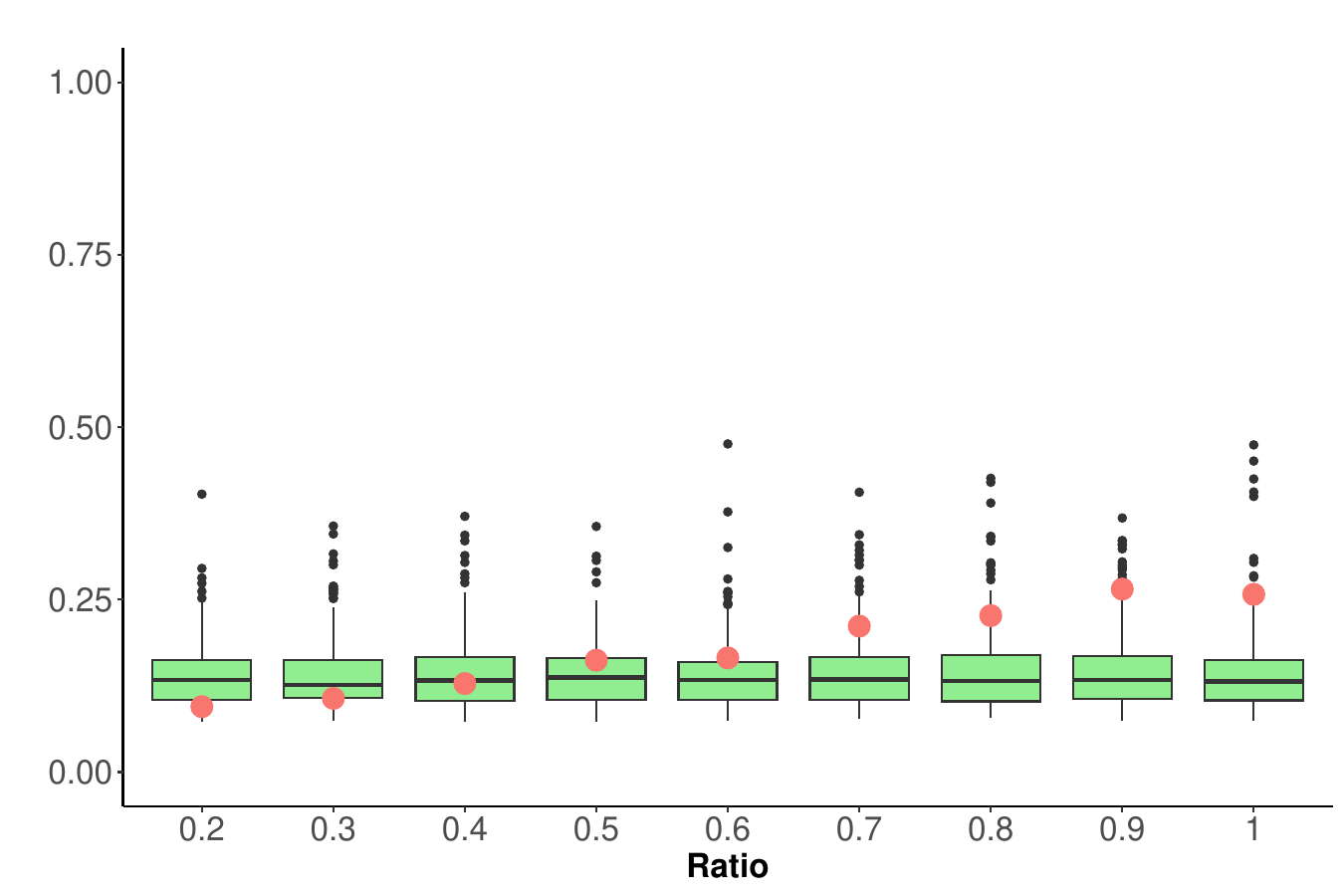}
    \caption*{d. $\hat{\beta_{2}}$} 
  \end{minipage}
     \begin{minipage}[b]{0.5\linewidth}
    \includegraphics[width=\linewidth]{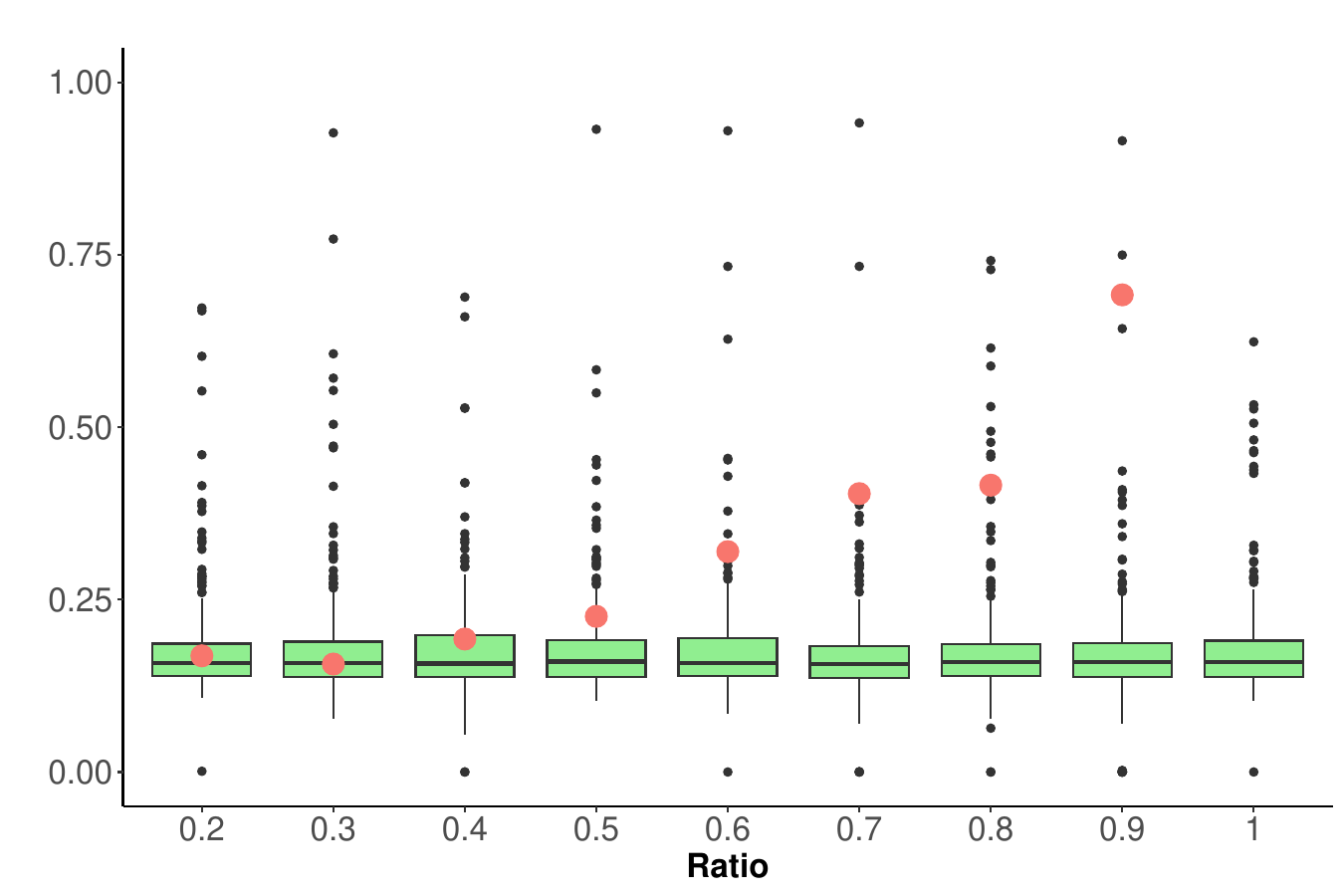}
    \caption*{e. $\hat{\gamma_{2}}$} 
  \end{minipage}
     \begin{minipage}[b]{0.5\linewidth}
    \includegraphics[width=\linewidth]{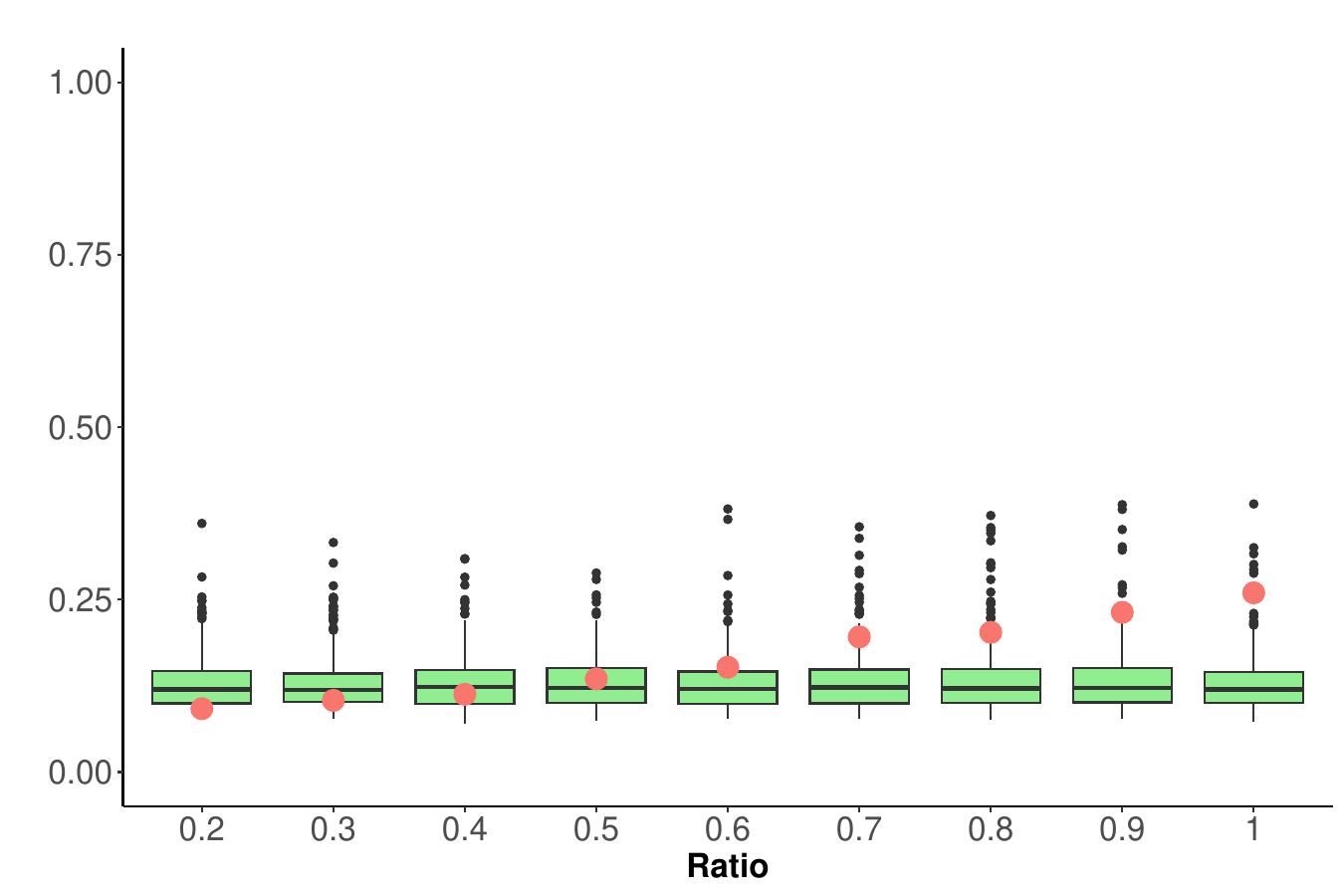}
    \caption*{f. $\hat{\gamma_{2}}$} 
  \end{minipage}
  \begin{minipage}{13.5 cm}{\footnotesize{Notes: We compare the standard error using the solution in Theorem \ref{asymptotic} over 300 simulations to the empirical standard deviation of the estimates highlighted in red. The number of nodes is kept at 200 (500) for all the simulation scenarios in the left (right) panel.}}
\end{minipage} 
\end{figure}

\begin{table}[h] \centering \renewcommand*{\arraystretch}{0.8}\caption{Summary Statistics}
\label{summarytableconflict}
\begin{tabular}{p{6cm}p{1cm}p{1cm}p{1cm}p{1cm}p{1cm}}
\hline
\hline
Variable & \multicolumn{1}{c}{N} & \multicolumn{1}{c}{Mean} & \multicolumn{1}{c}{St. Dev.} & \multicolumn{1}{c}{Min} & \multicolumn{1}{c}{Max} \\ 
\hline \\[-1.8ex] 
GPA & 7,872 & 3.188 & 0.605 & 0.268 & 4.000 \\ 
Age & 7,872 & 12.038 & 1.015 & 5 & 19 \\ 
5th grade & 7,872 & 0.049 & 0.216 & 0 & 1 \\ 
6th grade & 7,872 & 0.264 & 0.441 & 0 & 1 \\ 
7th grade & 7,872 & 0.339 & 0.473 & 0 & 1 \\ 
8th grade & 7,872 & 0.348 & 0.476 & 0 & 1 \\ 
Gender (Boy) & 7,872 & 0.497 & 0.500 & 0 & 1 \\ 
White ethnicity & 7,872 & 0.654 & 0.476 & 0 & 1 \\ 
Black ethnicity & 7,872 & 0.075 & 0.264 & 0 & 1 \\ 
Hispanic & 7,872 & 0.184 & 0.388 & 0 & 1 \\ 
Mother went to college & 7,872 & 0.749 & 0.434 & 0 & 1 \\ 
Yes: lives with just mom & 7,872 & 0.122 & 0.327 & 0 & 1 \\ 
Yes: lives with just dad & 7,872 & 0.016 & 0.125 & 0 & 1 \\ 
Yes: lives with other adults & 7,872 & 0.024 & 0.151 & 0 & 1 \\ 
Yes: lives with both parents & 7,872 & 0.764 & 0.425 & 0 & 1 \\ 
Yes: lives with parents but separated & 7,872 & 0.088 & 0.283 & 0 & 1 \\ 
Do participate in sports at school & 7,872 & 0.370 & 0.483 & 0 & 1 \\ 
Do participate in sports outside of school & 7,872 & 0.723 & 0.447 & 0 & 1 \\ 
Do participate in theater/drama & 7,872 & 0.119 & 0.323 & 0 & 1 \\ 
Do participate in music & 7,872 & 0.352 & 0.478 & 0 & 1 \\ 
Do participate in other arts & 7,872 & 0.177 & 0.381 & 0 & 1 \\ 
Do participate in other school clubs & 7,872 & 0.234 & 0.424 & 0 & 1 \\ 
Date people at this school & 7,872 & 0.216 & 0.412 & 0 & 1 \\ 
Do lots of homework & 7,872 & 0.451 & 0.498 & 0 & 1 \\ 
Do read books for fun & 7,872 & 0.336 & 0.472 & 0 & 1 \\ 
\hline
\hline
\multicolumn{6}{c}{ \begin{minipage}{13 cm}{\footnotesize{\textit{{Notes}}: The data-set corresponds to the education data on conflict collected by \cite{paluck2016changing}. We focus our analysis on the control group. The final cleaned sample consists of no missing data for any of the above covariates.}}
\end{minipage}} \\
\end{tabular}
\end{table}

\begin{table}[h] \centering \renewcommand*{\arraystretch}{1.2}\caption{Summary Statistics}
\begin{adjustbox}{width=0.95\columnwidth,center}
\label{summarytable}
\begin{tabular}{p{4cm}p{1cm}p{1cm}p{1cm}p{1cm}p{1cm}p{1cm}}
\hline
\hline
Variable & N & Mean & Median & Sd & Min & Max \\ 
\hline
Median Age 2010 & 2807 & 40 & 40 & 4.7 & 22 & 63 \\ 
Census Population 2010 & 2807 & 106653 & 31255 & 321544 & 2882 & 9818605 \\ 
3-Yr Diabetes 2015-17 & 1415 & 51 & 23 & 105 & 10 & 2515 \\ 
Diabetes Percentage & 2807 & 11 & 10 & 3.7 & 1.5 & 33 \\ 
Heart Disease Mortality & 2807 & 188 & 182 & 45 & 56 & 603 \\ 
Stroke Mortality & 2807 & 41 & 40 & 8.3 & 12 & 100 \\ 
Smoker's Percentage & 2807 & 18 & 17 & 3.4 & 5.9 & 38 \\ 
Resp Mortality Rate 2014 & 2807 & 65 & 63 & 17 & 18 & 161 \\ 
Hospitals & 2807 & 1.6 & 1 & 2.7 & 0 & 76 \\ 
ICU beds & 2807 & 26 & 4 & 87 & 0 & 2126 \\ 
Mask Never & 2807 & 0.076 & 0.066 & 0.056 & 0 & 0.43 \\ 
Mask Rarely & 2807 & 0.08 & 0.071 & 0.054 & 0 & 0.38 \\ 
Mask Sometimes & 2807 & 0.12 & 0.11 & 0.058 & 0.003 & 0.42 \\ 
Mask Frequently & 2807 & 0.2 & 0.2 & 0.062 & 0.047 & 0.55 \\ 
Mask Always & 2807 & 0.52 & 0.51 & 0.15 & 0.12 & 0.89 \\ 
Population Density per SqMile 2010 & 2807 & 228 & 53 & 784 & 0.3 & 17179 \\ 
SVI Percentile & 2807 & 0.52 & 0.52 & 0.28 & 0 & 1 \\ 
3-Yr Mortality Age 65-74Years 2015-17 & 2807 & 179 & 68 & 424 & 10 & 10827 \\ 
3-Yr Mortality Age 75-84 Years 2015-17 & 2807 & 224 & 84 & 537 & 10 & 14063 \\ 
3-Yr MortalityAge 85 Years 2015-17 & 2807 & 300 & 96 & 792 & 10 & 21296 \\ 
Deaths & 2807 & 113 & 32 & 379 & 0 & 10345 \\ 
Cases & 2807 & 6873 & 1951 & 23463 & 48 & 770915\\ 
\hline
\hline
\multicolumn{7}{c}{ \begin{minipage}{13 cm}{\footnotesize{\textit{{Notes}}: The data-set has been obtained from the data repository constructed for the paper \cite{altieri2020curating}. The description of the data variables can be obtained from the GitHub repo provided in this \href{https://github.com/Yu-Group/covid19-severity-prediction/blob/master/data/list_of_columns.md}{link}. This table shows the final set of variables used for the empirical illustration. The deaths and cases correspond to the cumulative COVID deaths and COVID cases as of December 31, 2020. }}
\end{minipage}} \\
\end{tabular}
\end{adjustbox}
\end{table}

\end{document}